\newif\ifdraft
\renewcommand{\mathbf}[1]{\bm{#1}}
\newcommand{\datetime}{\the\year-\ifthenelse{\the\month < 10}{0}{}\the\month-\ifthenelse{\the\day < 10}{0}{}\the\day{} \thistime}
\newcommand{\arxiv}[1]{\href{https://arxiv.org/abs/#1}{arXiv:#1}}
\newcommand{\ignore}[1]{}
\newtheorem{theorem}{Theorem}[section]
\newtheorem{corollary}[theorem]{Corollary}
\newtheorem{lemma}[theorem]{Lemma}
\theoremstyle{definition}
\newtheorem{definition}[theorem]{Definition}
\newtheorem*{remark}{Remark}
\newtheorem*{remarks}{Remarks}
\newcommand{\abs}[1]{\left\vert#1\right\vert}
\newcommand{\set}[1]{\left\{#1\right\}}
\newcommand{\eps}{\varepsilon}
\newcommand{\defeq}{:=}
\newcommand{\poly}[1]{\ensuremath{\mathop{\mathrm{poly}}\inp{#1}}}
\newcommand{\polylog}[1]{\ensuremath{\mathop{\mathrm{polylog}}\inp{#1}}}
\newcommand{\CommentS}[1]{}
\newcommand{\T}{\mathcal{T}}
\newcommand{\G}{\mathcal{G}}
\newcommand{\Gk}{\G^{\Delta,k}_t}
\newcommand{\Gi}{\G^{\Delta,k}_i}
\newcommand{\Ck}{\mathcal{C}_t^{\Delta,k}}
\newcommand{\Ci}{\mathcal{C}_i^{\Delta,k}}
\newcommand{\C}{\mathbb{C}}
\newcommand{\R}{\mathbb{R}}
\newcommand{\N}{\mathbb{N}}
\newcommand{\Ot}{\widetilde{O}}
\DeclarePairedDelimiter\floor{\lfloor}{\rfloor}
\newcommand{\st}[1][]{\ensuremath{\;\mathbf{#1\vert}\;}}
\renewcommand{\vec}[1]{\ensuremath{\mathbf{#1}}}
\newcommand{\higraph}[0]{insect}
\newcommand{\higraphs}[0]{insects}
\newcommand{\Higraphs}[0]{Insects}
\renewcommand{\a}[0]{an}
\newcommand{\A}[0]{An}
\newcommand{\sub}[0]{sub-}
\newcommand{\ind}[2]{\ensuremath{#1^{+}\left[{#2}\right]}}
\newcommand{\subinsect}[2]{\ensuremath{#1 \hookrightarrow #2}}
\newcommand{\isubinsect}[2]{\ensuremath{\mathbf{1}\inb{\subinsect{#1}{#2}}}}
\begin{document}
\title{The Ising Partition Function: \\Zeros and Deterministic Approximation}
\author{Jingcheng Liu \and Alistair Sinclair \and Piyush Srivastava}

\date{}
\maketitle
\begin{abstract}

  We study the problem of approximating the partition function of the
  ferromagnetic Ising model with both pairwise as well as higher order
  interactions (equivalently, in graphs as well as hypergraphs).  Our
  approach is based on the classical Lee-Yang theory of phase
  transitions, along with a new Lee-Yang theorem for the Ising model
  with higher order interactions, and on an extension of ideas
  developed recently by Barvinok, and Patel and Regts that can be seen
  as an algorithmic realization of the Lee-Yang theory.

  Our first result is a \emph{deterministic} polynomial time approximation scheme (an
  FPTAS) for the partition function in bounded degree graphs that is
  valid over the entire range of parameters $\beta$ (the interaction)
  and $\lambda$ (the external field), except for the case
  $\abs{\lambda}=1$ (the ``zero-field'' case).  A polynomial time \emph{randomized}
  approximation scheme (FPRAS) for all graphs and all $\beta,\lambda$, based on
  Markov chain Monte Carlo simulation, has long
  been known.  Unlike most other deterministic approximation
  algorithms for problems in statistical physics and counting, our
  algorithm does not rely on the ``decay of correlations'' property,
  but, as pointed out above, on Lee-Yang theory.  This approach
  extends to the more general setting of the Ising model on
  hypergraphs of bounded degree and edge size, where no previous
  algorithms (even randomized) were known for a wide range of
  parameters.  In order to achieve this latter extension, we establish
  a tight version of the Lee-Yang theorem for the Ising model on
  hypergraphs, improving a classical result of Suzuki and Fisher.

\par\bigskip\bigskip\par\noindent
\end{abstract}

\thispagestyle{empty}
\let\bakthefootnote\thefootnote
\let\thefootnote\relax
\footnotetext{Jingcheng Liu, Computer Science Division, UC Berkeley. Email: \texttt{liuexp@berkeley.edu}.}
\footnotetext{Alistair Sinclair, Computer Science Division, UC Berkeley. Email: \texttt{sinclair@cs.berkeley.edu}.}
\footnotetext{Piyush Srivastava, Tata Institute of Fundamental Research, Mumbai. Email:
  \texttt{piyush.srivastava@tifr.res.in}.}
  \footnotetext{An extended abstract of this paper appeared in the
    proceedings of the \emph{58th Annual IEEE Symposium on Foundations
      of Computer Science (FOCS)}, 2017. pp.986--997.}
\let\thefootnote\bakthefootnote
\newpage
\setcounter{page}{1}

\section{Introduction}
The Ising model, first studied a century ago as a model for magnetic materials
by Lenz and Ising~\cite{isi25}, has since become an
important tool for the modeling of interacting systems.  In the Ising
model, such a system is represented as a graph $G = (V, E)$, so that
the individual entities comprising the system correspond to the vertices $V$ and
their pairwise interactions to the edges~$E$.
A \emph{configuration} of the system is an assignment
$\bm \sigma: V \rightarrow \inbr{+,-}$ of one of two possible values (often called
``spins'') to each vertex.
The model then induces a
probability distribution (known as a \emph{Gibbs distribution}) over
these global configurations in terms of local parameters that model
the interactions of the vertices.

In our setting, it will be convenient to parameterize these
interactions in terms of a \emph{vertex activity} or
\emph{fugacity}~$\lambda$, that models an ``external field'' and
determines the propensity of a vertex to be in the $+$ configuration,
and an \emph{edge activity} $\beta\ge 0$ that models the tendency of
vertices to agree with their
neighbors.  The model assigns to each configuration~$\bm\sigma$ a weight
\[
w(\bm \sigma) \defeq \beta^{\abs{\inbr{\inbr{u, v} \in E \st \bm\sigma(u) \neq \bm\sigma(v)}}} \lambda^{\abs{\inbr{v \st \bm\sigma(v) = +}}}
= \beta^{|{E\inp{S, \overline{S}}}|}  \lambda^{\abs{S}},
\]
where $S=S(\bm\sigma)$ is the set of vertices assigned spin~+
in~$\bm\sigma$ and $E\inp{S, \overline{S}}$ is the set of edges in the cut
$\inp{S,\overline{S}}$ (i.e., the number of pairs of adjacent vertices assigned
opposite spins).
The probability of configuration $\bm \sigma$ under the Gibbs distribution is then
$\mu(\bm\sigma):=w(\bm \sigma)/{Z_G^\beta( \lambda)}$,
where the normalizing factor $Z_G^\beta( \lambda)$ is the \emph{partition function} defined as
\begin{equation}\label{eqn:isingdef}
  Z_G^\beta( \lambda) \defeq \sum_{\bm \sigma : V \rightarrow \inbr{+, -}}
  w(\bm \sigma) = \sum_{S \subseteq V} \beta^{|{E\inp{S, \overline{S}}}|}  \lambda^{\abs{S}}.
\end{equation}
Notice that the partition function may be interpreted combinatorially as
a cut generating polynomial in the graph~$G$.

In this paper we focus on the original \emph{ferromagnetic} case in
which $\beta < 1$, so that configurations in which a larger number of
neighboring spins agree (small cuts) have higher probability.  The
\emph{anti-ferromagnetic} regime $\beta > 1$ is qualitatively very
different, and prefers configurations with disagreements
between neighbors.  We note also that all our results in this paper hold in
the more general setting where there is a different
interaction~$\beta_e$ on each edge, provided that all the $\beta_e$
satisfy whatever constraints we are putting on~$\beta$.  (So, e.g., in
the ferromagnetic case $\beta_e<1$ for all~$e$.)  In addition,
our results allow $\beta$ to be negative and
$\lambda$ to be complex; this will be discussed in more detail below.

As in almost all statistical physics and graphical models, the
partition function captures the computational complexity of the Ising
model.  Partition functions are
\#P-hard\footnote{If a combinatorial counting problem, such as
  computing a partition function in a statistical physics model, is
  \#P-hard, then it can be solved in polynomial time only if
  \emph{all} counting problems belonging to a very rich class can be
  solved in polynomial time.  Hence \#P-hardness is widely regarded as
  compelling evidence of the intractibility of efficient exact
  computation.  For a more detailed account of this phenomenon in the
  context of partition functions, see, e.g.,~\cite[Appendix
  A]{sinclair12:_lee_yang}.}  to compute exactly in virtually any
interesting case (e.g., this is true for the Ising model except in the
trivial cases $\lambda=0$ or $\beta\in\{0,1\}$), so attention is
focused on approximation.  An early result in the field due to Jerrum
and Sinclair~\cite{jersin93} establishes a {\it fully polynomial
  randomized approximation scheme\/} for the Ising partition function,
valid for all graphs~$G$ and all values of the parameters
$(\beta,\lambda)$ in the ferromagnetic regime.  Like many of the first
results on approximating partition functions, this algorithm is based
on random sampling and the Markov chain Monte Carlo method.

For several statistical physics models on bounded degree graphs
(including the anti-ferromagnetic Ising
model~\cite{sinclair_approximation_2012,li_correlation_2011} and the
``hard core'', or independent set model~\cite{Weitz}), fully-polynomial
{\it deterministic\/} approximation schemes have since been developed, based
on the decay of correlations property in those models: roughly
speaking, one can estimate the local contribution to the partition
function at a given vertex~$v$ by exhaustive enumeration in a
neighborhood around~$v$, using decay of correlations to truncate the
neighborhood at logarithmic diameter.  The range of applicability of
these algorithms is of course limited to the regime in which decay of
correlations holds, and indeed complementary results prove that the
partition function is NP-hard to approximate outside this
regime~\cite{sly12, Vigoda-hard-core-11}.  Perhaps surprisingly,
however, no deterministic approximation algorithm is known for the
classical ferromagnetic Ising partition function that works over
anything close to the full range of the randomized
algorithm of~\cite{jersin93}: to the best of our knowledge, the best
deterministic algorithm, due to Zhang, Liang and Bai~\cite{zhaliabai09},
is based on correlation decay and is applicable to graphs of maximum
degree $\Delta$ only when $\beta > 1 - 2/\Delta$.

The restricted applicability of correlation decay based algorithms
 for the ferromagnetic Ising model arises from two related
 reasons: the first is that this model does not exhibit correlation decay
 for $\beta$ sufficiently close to $0$ (for any
given value of the external field), so any straightforward
 approach based only on this property cannot be expected to
 work for all~$\beta$. Secondly,
there is a regime of parameters for which, even though decay of
correlation holds, there is evidence to believe that it cannot be
exploited to give an algorithm using the usual techniques: see
\cite[Appendix 2]{sinclair_approximation_2012} for a more detailed
discussion of this point.

The first goal of this paper is to supply such a deterministic algorithm which covers almost
the entire range of parameters of the model except for the ``zero-field'' case $|\lambda|=1$:
\begin{theorem}\label{thm:main1}
  Fix any $\Delta>0$.  There is a fully polynomial
  time approximation scheme (FPTAS)\footnote{A FPTAS takes as input an $n$-vertex
  (hyper)graph~$G$, model parameters $\beta,\lambda$, 
  and an accuracy parameter~$\varepsilon\in(0,1)$ and outputs
  a value that approximates $Z_G^\beta(\lambda)$ within a factor
  $1\pm\varepsilon$ (see also eq.~\eqref{eq:fptasdef}).
  The running time of the algorithm is polynomial in~$n$ and~$1/\varepsilon$.}
  for the Ising partition function
  $Z_G^\beta(\lambda)$ in all graphs~$G$ of maximum degree~$\Delta$ for all
  edge activities $-1\le\beta\le 1$ and all (possibly complex) vertex
  activities~$\lambda$ with $|\lambda|\ne 1$.
\end{theorem}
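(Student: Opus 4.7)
The plan is to use Barvinok's polynomial interpolation method, refined in the way of Patel and Regts, with an appropriate Lee--Yang theorem supplying the zero-free region for the partition function viewed as a polynomial in~$\lambda$. The first ingredient is the following Lee--Yang statement: for the Ising model on a graph with edge activity $-1\le\beta\le 1$, every zero of $\lambda\mapsto Z_G^{\beta}(\lambda)$ lies on the unit circle $|\lambda|=1$. The classical case $0\le\beta\le 1$ is immediate from Lee--Yang; I would either cite or prove by an Asano-contraction style argument the extension to $\beta<0$. Combined with the elementary symmetry $Z_G^{\beta}(\lambda)=\lambda^n Z_G^{\beta}(1/\lambda)$, obtained by simultaneously flipping all spins in the sum defining $Z_G^{\beta}$, this reduces the problem to approximating $Z_G^{\beta}(\lambda_0)$ for an arbitrary (possibly complex) target $\lambda_0$ with $|\lambda_0|<1$.

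Fix such a $\lambda_0$ and set $f(z):=\log Z_G^{\beta}(z)$, which is single-valued and holomorphic on the open unit disk because $Z_G^{\beta}(0)=1$ and $Z_G^{\beta}$ has no zeros there. Writing $Z_G^{\beta}(z)=\prod_{i=1}^{n}(1-z/\zeta_i)$ with $|\zeta_i|=1$, each Taylor coefficient of $f$ at~$0$ has magnitude at most $n/j$, so truncating the series at order~$k$ and evaluating at $\lambda_0$ produces an additive error decaying like $n|\lambda_0|^{k}$. This already yields an FPTAS whenever $|\lambda_0|$ is bounded away from~$1$ by a constant. To handle $|\lambda_0|$ arbitrarily close to~$1$ in a uniform way, I would employ Barvinok's conformal-map trick: precompose $f$ with a polynomial map $\phi_d\colon\{|w|<1\}\to\Omega$ of degree $d=d(|\lambda_0|)$, whose image $\Omega$ is a cardioid-like region contained in the open unit disk and covering the segment $[0,\lambda_0]$. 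The Taylor series of $g(w):=f(\phi_d(w))$ then converges in $|w|<1$ at a geometric rate independent of~$\lambda_0$, so $k=O(\log(n/\varepsilon))$ Taylor terms suffice to approximate $f(\lambda_0)$ within additive error $O(\varepsilon/n)$, which gives multiplicative error $O(\varepsilon)$ on $Z_G^{\beta}(\lambda_0)$ after exponentiating.

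The main obstacle, and the part that distinguishes an actual FPTAS from a scheme that is merely polynomial in $1/\varepsilon$, is the efficient computation of the first $k$ Taylor coefficients of $f$ (equivalently of $g$). The $j$-th coefficient of $Z_G^{\beta}$ is $\sum_{S\colon|S|=j}\beta^{|E(S,\overline S)|}$, which has exponentially many summands and so cannot be evaluated directly. Following Patel and Regts, I would instead expand $\log Z_G^{\beta}(\lambda)$ as a formal power series whose coefficients, via Newton's identities relating power sums to elementary symmetric functions (or, equivalently, via a cluster expansion), can be written as weighted sums over \emph{connected} subgraphs of~$G$ with at most $k$ edges. In a graph of maximum degree~$\Delta$, a standard counting argument bounds the number of connected subgraphs of size $k$ rooted at a given vertex by $(e\Delta)^{k}$, so the total number of terms is $n\cdot(e\Delta)^{k}=\mathrm{poly}(n/\varepsilon)$ once $k=O(\log(n/\varepsilon))$, and each term is straightforward to evaluate. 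Assembling these pieces produces the claimed FPTAS. The steps requiring the most care are (i) the cluster-type identity reducing the $k$-th coefficient of $\log Z_G^{\beta}$ to a sum over small connected subgraphs and the computability of the combinatorial weights that appear, and (ii) the quantitative construction of the conformal map $\phi_d$ and explicit control on the growth of its degree $d$ as $|\lambda_0|$ approaches~$1$.
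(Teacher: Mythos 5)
Your high-level plan — Lee--Yang $\Rightarrow$ zero-free open unit disk, the symmetry $Z(\lambda)=\lambda^n Z(1/\lambda)$, Taylor truncation of $\log Z$ at $O(\log(n/\varepsilon))$ terms, and Patel--Regts-style computation of low-order coefficients via Newton's identities — is exactly the paper's route. But there are two substantive problems.

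First, the conformal-map step is both unnecessary and unworkable. It is unnecessary because the theorem only asserts an FPTAS for each fixed $\lambda$ with $|\lambda|\neq 1$, so the degree of the polynomial bounding the running time is permitted to depend on $|1-|\lambda||$; the paper states this dependence explicitly in Remark~(i) following the theorem. It is unworkable because the zero-free region here is precisely the open unit disk, and any holomorphic $\phi_d$ with $\phi_d(0)=0$ mapping the disk into itself satisfies $|\phi_d(w)|\le|w|$ by the Schwarz lemma; hence if $\phi_d(w_0)=\lambda_0$ then $|w_0|\ge|\lambda_0|$, so you cannot move the evaluation point farther from the boundary. The cardioid trick works in Barvinok's permanent analysis because the zero-free region there is a strip or sector — genuinely different from a disk centered at the origin. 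Here the analogue buys nothing.

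Second, and more seriously for the correctness of the proof, the claim that the Taylor coefficients of $\log Z_G^\beta(\lambda)$ ``can be written as weighted sums over connected subgraphs of $G$'' is precisely the step that fails for the Ising model in the $\lambda$ variable. The summand for a subset $S$ is $\beta^{|E(S,\overline S)|}\lambda^{|S|}$, and the exponent $|E(S,\overline S)|$ depends on the boundary edges of $S$, not merely on the induced subgraph $G[S]$; so $Z$ is not an induced-subgraph sum in the Patel--Regts sense, and the additivity argument giving support on connected pieces does not directly apply. The paper explicitly flags this obstacle and circumvents it by introducing insects/higraphs (induced substructures that retain the half-edges to the boundary), for which the product and additivity lemmas, Newton's identities, and the $(e\Delta)^k$ enumeration bound are re-proved. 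One could alternatively invoke Patel--Regts's notion of \emph{fragments}, as the paper notes, but some such device carrying boundary information is essential and is missing from your argument as written. Dropping the conformal-map paragraph and replacing ``connected subgraphs'' with a boundary-aware structure (insects or fragments) would bring the proposal in line with a correct proof.
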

\begin{remarks}
  (i) For fixed $\Delta$ and $\lambda$ such that $\abs{\lambda} < 1$,
  the running time of the FPTAS for producing a
  $(1\pm\varepsilon)$-factor approximation on $n$-vertex
  graphs of degree at most $\Delta$ is
  $(n/\varepsilon)^ {O\inp{\frac{\log \Delta}{\abs{1 -
          \abs{\lambda}}}}}$
  (the running times of the algorithms in
  Theorems\nobreakspace \ref {thm:main2} and\nobreakspace  \ref {thm:main3} have a similar dependence on $\lambda$
  and $\Delta$).  Such dependence on the ``distance to the critical
  boundary'' (in this case, the circle $\abs{\lambda} = 1$) of the
  degree of the polynomial bounding the running time of the FPTAS
  appears to be a common feature of algorithms based on correlation
  decay~\cite{Weitz,SSSY15,li_correlation_2011} as well as our present
  analytic continuation approach.  In contrast, approximate counting
  algorithms based on Markov chain Monte Carlo (e.g.,
  \cite{jerrum_approximating_1989,efthymiou2016convergence,luby_approximately_1997})
  often have the desirable feature that they are in a sense ``fixed
  parameter tractable'': even as the fixed parameters of the problems
  are moved close to the boundary up to which the algorithm is
  applicable, the degree of the polynomial bounding their running time
  does not increase; it is only the constant factors which increase to
  infinity. (ii) Note that although $\lambda, \beta$ are positive in
  the ``physically relevant'' range in most applications of the Ising
  model, the above theorem also applies more generally to
  $\beta \in [-1, 1]$ and complex valued $\lambda$ not on the unit
  circle.  Moreover, we can allow edge-dependent activities~$\beta_e$
  provided all of them lie in $[-1,1]$.
\end{remarks}
The above theorem is actually a special case of a more general theorem
for the hypergraph version of the Ising model (\MakeUppercase Theorem\nobreakspace \ref {thm:main2} below).
We now illustrate our approach to proving these theorems, which will also allow us to introduce and motivate our
further results in the paper.

In contrast to previous algorithms based on correlation decay, our
algorithm is based on isolating the complex zeros of the partition
function~$Z:=Z_G^\beta(\lambda)$ (viewed as a polynomial in~$\lambda$
for a fixed value of~$\beta$).
This approach was introduced recently by
Barvinok~\cite{Barvinok15,barvinok_computing_2015} (see also the
recent monograph by
Barvinok~\cite{barvinok16:_combin_compl_partit_funct} for a discussion
of the approach in a more
general context).  We start with the observation
that the $1\pm\varepsilon$ multiplicative approximation of $Z$ required
for a FPTAS corresponds to a $O(\varepsilon)$ additive approximation of $\log Z$.
Barvinok's approach considers a Taylor expansion of $\log Z$ around a
point $\lambda_0$ such that $Z(\lambda_0)$ is easy to evaluate.  (For
the Ising model, $\lambda_0 = 0$ is such a choice.)  It then seeks
to evaluate the function at other points by carrying out an explicit
analytic continuation.  More concretely, suppose it can
be shown that there are no zeros of $Z$ in the open disk
$D(\lambda_0, r)$ of radius~$r$ around $\lambda_0$.  From standard
results in complex analysis, it then follows that the Taylor expansion
around $\lambda_0$ of $\log Z$ is absolutely convergent in
$D(\lambda_0, r)$ and further, that the first $m$ terms of this
expansion evaluated at a point $\lambda \in D(\lambda_0, r)$ provide a
$O\inp{\frac{n \alpha^{m}}{1-\alpha}}$ additive approximation of $\log Z(\lambda)$,
where $\alpha = \abs{\lambda-\lambda_0}/r < 1$, and $n$  is the degree
of $Z$ as a polynomial in $\lambda$.
We note that Barvinok's
approach may be seen as an algorithmic counterpart of the
traditional view of phase transitions in statistical physics
in terms of analyticity of $\log Z$~\cite{leeyan52}.

To apply this approach in the case of the ferromagnetic Ising model,
we may appeal to the famous Lee-Yang theorem of the
1950s~\cite{leeyan52b}, which establishes that the zeros of
$Z(\lambda)$ all lie on the unit circle in the complex plane.  This
allows us to take $\lambda_0=0$ and $r=1$ in the previous paragraph,
and thus approximate $Z(\lambda)$ at any point $\lambda$
satisfying $\abs{\lambda} < 1$.
This extends to all $\lambda$ with $|\lambda|\ne 1$ via the fact that
$Z(\lambda) = \lambda^n Z(\frac{1}{\lambda})$.
\begin{remark}
We note that the case $|\lambda|=1$ is likely to require additional ideas
because it is known that there exist bounded degree graphs (namely $\Delta$-ary trees)
for which the partition function $Z_G^\beta(\lambda)$ has complex zeros within
distance $O(1/n)$ of $\lambda = 1$, where $n$ is the size of the
graph.  In fact, the zeros are even known to become dense on the
whole unit circle as $n$ increases to
infinity~\cite{barata_distribution_2001,barata_griffiths_1997}.  This
precludes the possibility of efficiently carrying out the analytic continuation
procedure for $\log Z$ to arbitrary points on the unit circle, and
to the point $\lambda=1$ in particular.
\end{remark}

Converting the above approach into an algorithm requires computing the first $k$
coefficients in the Taylor expansion of $\log Z$ around $\lambda_0$.
Barvinok showed that this computation can in turn be reduced to
computing the $O(k)$ lowest-degree  coefficients of the
partition function itself.  In the case of the Ising model, computing
$k$ such coefficients is roughly analogous to computing
$k$-wise correlations between the vertex spins, and doing so
naively on a graph of $n$ vertices requires time $\Omega(n^k)$.  Until
recently, no better ways to perform this calculation were known and
hence approximation algorithms using this approach typically required
quasi-polynomial time\footnote{A {\it quasi-polynomial time\/} algorithm is
one which runs in time $\exp\{O((\log n)^c)\}$ for some constant $c>1$.}
in order to achieve a
$(1 \pm 1/\poly{n})$-factor multiplicative approximation of $Z$
(equivalently, a $1/\poly{n}$ additive approximation of
$\log Z$), since this requires the Taylor series for $\log Z$ to
be evaluated to $k=\Omega(\log n)$
terms~\cite{BarvinokSoberon16a,BarvinokSoberon16b,barvinok_computing_2015}.

Recently, Patel and Regts~\cite{patel_deterministic_2016} proposed a way to get
around this barrier for several classes of partition functions.  Their
method is based on writing the coefficients in the Taylor series of
$\log Z$ as linear combinations of counts of connected induced
subgraphs of size up to $\Theta(\log n)$.  This is already promising,
since the number of connected induced subgraphs of size
$O(\log n)$ of a graph $G$ of maximum degree $\Delta$ is polynomial in the
size of $G$ when $\Delta$ is a fixed constant.  Further, the count of
induced copies of such a subgraph can also be computed in time
polynomial in the size of~$G$~\cite{borgs_left_2013}.  Patel and
Regts built on these tools to show a way to compute the
$\Theta(\log n)$ Taylor coefficients of $\log Z$ needed in Barvinok's
approach for several families of partition functions, for some of
which correlation decay based algorithms are still not known.

Unfortunately, for the case of the Ising model, it is not clear how to
write the Taylor coefficients in terms of induced subgraph
counts. Indeed, in their paper~\cite[Theorem 1.4]{patel_deterministic_2016},
Patel and Regts apply their method to the Ising model viewed as
a polynomial in~$\beta$ rather than~$\lambda$, which allows them to use
subgraph counts.  However, this prevents them from using the Lee-Yang
theorem, and they are consequently able to establish only a small
zero-free region.  As a result, they can handle only the zero-field
``high-temperature'' regime (where in fact the correlation decay property
also holds), specifically the regime $\abs{\beta - 1} \le 0.34 / \Delta$ and $\lambda=1$.

In this paper, we instead propose a generalization of the framework of
Patel and Regts to labelled hypergraphs via objects that we call
\emph{\higraphs{}}.  In the special case of graphs, \a{}
\higraph{} can be seen as a graph that includes edges to additional
boundary vertices:
we refer to Section~\ref{sec:higraphs-hypergraph} for precise definitions.
Using the appropriate notions for counting induced \sub\higraphs{}, we
are then able to write the coefficients arising in the Taylor
expansion of $\log Z$ for the Ising model in terms of induced
\sub\higraph{} counts, and derive from there algorithms for computing
$\Omega(\log n)$ such coefficients in polynomial time in graphs of
bounded degree.  This establishes Theorem~\ref{thm:main1}.  We note
that if one is only interested in deriving \MakeUppercase Theorem\nobreakspace \ref {thm:main1}, then this
can also be done using the notion of \emph{fragments}, developed by Patel
and Regts~\cite{patel_deterministic_2016} in the different context of
edge coloring models, which turns out to be a special case of our
notion of \higraphs{}.

Our framework of \higraphs{}, however, also allows us to extend the
above approach to edge-dependent activities and, more significantly,
to the much more general setting where $G$ is a hypergraph,
as studied, for example, in the classical work of Suzuki and
Fisher~\cite{suzuki1971zeros}, and also more recently in the
literature on approximate
counting~\cite{galanis_complexity_2016,lu16:_fptas_hardc,song16:_count}. 
In a hypergraph of edge size~$k\ge 3$, the pairwise interactions in 
the standard Ising model are replaced by higher-order interactions
of order~$k$.
We note that the Jerrum-Sinclair MCMC approach~\cite{jersin93} apparently
does not extend to hypergraphs, and there is currently no known
polynomial time approximation algorithm (even randomized) for a wide
range of~$\beta$ in this setting.  For a hypergraph $H=(V,E)$,
configurations are again assignments of spins to the vertices~$V$ and
the partition function $Z_H^{\beta}(\lambda)$ is defined exactly as
in~(\ref{eqn:isingdef}), where the cut $E(S,\overline S)$ consists of
those hyperedges with at least one vertex in each of~$S$
and~$\overline S$.  The computation of coefficients via \higraphs{}
carries through as before, but the missing ingredient is an extension
of the Lee-Yang theorem to hypergraphs.  Suzuki and
Fisher~\cite{suzuki1971zeros} prove a weak version of the Lee-Yang
theorem for hypergraphs (see Theorem~\ref{thm:suzuki-fisher} in
section~\ref{sec:leeyang}), which we are able to strengthen to obtain
the following optimal statement, which is of independent interest:

\begin{theorem}\label{thm:leeyangintro}
  Let $H=(V,E)$ be a hypergraph with maximum hyperedge size~$k\ge 3$.
  Then all the zeros of the Ising model partition function
  $Z_H^\beta(\lambda)$ lie on the unit circle if the edge activity
  $\beta$ lies in the range
  $-\frac{1}{2^{k-1} -1} \le \beta \le
  \frac{1}{2^{k-1}\cos^{k-1}\left(\frac{\pi}{k-1} \right) +1}$.
  Further, when $\beta \neq 1$ does not lie in this range, there
  exists a hypergraph $H$ with maximum hypergraph edge size at most
  $k$ such that the zeros of the Ising partition function
  $Z_H^\beta(\lambda)$ of $H$ do not lie on the unit circle.
\end{theorem}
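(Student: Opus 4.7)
The plan is to follow the classical two-step Lee--Yang blueprint: (i)~establish a multivariate Lee--Yang property for the partition function of a single hyperedge, treated as a polynomial in one activity per vertex, and (ii)~lift that property to arbitrary hypergraphs via iterated Asano contraction. The single-hyperedge polynomial simplifies to
\[
Z_e(\lambda_1,\ldots,\lambda_k)\;=\;(1-\beta)\bigl(1+\lambda_1\cdots\lambda_k\bigr)+\beta\prod_{i=1}^{k}(1+\lambda_i),
\]
which is multilinear, symmetric, and palindromic (invariant under $\lambda_i\mapsto 1/\lambda_i$ up to a monomial factor). Because the exterior of the closed unit disk is a circular region, the Grace--Walsh--Szeg\H{o} coincidence theorem reduces the strong Lee--Yang statement---that $Z_e\ne 0$ whenever all $|\lambda_i|>1$---to showing that its diagonal restriction
\[
f(\lambda)\;=\;(1-\beta)(1+\lambda^k)+\beta(1+\lambda)^k
\]
has no zero with $|\lambda|>1$, and the palindromic symmetry then forces all $k$ zeros of $f$ to lie on the unit circle.

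To analyse $f$ on the unit circle I would parametrise $\lambda=e^{i\theta}$ and, dividing $f(\lambda)=0$ by $(1+\lambda)^k$, reduce it to the real equation
\[
u(\theta)\;:=\;\frac{\cos(k\theta/2)}{2^{k-1}\cos^{k}(\theta/2)}\;=\;-\frac{\beta}{1-\beta}.
\]
A short calculation gives $u'(\theta)\propto -\sin((k-1)\theta/2)$, so the critical points of $u$ in $[0,2\pi)$ are $\theta_j=2j\pi/(k-1)$, and the two extreme critical values are $u(0)=2^{-(k-1)}$ and $u\bigl(2\pi/(k-1)\bigr)=-1/\bigl(2^{k-1}\cos^{k-1}(\pi/(k-1))\bigr)$. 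A continuous deformation argument in $\beta$---starting from $\beta=0$, where $f(\lambda)=1+\lambda^k$ has $k$ equally-spaced zeros on the unit circle---shows that a zero can escape the circle only by colliding with another zero at a critical value of $u$. Hence the $k$ zeros stay on the unit circle precisely when $-\beta/(1-\beta)$ lies in the interval $\bigl[u(2\pi/(k-1)),\,u(0)\bigr]$, and solving these two boundary conditions for $\beta$ yields exactly the endpoints stated in the theorem.

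To extend the conclusion to an arbitrary hypergraph $H=(V,E)$, I would introduce a fresh variable $\lambda_{v,e}$ for each vertex--edge incidence $(v,e)$, form the product $\prod_{e\in E}Z_e\bigl(\{\lambda_{v,e}\}_{v\in e}\bigr)$ (which is non-vanishing on the exterior polydisc as a product of strong-Lee--Yang polynomials), and then iteratively apply Asano contraction to identify the $\deg(v)$ variables $\lambda_{v,\cdot}$ into a single $\lambda_v$. Each contraction preserves non-vanishing outside the closed unit polydisc, and the final polynomial is (up to a monomial factor) $Z_H^\beta$ as a function of $\{\lambda_v\}_{v\in V}$; setting every $\lambda_v=\lambda$ and invoking both the reciprocal symmetry of $Z_H^\beta(\lambda)$ and the fact that $Z_H^\beta(0)=1$ completes the argument. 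For tightness, a single hyperedge on $k$ vertices with uniform activity $\lambda$ already realises $f(\lambda)$ as its partition function, so whenever $\beta\ne 1$ lies outside the stated interval the off-circle zero of $f$ produced in Step~2 furnishes the required counterexample.

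The most delicate part of the plan is the univariate analysis: one must rigorously preclude ``hidden'' reciprocal pairs of zeros that could wander off the unit circle strictly inside the putative $\beta$-range. I plan to address this by combining the continuous deformation in $\beta$ with a careful accounting of the local multiplicity at each critical collision, so as to verify that exactly one reciprocal pair of zeros can detach at each endpoint and none before. The Asano step is essentially routine given multilinearity, but some bookkeeping is needed because distinct hyperedges can share more than one vertex and because both the ``$|\lambda_i|>1$'' and ``$|\lambda_i|<1$'' regions must be tracked simultaneously through the contractions.
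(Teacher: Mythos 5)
Your plan has the same global structure as the paper's (a Lee--Yang property for a single hyperedge, lifted to general hypergraphs by Asano contraction, with tightness read off from the hyperedge polynomial), but the base case is handled by a genuinely different argument: the paper works directly with the multivariate polynomial $A_1$, showing via a cosine--Jensen optimization (their Lemma~\ref{lem:logcos}) that the real values of $\prod_{i\ge2}(1+1/z_i)$ over $|z_i|\ge1$ form exactly the interval $[-2^{k-1}\cos^{k-1}(\pi/(k-1)),\,2^{k-1}]$; you instead invoke Grace--Walsh--Szeg\H{o} to collapse the multivariate question to the diagonal $f(\lambda)=(1-\beta)(1+\lambda^k)+\beta(1+\lambda)^k$ and then analyze a univariate family by deformation in $\beta$. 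Your reduction is legitimate (the exterior of the closed disk is a circular region, and $Z_e$ has total degree $k$, so GWS applies), and it does locate the correct endpoints, since you correctly compute $u'(\theta)\propto -\sin((k-1)\theta/2)$ and identify $u(0)=2^{-(k-1)}$ and $u(2\pi/(k-1))=-1/(2^{k-1}\cos^{k-1}(\pi/(k-1)))$ as the two critical values of $u$ nearest $0$. The trade-off is that GWS only gives the \emph{open} Lee--Yang property ($Z_e\neq0$ when all $|\lambda_i|>1$), whereas the paper proves the strict version used in its Definition, and this matters downstream.

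Two gaps need attention. First, in the Asano-contraction step you need to know that the coefficient $\alpha_{12}$ of the pair being contracted does not vanish simultaneously with the constant coefficient $\alpha_{\emptyset}$; the paper obtains this for free from the strict Lee--Yang property via its Lemma~\ref{lem:asano} and Corollary~\ref{cor:asano-iter}, which guarantee that the ``leading'' coefficients $A_j$ never vanish on $\{|z_i|\ge1\}$. With only the open property from GWS, you would have to argue separately that this degenerate case cannot arise for a product of hyperedge partition functions; otherwise the contracted polynomial could be identically zero. Second, your tightness argument is not yet complete: the deformation argument shows that the zeros of $f$ can leave the circle only at the two boundary values of $\beta$, but it does not by itself show that, for \emph{every} $\beta$ strictly between $1/(2^{k-1}\cos^{k-1}(\pi/(k-1))+1)$ and $1$, some zero actually remains off the circle. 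In principle zeros could detach at the first critical value and re-attach at a later one; the paper rules this out with a direct zero-count (showing $P_k$ has at most $k-3$ roots on the circle in that regime, by a careful analysis of the function $h(\theta)=\cos k\theta/\cos^k\theta$ on each monotone piece). You would need an analogous counting or monotonicity argument for the number of solutions of $u(\theta)=t$ as $t$ decreases past successive critical values, together with separate short arguments for $\beta<-1/(2^{k-1}-1)$ (sign change of $P_k$ on $(0,1)$) and for $\beta>1$ (the $k=2$ edge).
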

\begin{remark}
  Once again, we can allow edge-dependent activities~$\beta_e$
  provided all of them lie in the range stipulated above.  This
  extension also applies to \MakeUppercase Theorem\nobreakspace \ref {thm:main2} below.
\end{remark}
Note that the classical Lee-Yang theorem (for the graph case $k=2$)
establishes this property for $0\le \beta \le 1$ (the ferromagnetic
regime).  Our proof of Theorem~\ref{thm:leeyangintro} follows along
the lines of Asano's inductive proof of the Lee-Yang
theorem~\cite{asano_lee-yang_1970}, but we need to carefully analyze
the base case (where $H$ consists of a single hyperedge) in order to
obtain the above bounds on~$\beta$.  The optimality of the range of~$\beta$ in
our result follows essentially from the fact that our analysis of the base case is tight.  
For a detailed comparison with the Suzuki-Fisher theorem, see the Remark
following \MakeUppercase Corollary\nobreakspace \ref {cor:hyper-leeyang}.

Combining Theorem~\ref{thm:leeyangintro} with our earlier algorithmic approach immediately
yields the following generalization of Theorem~\ref{thm:main1} to hypergraphs.
\begin{theorem}\label{thm:main2}
Fix any $\Delta>0$ and $k\ge 3$.
There is an FPTAS for the
Ising partition function $Z_H^\beta(\lambda)$ in all hypergraphs~$H$ of maximum
degree~$\Delta$ and maximum edge size~$k$, for all edge activities~$\beta$ in
the range of Theorem~\ref{thm:leeyangintro} and all vertex activities $|\lambda|\ne 1$.
\end{theorem}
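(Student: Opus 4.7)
The plan is to combine the hypergraph Lee--Yang theorem (\MakeUppercase Theorem\nobreakspace \ref {thm:leeyangintro}) with the Barvinok-style analytic continuation of $\log Z$, where the requisite low-order Taylor coefficients are computed via an extension of the Patel--Regts subgraph-count machinery to the \higraph{} framework described in the introduction. Concretely, fix $\beta$ in the range of \MakeUppercase Theorem\nobreakspace \ref {thm:leeyangintro}; then for every hypergraph $H$ of edge size at most $k$, the polynomial $Z_H^\beta(\lambda)$ has all its roots on the unit circle. In particular the open disk $\{\lambda:|\lambda|<1\}$ is zero-free, and since $Z_H^\beta(0)=1$ the branch of $\log Z_H^\beta(\lambda)$ through this point is analytic on the whole disk. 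I would take this branch as the target of approximation and let $T_m(\lambda)$ denote its degree-$m$ Taylor polynomial at $\lambda_0=0$.

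Given a target $\lambda$ with $|\lambda|<1$, set $\alpha=|\lambda|$ and truncate at $m=\Theta\!\left(\frac{\log(n/\varepsilon)}{1-\alpha}\right)$ terms. A standard complex-analytic estimate (the same one used by Barvinok, and recalled in the discussion preceding this theorem) bounds $|\log Z_H^\beta(\lambda)-T_m(\lambda)|$ by $O\!\left(\frac{n\alpha^{m+1}}{1-\alpha}\right)$, which is at most $\varepsilon$ for this choice of $m$. Exponentiating converts this additive error on $\log Z$ into the desired multiplicative $(1\pm\varepsilon)$-factor approximation of $Z_H^\beta(\lambda)$. To cover the case $|\lambda|>1$, I would invoke the spin-flip symmetry $Z_H^\beta(\lambda)=\lambda^{|V|}Z_H^\beta(1/\lambda)$ (valid for hypergraphs since complementing $S$ leaves each cut edge cut), reducing it to the $|1/\lambda|<1$ case already handled. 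The excluded case $|\lambda|=1$ is unavoidable in this approach, consistent with the remark following \MakeUppercase Theorem\nobreakspace \ref {thm:main1}.

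The core algorithmic step, and the one I expect to be the main obstacle, is to compute the coefficients of $T_m(\lambda)$ in time $\poly{n}$ when $m=\Theta(\log(n/\varepsilon))$. By the standard Newton-identity-style relation between the coefficients of $\log Z$ and those of $Z$, it suffices to compute the $m$ lowest-degree coefficients of $Z_H^\beta(\lambda)$ itself. I would do this by carrying over the \higraph{} formalism of the paper to hypergraphs: express each such coefficient as a linear combination of counts of induced \sub\higraphs{} of $H$ of size $O(\log n)$, analogous to the way Patel and Regts express coefficients of graph partition functions as linear combinations of connected induced subgraph counts. Because $H$ has bounded degree $\Delta$ and bounded edge size $k$, the number of \higraphs{} of size $O(\log n)$ that can be embedded in $H$ is $\poly{n}$, and each such count can be evaluated in $\poly{n}$ time by a Borgs--Chayes--Kahn--Lov\'asz-type enumeration adapted to \higraphs{}.

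Putting these pieces together yields the FPTAS. The running time comes out to $(n/\varepsilon)^{O(\log(\Delta k)/(1-|\lambda|))}$, mirroring the ``distance to the critical boundary'' dependence noted after \MakeUppercase Theorem\nobreakspace \ref {thm:main1}. The technical crux of the whole argument, then, is not the analytic continuation step (which is a direct application of \MakeUppercase Theorem\nobreakspace \ref {thm:leeyangintro} and Barvinok's framework) but rather verifying (i) that the coefficient-to-\higraph-count reduction survives the passage from graphs to hypergraphs, and (ii) that induced \sub\higraph{} counts of logarithmic size can be enumerated efficiently in bounded-degree, bounded-edge-size hypergraphs; both of these rely on the careful definitions of \higraphs{} and their induced containment relation that are set up in the body of the paper.
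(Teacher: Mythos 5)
Your proposal follows essentially the same route as the paper: it combines the hypergraph Lee--Yang theorem (Theorem~\ref{thm:leeyangintro}), Barvinok's Taylor-truncation approximation of $\log Z$ (the paper's Lemma~\ref{lem:taylor-approx}), and efficient computation of the low-degree coefficients of $Z$ via the \higraph{} framework (the paper's Theorem~\ref{thm:elem-sym0}), with the spin-flip symmetry $Z_H^\beta(\lambda)=\lambda^{|V|}Z_H^\beta(1/\lambda)$ handling $|\lambda|>1$. This is exactly how the paper proves Theorem~\ref{thm:main2}, so the proposal is correct and not a genuinely different argument.
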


Finally, we extend our results to general ferromagnetic two-spin systems on hypergraphs,
again as studied in~\cite{suzuki1971zeros}.
A \emph{two-spin system} on a hypergraph $H=(V,E)$ is specified by hyperedge
activities $\varphi_e: \set{+,-}^{\abs{e}} \to \C$ for $e\in E$, and a vertex activity
$\psi: \set{+,-} \to \C$.
(Note that we treat each hyperedge~$e$ as a set of vertices.)
Then the partition function is defined as:
\begin{equation*}
  Z_H^{\bm \varphi, \bm\psi} \defeq \sum_{\bm \sigma: V \to \set{+,-}} \prod_{ e \in E} \varphi_e\inp{\bm \sigma\big|_e} \prod_{v\in V} \psi(\bm\sigma(v)).
\end{equation*}
Without loss of generality, we will henceforth assume that
$\varphi_e(-,\cdots, -) = 1$, and that $\psi(-) = 1$, $\psi(+)=\lambda$.
We can then write the partition function as
\begin{equation}
  \label{eqn:hypergraph2spin}
  Z_H^{\bm \varphi}(\lambda) =\sum_{\bm \sigma: V \to \set{+,-}} \prod_{ e \in E} \varphi_e\inp{\bm \sigma\big|_e} \lambda^{\abs{\set{v:\bm\sigma(v) = +}}}.
\end{equation}

We call a hypergraph two-spin system \emph{symmetric} if $\varphi_e(\bm \sigma) = \overline{\varphi_e(-\bm \sigma)}$.  Suzuki and Fisher~\cite{suzuki1971zeros} proved a
Lee-Yang theorem for symmetric hypergraph two-spin systems (which is
weaker than our Theorem~\ref{thm:leeyangintro} above when specialized to the Ising model).
Combining this with our general algorithmic approach yields our final result:
\begin{theorem}\label{thm:main3}
Fix any $\Delta>0$ and $k\ge 2$ and a family of symmetric edge activities $\bm\varphi=\{\varphi_e\}$ satisfying
$\abs{\varphi_e(+,\cdots, +)} \ge \frac{1}{4}\sum_{\bm \sigma\in \set{+,-}^V} \abs{\varphi_e(\bm \sigma)}$.
Then there exists an FPTAS for the partition function $Z_H^{\bm \varphi}(\lambda)$ of the corresponding
symmetric hypergraph two-spin system in all hypergraphs~$H$
of maximum degree~$\Delta$ and maximum edge size~$k$ for all vertex activities
$\lambda\in \C$ such that $\abs{\lambda}\neq 1$.
\end{theorem}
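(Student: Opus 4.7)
The plan is to repeat the Lee--Yang plus Barvinok strategy that underlies Theorems~\ref{thm:main1} and~\ref{thm:main2}, with three adaptations: the Suzuki--Fisher Lee--Yang theorem for symmetric hypergraph two-spin systems takes the place of Theorem~\ref{thm:leeyangintro}; the insect-counting framework developed earlier in the paper is extended from Ising-type weights to fully general edge activities $\varphi_e$; and the regime $\abs{\lambda} > 1$ is reduced to $\abs{\lambda} < 1$ via a conjugate symmetry available for symmetric activities.

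First, I would invoke Suzuki--Fisher under the hypothesis $\abs{\varphi_e(+,\ldots,+)} \ge \tfrac{1}{4} \sum_{\bm\sigma} \abs{\varphi_e(\bm\sigma)}$, which is exactly the condition on the base case (a single hyperedge) that an Asano-style contraction argument requires in order to push the zeros of $Z_H^{\bm\varphi}(\lambda)$ onto the unit circle. Since the normalization fixed in~\eqref{eqn:hypergraph2spin} forces $Z_H^{\bm\varphi}(0) = 1$, the function $\log Z_H^{\bm\varphi}(\lambda)$ is analytic in the open unit disk. By the standard Barvinok truncation estimate, its first $m = O(\log(n/\varepsilon) / (1 - \abs{\lambda}))$ Taylor coefficients around $\lambda_0 = 0$ give an $\varepsilon$-additive approximation to $\log Z_H^{\bm\varphi}(\lambda)$ for any $\lambda$ with $\abs{\lambda} < 1$, equivalently a multiplicative $(1\pm\varepsilon)$ approximation of $Z_H^{\bm\varphi}(\lambda)$.

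The second step is to compute these $m$ Taylor coefficients in polynomial time. Following the insect framework of the paper, each coefficient can be written as a finite linear combination, with effectively computable scalars, of counts of connected induced sub-insects of size at most $m$ in $H$; the only change from the hypergraph Ising case is that each hyperedge now carries its full activity $\varphi_e$, so that the weight attached to a sub-insect becomes a product of the values $\varphi_e(\bm\sigma|_e)$ taken over the induced spin patterns. The enumeration of sub-insects of logarithmic size in a hypergraph of maximum degree $\Delta$ and maximum edge size $k$ then proceeds exactly as before, yielding an FPTAS of running time $(n/\varepsilon)^{O(\log \Delta / (1 - \abs{\lambda}))}$ for every $\lambda$ with $\abs{\lambda} < 1$.

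Finally, for $\abs{\lambda} > 1$ I would apply the identity
\[
  Z_H^{\bm\varphi}(\lambda) = \lambda^n \, \overline{Z_H^{\bm\varphi}(1/\bar\lambda)},
\]
which follows by flipping every spin in~\eqref{eqn:hypergraph2spin} and invoking the symmetry $\varphi_e(\bm\sigma) = \overline{\varphi_e(-\bm\sigma)}$; since this sends $\abs{\lambda} > 1$ to $\abs{1/\bar\lambda} < 1$, the previous step applies and yields a $(1\pm\varepsilon)$ approximation for $\abs{\lambda} > 1$ as well. The main obstacle is verifying that the insect expansion of $\log Z$ extends cleanly from Ising-type weights to arbitrary symmetric edge activities $\varphi_e$, while preserving the properties (finiteness, efficient computability, and appropriate boundedness of the scalar multipliers of each induced sub-insect count) that the Barvinok truncation requires both for correctness and for the target running time. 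Once this bookkeeping is settled, the remainder of the argument tracks the proof of Theorem~\ref{thm:main2} almost verbatim.
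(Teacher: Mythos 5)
Your proposal is correct and coincides with the paper's own proof: combine Theorem~\ref{thm:elem-sym0} (efficient computation of the coefficients $e_i$ via connected \sub\higraph{} enumeration) with Lemma~\ref{lem:taylor-approx} (Barvinok's truncated Taylor approximation) and the Suzuki--Fisher theorem (Theorem~\ref{thm:suzuki-fisher}), which supplies the needed Lee--Yang property under exactly the hypothesis $\abs{\varphi_e(+,\cdots,+)} \ge \frac14\sum_{\bm\sigma}\abs{\varphi_e(\bm\sigma)}$; the reduction from $\abs{\lambda}>1$ to $\abs{\lambda}<1$ via spin flip and the conjugate symmetry is also the one used in Section~\ref{sec:taylor}. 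The ``obstacle'' you flag at the end is in fact already resolved: the entire \higraph{} machinery of Section~\ref{sec:coefficients}, including eqs.~\eqref{eqn:hyperZ}, \eqref{eq:eiGdef} and~\eqref{eqn:elem-sym}, is set up from the outset for arbitrary (normalized) edge activities $\varphi_e$, not just Ising-type weights, so no further bookkeeping is required.
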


The remainder of the paper is organized as follows.  In
section~\ref{sec:taylor}, we spell out Barvinok's approach to
approximating partition functions using Taylor series.
Section~\ref{sec:coefficients} introduces the notion of \higraphs{} and
shows how to use them to efficiently compute the lowest-degree coefficients
of the partition function in the general context of hypergraphs; as
discussed above, this machinery applied to graphs, in conjunction with
the Lee-Yang theorem, implies \MakeUppercase Theorem\nobreakspace \ref {thm:main1}.  Finally, in
section~\ref{sec:leeyang} we prove our extension of the Lee-Yang
theorem to the hypergraph Ising model
(Theorem~\ref{thm:leeyangintro}), and then use it and the
Suzuki-Fisher theorem to prove our algorithmic results for
hypergraphs, Theorems~\ref{thm:main2} and~\ref{thm:main3}.

\subsection{Related work}

The problem of computing partition functions has been widely studied, not only in
statistical physics but also in combinatorics, because the partition function is often
a generating function for combinatorial objects (cuts, in the case of the Ising model).
There has been much progress on \emph{dichotomy theorems}, which attempt to
completely classify these problems as being either \#P-hard or computable (exactly)
in polynomial time (see, e.g., \cite{cai_graph_2010,golgrojerthu09}).

Since the problems are in fact \#P-hard in most cases, algorithmic interest has focused
largely on {\it approximation}, motivated also by the general observation that
approximating the partition function is polynomial time equivalent to sampling
(approximately) from the underlying Gibbs distribution\cite{jervalvaz86}.
In fact, most early approximation algorithms exploited this connection, and gave
{\it fully-polynomial randomized approximation schemes (FPRAS)\/} for the
partition function using Markov chain Monte Carlo (MCMC) samplers for the
Gibbs distribution.  In particular, for the ferromagnetic Ising model, the
MCMC-based algorithm of Jerrum and Sinclair~\cite{jersin93} is valid for all
positive real values of $\lambda$ and for all graphs, irrespective of their vertex degrees.
(For the connection with random sampling in this case, see~\cite{randall_sampling_1999}.)
This was later extended to ferromagnetic two-spin systems by Goldberg, Jerrum
and Paterson~\cite{goldberg2003computational}.  Similar techniques have been
applied recently to the related random-cluster model by Guo and Jerrum~\cite{guo2017random}.

Much detailed work has been done on MCMC for Ising
spin configurations for several important classes of graphs, including two-dimensional
lattices (e.g.,~\cite{martinelli1994approach1,martinelli1994approach2,Lubetzky2012}),
random graphs and graphs of bounded degree (e.g.,~\cite{mossel2013exact}),
the complete graph (e.g.,~\cite{long2011power}) and trees
(e.g.,~\cite{Berger2005,Martinelli2004}); we do not attempt to give a
comprehensive summary of this line of work here.

In the {\it anti-ferromagnetic\/} regime ($\beta>1$), {\it deterministic\/}
approximation algorithms based on correlation decay have been remarkably successful
for graphs of bounded degree.
Specifically, for any fixed integer $\Delta\ge 3$, techniques of Weitz~\cite{Weitz} give
a (deterministic) FPTAS for the anti-ferromagnetic Ising partition function
on graphs of maximum degree~$\Delta$
throughout a region $R_\Delta$ in the $(\beta, \lambda)$ plane (corresponding to the regime
of uniqueness of the Gibbs measure on the $\Delta$-regular
tree)~\cite{sinclair_approximation_2012,li_correlation_2011}.
A complementary result of Sly and Sun~\cite{sly12} (see also~\cite{Vigoda-hard-core-11})
shows that the problem is NP-hard outside~$R_\Delta$.
In contrast, no MCMC based algorithms are known to provide an FPRAS for the
anti-ferromagnetic Ising partition function throughout~$R_\Delta$.
More recently, correlation decay techniques have been extended to obtain
deterministic approximation algorithms for the anti-ferromagnetic Ising partition function
on hypergraphs over a range of parameters~\cite{lu16:_fptas_hardc}, as well as to
several other problems not related to the Ising model.
In the ferromagnetic setting, however, for reasons mentioned earlier,
correlation decay techniques have had more limited success:
Zhang, Liang and Bai~\cite{zhaliabai09} handle
only the ``high-temperature'' regime of the Ising model, while the
recent results for ferromagnetic two-spin systems of Guo and Lu~\cite{guo2016uniqueness}
do not apply to the case of the Ising model.

In a parallel line of work, Barvinok initiated the study of Taylor
approximation of the logarithm of the partition function, which led to
quasipolynomial time approximation algorithms for a variety of counting problems\cite{barvinok_computing_2015,Barvinok15,BarvinokSoberon16a,BarvinokSoberon16b}.
More recently, Patel and Regts~\cite{patel_deterministic_2016} showed
that for several models that can be written as induced subgraph sums, one
can actually obtain an FPTAS from this approach.  In particular, for problems such as counting
independent sets with negative (or, more generally, complex valued) activities on bounded degree graphs,
they were able to match the range of applicability of existing
algorithms based on correlation decay, and were also able to extend
the approach to Tutte polynomials and edge-coloring models (also known
as Holant problems) where little is known about correlation decay.

The Lee-Yang program was initiated by Lee and Yang~\cite{leeyan52} in
connection with the analysis of phase transitions.  By proving the
famous Lee-Yang circle theorem for the ferromagnetic Ising model~\cite{leeyan52b},
they were able to conclude that
there can be at most one phase transition for the model.
Asano~\cite{asano_lee-yang_1970} extended the Lee-Yang theorem to the
Heisenberg model, and provided a simpler proof.
Asano's work was generalized further by Suzuki and Fisher~\cite{suzuki1971zeros},
while Sinclair and Srivastava~\cite{sinclair12:_lee_yang} studied the multiplicity
of Lee-Yang zeros.
A complete characterization of Lee-Yang polynomials that are independent
of the ``temperature'' of the model was recently obtained by
Ruelle~\cite{ruelle_characterization_2010}.  The study of Lee-Yang
type theorems for other statistical physics models has also generated
beautiful connections with other areas of mathematics.  For example,
Shearer~\cite{Shearer} and Scott and Sokal~\cite{ScottSokal}
established the close connection between the location of the zeros of the
independence polynomial and the Lov\'asz Local Lemma, while the study
of the zeros of generalizations of the matching polynomial was used in
the recent celebrated work of Marcus, Spielman and Srivastava on the
existence of Ramanujan graphs~\cite{marcus_interlacing_2015}. Such
Lee-Yang type theorems are exemplars of the more general stability theory
of polynomials~\cite{borcea2009lee1,borcea2009lee2}, a field of study
that has had numerous recent applications to theoretical computer
science and combinatorics~(see, e.g.,
\cite{borcea2009negative,marcus_interlacing_2015,marcus2015interlacing,anari2014kadison,anari_generalization_2017,sinclair12:_lee_yang,straszak_real_2016}).

\section{Approximation of the log-partition function by Taylor series}
\label{sec:taylor}
In this section we present an approach due to
Barvinok~\cite{barvinok_computing_2015} for approximating the partition
function of a physical system by truncating the Taylor series of its logarithm, as
discussed in the introduction.  We will work in our most general setting of
symmetric two-spin systems on hypergraphs, which of course includes the
Ising model (on graphs or hypergraphs) as a special case.
As in~(\ref{eqn:hypergraph2spin}), such a system has partition function
\begin{displaymath}
  Z_H^{\bm \varphi}(\lambda) =\sum_{\bm \sigma: V \to \set{+,-}} \prod_{ e \in E} \varphi_e\inp{\bm \sigma\big|_e} \lambda^{\abs{\set{v:\bm \sigma(v) = +}}}.
\end{displaymath}
Our goal is a FPTAS for $Z_H^{\bm \varphi}(\lambda)$, i.e., a deterministic algorithm that, given
as input $H$, $\{\varphi_e\}$, $\lambda$ with $|\lambda|\ne 1$ and $\varepsilon\in(0,1]$,
runs in time polynomial
in $n=|H|$ and $\varepsilon^{-1}$ and outputs a $(1\pm \varepsilon)$-multiplicative approximation
of~$Z_H^{\bm \varphi}(\lambda)$, i.e., a number~$\hat Z$ satisfying
\begin{equation}\label{eq:fptasdef}
  |\hat Z - Z_H^{\bm \varphi}(\lambda)| \le \varepsilon |Z_H^{\bm \varphi}(\lambda)|.
\end{equation}
(Note that in our setting $\hat Z$ and $Z_H^{\bm \varphi}(\lambda)$
may be complex numbers.)  By the symmetry
$\varphi_e(\bm \sigma) = \overline{\varphi_e(-\bm \sigma)}$, we also
have
$Z^{\bm \varphi}(\lambda) = \lambda^n Z^{{\overline{\bm
      \varphi}}}(\frac{1}{\lambda})$, so that without loss of
generality we may assume $\abs{\lambda} < 1$.

For fixed $H$ and (hyper)edge activities $\bm \varphi$, we will write
$Z(\lambda) = Z_H^{\bm \varphi}(\lambda)$ for short.
Letting $f(\lambda) = \log Z(\lambda)$, using the Taylor expansion around $\lambda=0$ we get
\begin{equation}
  f(\lambda) = \sum_{j=0}^{\infty} f^{(j)}(0) \cdot
  \frac{\lambda^j}{j!}, \label{eq:12}%
\end{equation}
where $f(0) = \log Z(0) = 0$.
Note that $Z=\exp(f)$, and thus an additive error in $f$ translates to a multiplicative error in $Z$.
More precisely, given $\eps\le 1/4$, and $f, \widetilde{f}\in \C$ such that $|f - \widetilde{f}| \le \eps$, we have
\[
	|\exp(\widetilde{f})- \exp(f) | = |\exp(\widetilde{f} - f) - 1|\times |\exp(f)| \le
	4\eps |\exp(f)|,
\]
where the last inequality, valid for $\varepsilon\le 1/4$, follows by elementary complex analysis.
In other words, to achieve a multiplicative approximation of~$Z$ within a factor $1\pm \varepsilon$,
as required by a FPTAS, it suffices to obtain an additive approximation of~$f$ within~$\varepsilon/4$.

To get an additive approximation of $f$, we use the first $m$ terms in the Taylor expansion.
Specifically, we compute $f_m(\lambda) := \sum_{j=0}^{m} f^{(j)}(0) \cdot  \frac{\lambda^j}{j!}$.
We show next how to compute the derivatives $f^{(j)}(0)$ from the derivatives of~$Z$ itself
(which are more readily accessible).

To compute $f^{(j)}(0)$, note that $f'(\lambda) = \frac{1}{Z(\lambda)} \D{Z(\lambda)}{\lambda}$,
or $\D{Z(\lambda)}{\lambda} = f'(\lambda) Z(\lambda)$. Thus for any $m\ge 1$,
\Style{DDisplayFunc=outset,DShorten=true}
\begin{align}
	\D[m]{Z(\lambda)}{\lambda} = \sum_{j=0}^{m-1} {m-1 \choose j} \D[j]{Z(\lambda)}{\lambda} \cdot \frac{\mathrm{d}^{m-j}}{\mathrm{d}\lambda^{m-j}}f(\lambda). %
	\label{eqn:deriv-coeff}
\end{align}
Given $\D[j]{Z(\lambda)}{\lambda} \big|_{\lambda=0}$ for
$j=0, \cdots, m$, eq.\nobreakspace \textup {(\ref {eqn:deriv-coeff})} is a triangular system of linear equations in
$\set{f^{(j)}(0)}_{j=1}^{m}$ of representation length $\poly{m}$, and is non-degenerate
since $Z(0)=1$; hence it can be solved in $\poly{m}$ time.

We can now specify the algorithm: first compute $\set{\D[j]{Z(\lambda)}{\lambda} \big|_{\lambda=0}}_{j=0}^m$; next, use the system in~eq.\nobreakspace \textup {(\ref {eqn:deriv-coeff})} to solve for $\set{f^{(j)}(0)}_{j=1}^{m}$; and finally, compute and ouput the approximation $f_m(\lambda)$.

To quantify the approximation error in this algorithm, we need to
study the locations of the complex roots $r_1, \cdots, r_n$ of~$Z$.
Throughout this paper, we will be using (some variant of) the Lee-Yang
theorem to argue that, for the range of interactions~$\bm\varphi$ we
are interested in, the roots $r_i$ all lie on the unit circle in the
complex plane, i.e., $|r_i| = 1$ for all~$i$.  Note that since we are
assuming that $\varphi_e(-,\cdots, -) = 1$, the constant term
$\prod_{i=1}^n(-r_i)$ of $Z(\lambda)$ is $1$, and hence we have
$Z(\lambda)=\prod_i (1 - \frac{\lambda}{r_i})$.  The log partition
function can then be written as
\begin{equation}
  f(\lambda) = \log Z(\lambda) = \sum_{i=1}^n \log \inp{1 - \frac{\lambda}{r_i}}
  = -\sum_{i=1}^n \sum_{j=1}^\infty \frac{1}{j} \inp{\frac{\lambda}{r_i}}^j.\label{eq:13}
\end{equation}
Note that due to the uniqueness of the Taylor expansion of
meromorphic functions, the two power series expansions of $f(\lambda)$
in eqs.\nobreakspace \textup {(\ref {eq:12})} and\nobreakspace  \textup {(\ref {eq:13})} are identical in the domain of their
convergence. Denoting the first $m$ terms of the above expansion by
$f_m(\lambda) = -\sum_{i=1}^n \sum_{j=1}^m \frac{1}{j}
(\frac{\lambda}{r_i})^j$, the error due to truncation is bounded
by
\[
	\abs{f(\lambda) - f_m(\lambda)} \le n \sum_{j=m+1}^\infty \frac{\abs{\lambda}^j}{j} \le \frac{n\abs{\lambda}^{m+1}}{(m+1)(1-\abs{\lambda})},
\]
recalling that by symmetry we are assuming $\abs{\lambda} <1$.
Thus to get within $\eps/4$ additive error, it suffices to take 
$m \ge \frac{1}{\log(1/\abs{\lambda})}\bigl({\log (\frac{4n}{\varepsilon}) +
    \log(\frac{1}{1-\abs{\lambda}})}\bigr)$.
The following result summarizes our discussion so far.
\begin{lemma}
  Given $\varepsilon \in (0, 1)$,
  $m \ge \frac{1}{\log(1/\abs{\lambda})}\bigl({\log (\frac{4n}{\varepsilon}) +
    \log(\frac{1}{1-\abs{\lambda}})}\bigr)$, and the values
  $\set{\D[j]{Z(\lambda)}{\lambda} \big|_{\lambda=0}}_{j=0}^m$,
	$f_m(\lambda)$ can be computed in time $\poly{n/\eps}$. Moreover, if the Lee-Yang theorem
	holds for the partition function $Z(\lambda)$, then $\abs{f_m(\lambda) - f(\lambda)} < \eps/4$, and thus $\ \exp(f_m(\lambda))$ approximates $Z(\lambda)$ within a multiplicative factor $1\pm\varepsilon$.
	\label{lem:taylor-approx}
\end{lemma}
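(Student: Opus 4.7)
The proof will assemble the pieces already laid out in the preceding discussion into a clean statement about running time and error. The plan is to separate the argument into (a) an algorithmic part establishing the polynomial running time for computing $f_m(\lambda)$ from the given derivatives of~$Z$, and (b) an analytic part establishing the approximation guarantee under the Lee-Yang hypothesis.

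For the algorithmic part, the key observation is that \eqref{eqn:deriv-coeff}, rewritten to isolate the top derivative of $f$, gives a triangular system of linear equations for $\{f^{(j)}(0)\}_{j=1}^m$ in terms of $\{Z^{(j)}(0)\}_{j=0}^m$, with nonzero diagonal because $Z(0)=1$. Forward substitution recovers all $f^{(j)}(0)$ in $\mathrm{poly}(m)$ arithmetic operations. Given these coefficients, $f_m(\lambda)=\sum_{j=0}^m f^{(j)}(0)\lambda^j/j!$ is evaluated in $\mathrm{poly}(m)$ time (e.g.\ via Horner's method). Plugging in the hypothesized lower bound on $m$, we get $m=O(\log(n/\varepsilon)/\log(1/|\lambda|) + \log(1/(1-|\lambda|))/\log(1/|\lambda|))=\mathrm{poly}\log(n/\varepsilon)$ for fixed $\lambda$, so the total running time is $\mathrm{poly}(n/\varepsilon)$. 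I would note in passing that the numerical precision issue (working with truncations of real/complex numbers) can be handled in the standard way, since all intermediate quantities have polynomial bit-length under the assumed input representation.

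For the analytic part, I would invoke the Lee-Yang hypothesis to write $Z(\lambda)=\prod_{i=1}^n(1-\lambda/r_i)$ with $|r_i|=1$, using $\varphi_e(-,\dots,-)=1$ to normalize the constant term to~$1$. The expansion \eqref{eq:13} of $f=\log Z$ is then absolutely convergent in $|\lambda|<1$ and equals the Taylor series \eqref{eq:12} by uniqueness. Bounding the tail of \eqref{eq:13} term by term using $|r_i|=1$ gives
\[
|f(\lambda)-f_m(\lambda)|\le n\sum_{j=m+1}^\infty\frac{|\lambda|^j}{j}\le \frac{n|\lambda|^{m+1}}{(m+1)(1-|\lambda|)}.
\]
A direct computation shows that the stated lower bound on~$m$ makes this at most $\varepsilon/4$.

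Finally, the passage from additive error on $f$ to multiplicative error on $Z$ is exactly the elementary complex-analytic estimate already displayed before the lemma: for $|f-\widetilde f|\le \varepsilon/4\le 1/4$ one has $|\exp(\widetilde f)-\exp(f)|\le 4\cdot(\varepsilon/4)\cdot|\exp(f)|=\varepsilon|Z(\lambda)|$, giving the desired $(1\pm\varepsilon)$-multiplicative approximation. I do not anticipate any genuine obstacle: all the nontrivial ingredients (the structural use of the Lee-Yang zero-free region, and the reduction from $f^{(j)}$ to $Z^{(j)}$ via \eqref{eqn:deriv-coeff}) have been set up already, so the proof is essentially a bookkeeping assembly of these estimates. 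The only item needing a tiny bit of care is verifying that the chosen~$m$ indeed makes the tail bound at most $\varepsilon/4$, which is a one-line calculation using $|\lambda|^{m+1}\le (\varepsilon/(4n))(1-|\lambda|)$.
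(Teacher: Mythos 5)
Your proposal is correct and follows essentially the same route as the paper: solve the triangular system \eqref{eqn:deriv-coeff} for the $f^{(j)}(0)$ (non-degenerate since $Z(0)=1$), evaluate the truncated Taylor series, bound the tail via the factorization $Z(\lambda)=\prod_i(1-\lambda/r_i)$ with $|r_i|=1$ from the Lee-Yang hypothesis, and convert the $\varepsilon/4$ additive error on $\log Z$ into a $(1\pm\varepsilon)$ multiplicative error on $Z$ using the elementary estimate displayed just before the lemma. No gaps.
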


The missing ingredient in turning Lemma~\ref{lem:taylor-approx} into an FPTAS is the
computation of the derivatives 
$\D[j]{Z(\lambda)}{\lambda} \big|_{\lambda=0}$ for $1\le j\le m$, 
which themselves are just multiples of the $m+1$ lowest-degree coefficients of~$Z$.  Computing
these values naively using the definition of $Z(\lambda)$ requires $n^{\Omega(m)}$ time.
Since $m$ is required to be of order $\Omega(\log(n/\varepsilon))$, this results in only a
quasi-polynomial time algorithm.
In the next section, we show how to compute these values in polynomial time when
$H$ is a hypergraph of bounded degree and bounded hyperedge size, which
when combined with Lemma~\ref{lem:taylor-approx} gives an FPTAS.

\section{Computing coefficients via \higraphs{}}\label{sec:coefficients}
As discussed in the introduction, Patel and Regts~\cite{patel_deterministic_2016} recently
introduced a technique for efficiently computing the low-degree coefficients of a partition function using
induced subgraph counts.
In this section we introduce the notion of \emph{\sub\higraph{} counts},
and show how it allows the Patel-Regts framework to be adapted to any
hypergraph two-spin system with vertex activities (including the Ising
model with vertex activities as a special case).
We will align our notation with~\cite{patel_deterministic_2016} as much as possible.
From now on, unless otherwise stated, we will use $G$ to denote a hypergraph.
Recall from the introduction the partition function of a two-spin system on a hypergraph $G=(V,E)$:
\begin{align}
  Z_G^{\bm \varphi}(\lambda) =\sum_{\bm \sigma: V \to \set{+,-}} \prod_{ e \in E} \varphi_e\inp{\bm \sigma\big|_e} \lambda^{\abs{\set{v:\bm\sigma(v) = +}}}.
  \label{eqn:hyperZ}
\end{align}
Due to the normalization $\varphi_e(-,\cdots, -) = 1$, each term in the summation depends only on the set $S = \set{v:\bm\sigma(v) = +}$ and the labelled induced sub-hypergraph 
$\inp{S \cup \partial S, E[S]\cup E(S, \overline{S}) }$, 
where $E[S]$ is the set of edges within~$S$,
$\partial S$ is the boundary of $S$ defined as $\partial S \defeq \bigcup_{v \in S} N_G(v) \setminus S$, and $N_G(v)$ is the set of vertices adjacent to the vertex $v$ in~$G$.
This fact motivates the induced sub-structures we will consider.

Let $\bm \sigma^S$ be the configuration where the set of vertices assigned $+$-spins is $S$, that is, $\bm\sigma^S(v) = +$ for $v \in S$ and $\bm\sigma^S(v)=-$ otherwise.
We will also write $\varphi_e(S) \defeq \varphi_e(\bm \sigma^S\big|_e)$ for simplicity.
Thus the partition function can be written
\begin{displaymath}
  Z_G^{\bm \varphi}(\lambda) = \sum_{S \subseteq V } \prod_{e: e \cap S \neq \emptyset} \varphi_e(S) \lambda^{\abs{S}}.
\end{displaymath}

We start with the standard factorization of the partition function in
terms of its complex zeros $r_1,\ldots,r_n$, where $n=|V|$.  As
explained in the paragraph preceding eq.~\eqref{eq:13}, the assumption
$\varphi_e(-,\cdots, -) = 1$ allows one to write the partition
function as
\begin{align*}
  Z_G^{\bm \varphi}(\lambda) = \prod_{j=1}^n (1 - \lambda/r_j) = \sum_{i=0}^n (-1)^i e_i(G) \lambda^i,
\end{align*}
where $e_i(G)$ is the elementary symmetric polynomial of degree $i$
evaluated at $(\frac{1}{r_1}, \cdots, \frac{1}{r_n})$.

On the other hand, we can also express the coefficients $e_i(G)$
combinatorially using the definition of the partition function:
\begin{align}
  e_i(G) = (-1)^i \sum_{\substack{S \subseteq V\\ \abs{S}=i }}\;
  \prod_{e: e \cap S \neq \emptyset} \varphi_e(S).
  \label{eq:eiGdef}
\end{align}

Once we have computed the first $m$ coefficients of $Z$ (i.e., the
values $e_i(G)$ for $i=1,\cdots,m$), where
$m=\Omega\inp{\frac{\log (n/\eps) -
    \log(1-\abs{\lambda})}{\log(1/\abs{\lambda})}}$, we can use
\MakeUppercase Lemma\nobreakspace \ref {lem:taylor-approx} to obtain an FPTAS as claimed in
Theorems~\ref{thm:main1}, \ref{thm:main2} and~\ref{thm:main3}.
The main result in this section will be an algorithm for computing these coefficients $e_i(G)$:
\begin{theorem}
Fix $k,\Delta\in\N$ and $C >0$.  There exists a deterministic $\poly{n/\eps}$-time algorithm
that, given any $n$-vertex hypergraph~$G$ of maximum degree $\Delta$ and maximum
hyperedge size~$k$, and any $\eps \in (0,1)$, computes the coefficients $e_i(G)$ for $i=1,\cdots,m$, where $m=\lceil C \log(n/\eps) \rceil$.
	\label{thm:elem-sym0}
	\label{thm:elem-sym0}
\end{theorem}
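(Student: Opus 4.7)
The plan is to express $e_i(G)$ as a weighted sum of induced occurrence counts of \higraphs{} of size $i$, and then reduce, via a cluster expansion, to summing over only \emph{connected} \higraphs{}, of which there are polynomially many when $i = O(\log n)$. Starting from~\eqref{eq:eiGdef}, I would group the summand by the isomorphism type $I$ of the labelled sub-hypergraph induced on $S\cup\partial S$, obtaining
\[
  e_i(G) = (-1)^i \sum_{I\,:\,|I|=i} \omega(I)\, N(I,G),
\]
where $\omega(I) = \prod_{e\in I}\varphi_e(\mathrm{interior}(I))$ depends only on the isomorphism type $I$ and $N(I,G)$ counts induced occurrences of $I$ in $G$.

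Since $\omega$ is multiplicative under disjoint union of \higraphs{}, the classical combinatorial exponential formula expresses the coefficients of $\log Z$ as sums over only \emph{connected} \higraphs{}. Rather than computing the $e_i(G)$ directly, I would compute the power sums $p_j := -j\,[\lambda^j]\log Z$ for $j=1,\ldots,m$, each of which becomes a sum over connected \higraphs{} of size exactly $j$, and then recover $e_1(G),\ldots,e_m(G)$ from $p_1,\ldots,p_m$ by Newton's identities in $\poly{m}$ time.

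The crucial combinatorial input is the standard branching bound: in a hypergraph of maximum degree $\Delta$ and maximum hyperedge size $k$, the number of connected vertex subsets of size $\ell$ containing a fixed vertex is at most $C(\Delta,k)^\ell$ for some constant $C(\Delta,k)$, provable by a tree-branching argument on a breadth-first exploration of the subset. Thus the total number of connected subsets of $G$ of size $\le m = O(\log(n/\varepsilon))$ is at most $n\cdot C(\Delta,k)^m = \poly{n/\varepsilon}$, and these can be enumerated in polynomial time. For each such $S$, the induced \higraph{} on $S\cup\partial S$ has $O(m\Delta k)$ vertices and $O(m\Delta)$ hyperedges, so its isomorphism type and its weight $\omega$ are computable in $\poly{m}$ time; aggregating these contributions gives each $p_j$, from which Newton's identities deliver the required $e_i(G)$.

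The main obstacle is the careful setup of \higraphs{} and the verification of the cluster expansion in this setting: one must define the disjoint-union operation on \higraphs{} so that $\omega$ is exactly multiplicative (being mindful that distinct connected pieces could in principle share hyperedges through shared boundary vertices) and then derive the correct combinatorial weights with which connected \higraphs{} appear in $\log Z$. The Patel--Regts framework accomplishes this for induced subgraph sums in simple graphs; extending it to hypergraphs with complex, edge-dependent activities is the real content of the argument. Once this bookkeeping is in place, the enumeration of connected subsets, the per-subset $\poly{m}$ evaluation, and the final pass through Newton's identities are all routine.
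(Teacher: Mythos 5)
Your high-level strategy matches the paper's: rewrite $e_i(G)$ as a weighted sub-insect count, pass to power sums $p_t$ which are additive and hence (by Lemma~\ref{lem:additive}) supported on connected insects, enumerate connected subsets of logarithmic size via a branching bound, and recover $e_i$ from $p_t$ by Newton's identities. But there is a genuine gap at the central step, which you flag but do not fill, plus one concrete error.

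The error: $p_j$ is supported on connected insects of size \emph{at most} $j$, not exactly $j$. (On a single edge $\{u,v\}$ with $\beta\ne 0$, the coefficient of the size-one insect on label set $\{u\}$ in $p_2$ is $\beta^2\neq 0$.) Enumerating only size-$j$ connected subsets would miss contributions. The gap: appealing to ``the classical combinatorial exponential formula'' does not by itself produce per-insect coefficients computable in $\poly{m}$ time, since the polymer expansion of $\log Z$ is a sum over multisets of polymers with connected incompatibility graph, weighted by Ursell functions, and regrouping these by the union of their supports and controlling the resulting computation is itself non-trivial. The paper avoids the cluster expansion entirely: the coefficients $a^{(t)}_H$ of the connected insects in $p_t$ are computed by a dynamic programming recursion over $t$ driven directly by Newton's identity~\eqref{eqn:newton} together with Corollary~\ref{cor:weigth-count} (products of weighted sub-insect counts of bounded support remain weighted sub-insect counts of bounded support). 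Newton's identities therefore do double duty, once inside the DP that produces the $a^{(t)}_H$ and once at the end to recover $e_i$ from $p_t$; that recursion, not the exponential formula, is the missing mechanism you correctly identify as ``the real content.'' A smaller point: grouping by isomorphism type is not quite the right setup here; the paper deliberately works with \emph{labelled} insects, which is what lets edge-dependent activities $\varphi_e$ and the location of boundary vertices and hyperedges be carried through the bookkeeping cleanly.
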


\subsection{\Higraphs{} in a hypergraph}
\label{sec:higraphs-hypergraph}
To take advantage of the fact that each term in~eq.\nobreakspace \textup {(\ref {eqn:hyperZ})} only
depends on the set $S$ and the induced sub-hypergraph
$\inp{S \cup \partial S, E[S]\cup E(S, \overline{S}) }$, we define the
following structure.
\begin{definition}%
  Given a vertex set $S$ and a set $E$ of hyperedges, $H = (S, E)$ is
  called \a\ \emph{\higraph{}} if for all $e \in E$,
  $e \cap S \neq \emptyset$.  The set $S$ is called the \emph{label
    set} of the \higraph{} $H$ and the set
  $B(H) \defeq \inp{\bigcup_{e \in E}e} \setminus S$ is called the
  \emph{boundary set}.%
\end{definition}
Given \a\ \higraph{} $H$, we use the notation
$\underline{V}(H)$ for its label set.  The \emph{size} $|H|$ of the
\higraph{} $H$ is defined to be $\abs{\underline{V}(H)}$.  \A\
\higraph{} $H = (S, E)$ is said to be \emph{connected} if the
hypergraph $(S, \inbr{e \cap S \st e \in E})$ is connected.  It is
said to be \emph{disconnected} otherwise. In the latter case, there
exists a partition of $S$ into non-empty sets $S_1, S_2$, and a
partition of $E$ into sets $E_1$ and $E_2$, such that $(S_i, E_i)$ are
\higraphs{} for $i = 1, 2$, and the sets $S_2 \cap B(H_1)$ and
$S_1 \cap B(H_2)$ are empty.  In this case, we write
$H = H_1\uplus H_2$, and say that the \higraphs{} $H_1$ and $H_2$ are
\emph{disjoint}.  
(Note that disjoint \higraphs{} may share boundary vertices.)

\begin{remark}%
  Note that a hypergraph $G = (V, E)$ can itself be viewed as \a\
  \higraph{}.  However, as is clear from the definition, not all
  \higraphs{} are hypergraphs.
\end{remark}

In order to exploit the structure of the terms in~eq.\nobreakspace \textup {(\ref {eqn:hyperZ})}
alluded to above, we now define the notion of an \emph{induced
  \sub\higraph{}} of \a\ \higraph{}.  Given \a\ \higraph{}
$H = (S, E)$ and a subset $S'$ of $S$, we define the induced
\sub\higraph{} $\ind{H}{S'}$ as
$(S', \inbr{e \in E \st e \cap S' \neq \emptyset})$.
Further, we say that \a\ \higraph{}
$H$ is an induced \sub\higraph{} of \a\ \higraph{} $G$, denoted
\subinsect{H}{G}, if there is a set $S \subseteq \underline{V}(G)$
such that $\ind{G}{S} = H$.  

\subsection{Weighted \sub\higraph{} counts}
Just as graph invariants may be expressed as sums over induced
subgraph counts, we will consider weighted \sub\higraph{} counts of
the form $f(G) = \sum_{S\subseteq \underline{V}(G)} a_{\ind{G}{S}}$
and the functions~$f$ expressible in this way.  
Here $G$ is any insect,
and the coefficients $a_H$ depend only on~$H$, not on~$G$.

Let $\Gk$ be the set of \higraphs{} up to size $t$, with maximum
degree $\Delta$ and maximum hyperedge size $k$.  Note that since
\higraphs{} are labelled, this is an infinite set.  We will fix
$\Delta$ and $k$ throughout, and write $\G \defeq \bigcup_{t\ge 1} \Gk$.
Let $\isubinsect{H}{G}$ be the indicator that $H$ is an induced \sub\higraph{}
of $G$, that is,
\[
  \isubinsect{H}{G} = 1 \text{ if there is a set
    $S \subseteq \underline{V}(G)$ such that $\ind{G}{S} = H$, and $0$
    otherwise}.
\]
A weighted \sub\higraph{} count $f(G)$ of the form considered above
can then also be written as
$ f(G) = \sum_{H \in \G} a_{H} \cdot \isubinsect{H}{G}$.  This
alternative notation helps simplify the presentation of some of the
combinatorial arguments below.  Note that even though $\mathcal{G}$ is
infinite, the above sum has only finitely many non-zero terms for any
finite \higraph{} $G$.  Further, as \higraphs{} are labelled, $f(G)$
may also depend on the labelling of $G$, unlike a graph invariant
where isomorphic copies of a graph yield the same value.

\label{sec:prop-induc-sub}
A weighted \sub\higraph{} count $f$ is said to be \emph{additive} if,
given any two disjoint \higraphs{} $G_1$ and $G_2$,
$f(G_1 \uplus G_2) = f(G_1) + f(G_2)$.  An argument due to
Csikv\'{a}ri and Frenkel~\cite{csikvari2016benjamini}, also employed
in the case of graph invariants by Patel and
Regts~\cite{patel_deterministic_2016}, can then be adapted to give the
following:
\begin{lemma}
  Let $f$ be a weighted \sub\higraph{} count, so that $f$ may be
  written as
  \[
    f(G) = \sum_{H \in \G} a_{H}\cdot \isubinsect{H}{G}.
  \]
  Then $f$ is additive if and only if $a_{H} = 0$ for
  all \higraphs{} $H$ that are disconnected.
  \label{lem:additive}
\end{lemma}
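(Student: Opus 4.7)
The plan is to exploit a single structural identity: if $G = G_1 \uplus G_2$ is a disjoint union of \higraphs{}, then every subset $S \subseteq \underline{V}(G)$ decomposes uniquely as $S_1 \cup S_2$ with $S_i \subseteq \underline{V}(G_i)$, and the induced \sub\higraph{} factors as $\ind{G}{S} = \ind{G_1}{S_1} \uplus \ind{G_2}{S_2}$. This factorization holds because the defining disjointness conditions $S_i \cap B(G_{3-i}) = \emptyset$ prevent any hyperedge of $G_1$ from meeting $\underline{V}(G_2)$ (or vice versa), so the hyperedge set of $\ind{G}{S}$ partitions cleanly between the two sides.

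For the ``if'' direction, assume $a_H = 0$ for every disconnected $H$. Expanding via the identity above gives
\[
  f(G_1 \uplus G_2) = \sum_{\substack{S_1 \subseteq \underline{V}(G_1)\\ S_2 \subseteq \underline{V}(G_2)}} a_{\ind{G_1}{S_1} \uplus \ind{G_2}{S_2}}.
\]
Any term with both $S_1$ and $S_2$ non-empty has disconnected index and contributes zero; the terms with $S_1 = \emptyset$ sum to $f(G_2)$, and symmetrically for $S_2 = \emptyset$, yielding $f(G_1 \uplus G_2) = f(G_1) + f(G_2)$ after accounting for the empty \higraph{} (whose coefficient vanishes under the convention that $(\emptyset,\emptyset)$ is not connected).

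For the ``only if'' direction I would proceed by strong induction on $|H|$ among disconnected \higraphs{} $H$. The base case is the empty \higraph{}: applying additivity to $\emptyset \uplus \emptyset$ gives $a_\emptyset = 2 a_\emptyset$, so $a_\emptyset = 0$. For the inductive step, write a disconnected $H$ of size $t$ as $H = H_1 \uplus H_2$ with both parts non-empty and expand the identity $f(H) = f(H_1) + f(H_2)$ on both sides using the factorization. The single-sided terms (with $S_1$ or $S_2$ empty) cancel, leaving
\[
  \sum_{\substack{\emptyset \neq S_1 \subseteq \underline{V}(H_1)\\ \emptyset \neq S_2 \subseteq \underline{V}(H_2)}} a_{\ind{H_1}{S_1} \uplus \ind{H_2}{S_2}} = 0.
\]
Every summand on the left is indexed by a disconnected \higraph{}, and those of size strictly less than $t$ vanish by the induction hypothesis. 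The unique surviving term is the one with $(S_1, S_2) = (\underline{V}(H_1), \underline{V}(H_2))$, which equals $a_H$, so $a_H = 0$.

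The only real obstacle is bookkeeping around boundary cases---principally, verifying that the factorization $\ind{G}{S} = \ind{G_1}{S_1} \uplus \ind{G_2}{S_2}$ is a genuine bijection that respects the hyperedge data (which is where the precise form of ``disjoint'' in the definition of $\uplus$ is essential), and handling the empty \higraph{} consistently. With those in hand, the argument reduces to a clean inclusion--exclusion cancellation that is unchanged from the graph case treated by Csikv\'ari--Frenkel and Patel--Regts.
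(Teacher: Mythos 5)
Your proof is correct and follows essentially the same strategy as the paper: both directions hinge on the disjoint-union structure and an induction/minimality argument over insect size. The main presentational difference is that you make the factorization $\ind{G}{S}=\ind{G_1}{S_1}\uplus\ind{G_2}{S_2}$ explicit and run a direct inclusion--exclusion cancellation in the ``only if'' direction, whereas the paper first reduces WLOG to the case where $a$ is supported on disconnected \higraphs{} (by subtracting the connected component of $f$) and then uses a minimal-counterexample argument, observing that $f(H)=a_H$ once all smaller coefficients vanish. Both routes rely on the same two facts: (i) that for connected $H$, $\isubinsect{H}{G_1\uplus G_2}=\isubinsect{H}{G_1}+\isubinsect{H}{G_2}$ (equivalently, your factorization), and (ii) that the decomposition $H=H_1\uplus H_2$ of a disconnected $H$ combined with additivity isolates $a_H$.

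One small blemish worth flagging: your base case invokes ``applying additivity to $\emptyset\uplus\emptyset$,'' but the paper's definition of $\uplus$ requires a partition into \emph{non-empty} label sets, so $\emptyset\uplus\emptyset$ is not a legal disjoint union and additivity says nothing about $f(\emptyset)$. The cleanest resolution (implicit in the paper as well, e.g.\ in the context of $e_i$ and $p_t$ for $i,t\ge 1$) is that the index set $\G$ excludes the empty \higraph{}, so $a_\emptyset$ never appears; under that convention the true base case of your induction is a disconnected \higraph{} of size $2$ (two isolated labelled vertices), for which the cross-term sum collapses to the single term $a_H$ and forces $a_H=0$. With that adjustment your inductive argument goes through verbatim. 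The paper's own proof glosses over the same boundary case, so this is a shared imprecision rather than a defect unique to your write-up.
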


\begin{proof}
  When $H$ is connected, we have
  $\isubinsect{H}{G_1 \uplus G_2} = \isubinsect{H}{G_1} + \isubinsect{H}{G_2}$; thus $f$ given
  in the above form is additive if $a_{H'} = 0$ for all $H'$ that
  are not connected.

 Conversely, suppose $f$ is additive.  By the last paragraph, we can assume
  without loss of generality that the sequence $a_H$ is supported
  on disconnected \higraphs{} (by subtracting the component of $f$
  supported on connected $H$).  We now show that for such an $f$,
  $a_{H}$ must be $0$ for all disconnected $H$ as well.

  For if not, 
let $H$ be a (necessarily disconnected) \higraph{} of smallest size
  for which $a_{H} \neq 0$.
  Since $a_J=0$ for all insects~$J$ with $|J|<|H|$,
  we must have $f(J)=0$ for all such insects.
Also, since $H$ is disconnected, there exist non-empty \higraphs{} $H_1$ and
  $H_2$ such that $H = H_1 \uplus H_2$.  By additivity, we then have
  $f(H) = f(H_1) + f(H_2) = 0$, where the last equality follows since
  both $|H_1|,|H_2|$ are strictly smaller than $|H|$.  On the other
  hand, since $H$ is \a\ \higraph{} with the smallest possible number
  of vertices such that $a_{H} \neq 0$,
  we also have
  $f(H) = a_{H}\isubinsect{H}{H} = a_H$.  This implies $a_{H} = 0$, which
  is a contradiction.  Hence we must have $a_H=0$ for all disconnected~$H$.
\end{proof}

The next lemma implies that the product of weighted \sub\higraph{} counts
can also be expressed as a weighted \sub\higraph{} count.
We begin with a definition.
\begin{definition}
  \A\ \higraph{} $H_1=(S_1,E_1)$ is \emph{compatible} with another
  \higraph{} $H_2=(S_2,E_2)$ if the \higraph{}
  $H \defeq (S_1 \cup S_2, E_1 \cup E_2)$ satisfies
  $\ind{H}{S_1} = H_1$ and $\ind{H}{S_2} = H_2$.
\end{definition}

\begin{lemma}
  \label{lem:product}
  Let $H_1=(S_1, E_1), H_2=(S_2,E_2)$ be arbitrary \higraphs{}.
  \begin{itemize}
  \item[(i)] If $H_1$ and $H_2$ are not compatible, then there is no
    \higraph{} $G$ such that $\subinsect{H_1}{G}$ and
    $\subinsect{H_2}{G}$.  In other words, for every \higraph{} $G$,
    \[
      \isubinsect{H_1}{G} \isubinsect{H_2}{G} = 0.
    \]
  \item[(ii)] If $H_1$ and $H_2$ are compatible, then for every \higraph{}
    $G$,
  \begin{align*}
    \isubinsect{H_1}{G} \isubinsect{H_2}{G} =  \isubinsect{H}{G},
  \end{align*}
  where $H$ is the \higraph{} $(S_1 \cup S_2, E_1 \cup E_2)$, and
  satisfies $\ind{H}{S_i} = H_i$ for $i = 1, 2$.
\end{itemize}
\end{lemma}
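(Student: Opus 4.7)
The plan is to prove both parts by directly unwinding the definitions of induced sub-insect and of compatibility. The central observation is that for any \higraph{} $G$ and any sets $T_1, T_2 \subseteq \underline{V}(G)$, the induced \sub\higraph{} on $T_1\cup T_2$ has edge set
\[
 \{e\in E(G): e\cap(T_1\cup T_2)\neq\emptyset\}
  = \{e\in E(G): e\cap T_1\neq\emptyset\} \cup \{e\in E(G): e\cap T_2\neq\emptyset\},
\]
which is exactly the union of the edge sets of $\ind{G}{T_1}$ and $\ind{G}{T_2}$. So whenever $\ind{G}{T_i}=H_i$ for $i=1,2$, the larger induced sub-\higraph{} $\ind{G}{S_1\cup S_2}$ is forced to equal $(S_1\cup S_2, E_1\cup E_2)$.

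For part (i), suppose for contradiction that some \higraph{} $G$ has both $\isubinsect{H_1}{G}=1$ and $\isubinsect{H_2}{G}=1$. Since the label sets are part of the data of an insect, the witnessing subsets of $\underline{V}(G)$ must be exactly $S_1$ and $S_2$. Setting $H^* \defeq \ind{G}{S_1\cup S_2}$, the identity above yields $H^* = (S_1\cup S_2, E_1\cup E_2)$. Moreover, restricting $H^*$ to $S_1$ picks out exactly the edges $e\in E_1\cup E_2$ with $e\cap S_1\neq\emptyset$, but these are in turn exactly the edges $e\in E(G)$ with $e\cap S_1\neq\emptyset$, namely $E_1$; hence $\ind{H^*}{S_1}=H_1$. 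By symmetry $\ind{H^*}{S_2}=H_2$, so $H_1$ and $H_2$ are compatible, a contradiction.

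For part (ii), assume compatibility and let $H = (S_1\cup S_2, E_1\cup E_2)$ with $\ind{H}{S_i}=H_i$. The forward direction $\isubinsect{H_1}{G}\isubinsect{H_2}{G} = 1 \Rightarrow \isubinsect{H}{G}=1$ is the same computation as above: the two witnesses $S_1,S_2\subseteq\underline{V}(G)$ force $\ind{G}{S_1\cup S_2}=(S_1\cup S_2, E_1\cup E_2)=H$. For the converse, if $\ind{G}{S}=H$ for some $S\subseteq\underline{V}(G)$, then $S=S_1\cup S_2$ (labels are part of the insect) and the edges of $G$ meeting $S_1\cup S_2$ are exactly $E_1\cup E_2$. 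To recover $\ind{G}{S_1}=H_1$, note its edge set is $\{e\in E(G): e\cap S_1\neq\emptyset\}$; any such $e$ meets $S_1\cup S_2$ and so lies in $E_1\cup E_2$, and by the compatibility identity $\ind{H}{S_1}=H_1$ the edges of $E_1\cup E_2$ meeting $S_1$ are precisely $E_1$. The reverse inclusion $E_1\subseteq \{e\in E(G): e\cap S_1\neq\emptyset\}$ is immediate from $E_1\subseteq E_1\cup E_2\subseteq E(G)$ and the fact that $H_1$ is itself an insect. Hence $\ind{G}{S_1}=H_1$, and symmetrically $\ind{G}{S_2}=H_2$.

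The argument is essentially definitional; the only subtlety I expect is in being careful that ``induced sub-insect'' is taken with respect to the ambient hypergraph in every step, so that edge sets of $\ind{G}{\cdot}$ are computed from $E(G)$ and not from some intermediate insect. Once that bookkeeping is consistent, both parts reduce to the single identity about unions of induced edge sets stated at the outset.
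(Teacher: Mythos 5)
Your proof is correct and follows essentially the same route as the paper's: both arguments reduce to two elementary observations about induced sub-\higraphs{} — that the edge set of $\ind{G}{T_1\cup T_2}$ is the union of the edge sets of $\ind{G}{T_1}$ and $\ind{G}{T_2}$, and that induction commutes with restriction (so $\ind{\ind{G}{S}}{T}=\ind{G}{T}$ for $T\subseteq S$). The only surface difference is that the paper states the second fact as a clean standalone observation while you unwind it inline via explicit set containments; the logical content is identical.
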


\begin{proof}
  We start by making two observations. First, if $\ind{G}{S_1} = H_1$
  and $\ind{G}{S_2} = H_2$ then
  $\ind{G}{S_1 \cup S_2} = H = (S_1 \cup S_2, E_1 \cup E_2)$. Second,
  if $T \subseteq S \subseteq V$ and $G_1 \defeq \ind{G}{S}$ then
  $\ind{G_1}{T} = \ind{G}{T}$.

  Suppose first that $H_1$ and $H_2$ are not compatible.  Suppose, for the
  sake of contradiction, that there exists an insect~$G$ such that
   $\ind{G}{S_i} = H_i$ for $i = 1,2$.  Then, from the
  first observation above we have
  $\ind{G}{S_1 \cup S_2} = H = (S_1 \cup S_2, E_1 \cup E_2)$, while
  from the second observation we have
  $\ind{H}{S_i} = \ind{G}{S_i} = H_i$ for $i = 1,2$.  This contradicts
  the assumption that $H_1$ and $H_2$ are incompatible.  Thus, we must
  have $\isubinsect{H_1}{G} \isubinsect{H_2}{G} = 0$ for every $G$, 
  proving part~(i).

  Now suppose that $H_1$ and $H_2$ are compatible. As seen above,
  $\ind{G}{S_i} = H_i$ for $i = 1,2$ implies that
  $\ind{G}{S_1 \cup S_2} = H$.  On the other hand, if
  $\ind{G}{S_1 \cup S_2} = H$, then by the compatibility of $H_1$ and
  $H_2$, and the second observation above,
  $\ind{G}{S_i} = \ind{H}{S_i} = H_i$ for $i = 1, 2$. This proves part~(ii)
of the lemma.  %
\end{proof}

An immediate corollary of the above lemma is that a product of
weighted \sub\higraph{} counts is also a \sub\higraph{} count
supported on slightly larger \higraphs{}.

\begin{corollary}
  \label{cor:weigth-count}
  If $f_1(G) = \sum_{H}a_H\cdot\isubinsect{H}{G}$ and
  $f_2(G) = \sum_{H} b_H\cdot\isubinsect{H}{G}$ are weighted
  \sub\higraph{} counts, then so is $g(G) \defeq f_1(G)f_2(G)$.
  Moreover, if $f_1,f_2$ are supported on \sub\higraphs{} of sizes
  $\le t_1, t_2$ respectively (i.e., if $a_H = 0$ when $\abs{H} > t_1$
  and $b_H = 0$ when $\abs{H} > t_2$), then $g$ is supported on
  \sub\higraphs{} of size $\le t_1+t_2$.
\end{corollary}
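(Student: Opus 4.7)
The plan is to derive the statement directly from \MakeUppercase Lemma\nobreakspace \ref {lem:product} by expanding the product as a double sum and collecting terms. Concretely, I would start by writing
\[
  g(G) = f_1(G) f_2(G) = \sum_{H_1,H_2 \in \G} a_{H_1} b_{H_2} \cdot \isubinsect{H_1}{G} \isubinsect{H_2}{G},
\]
noting that for any fixed finite \higraph{} $G$ only finitely many terms are nonzero (since $\isubinsect{H_i}{G}=0$ whenever $|H_i|>|G|$), so there are no convergence issues when rearranging.

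Next I would apply \MakeUppercase Lemma\nobreakspace \ref {lem:product} termwise: for each pair $(H_1,H_2)$, either $H_1$ and $H_2$ are incompatible, in which case the product $\isubinsect{H_1}{G}\isubinsect{H_2}{G}$ vanishes for every $G$; or they are compatible, in which case
\[
  \isubinsect{H_1}{G}\isubinsect{H_2}{G} = \isubinsect{H_1 \vee H_2}{G},
\]
where $H_1 \vee H_2 \defeq (S_1 \cup S_2, E_1 \cup E_2)$ is the common \higraph{} supplied by the lemma. Collecting contributions according to this merged \higraph{}, I obtain
\[
  g(G) = \sum_{H \in \G} c_H \cdot \isubinsect{H}{G}, \qquad c_H \defeq \sum_{\substack{(H_1,H_2) \text{ compatible}\\ H_1 \vee H_2 = H}} a_{H_1} b_{H_2},
\]
which exhibits $g$ in the desired form of a weighted \sub\higraph{} count.

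For the support bound, I would observe that the only pairs $(H_1,H_2)$ producing a nonzero summand in $c_H$ are those with $a_{H_1}\ne 0$ and $b_{H_2}\ne 0$, hence (under the hypothesis on the supports) with $|H_1|\le t_1$ and $|H_2|\le t_2$. Since $\underline{V}(H_1\vee H_2) = S_1 \cup S_2$, we have $|H| = |S_1 \cup S_2| \le |S_1| + |S_2| \le t_1 + t_2$, so $c_H = 0$ whenever $|H| > t_1 + t_2$. There is no substantial obstacle here: once \MakeUppercase Lemma\nobreakspace \ref {lem:product} is in hand the argument is essentially bookkeeping. The only mild subtlety to handle carefully is that the merged \higraph{} $H = H_1 \vee H_2$ must itself lie in $\G$ (i.e., still have maximum degree $\Delta$ and maximum hyperedge size $k$); this is automatic because whenever $\isubinsect{H}{G}\ne 0$ for some $G\in\G$, the \higraph{} $H$ must inherit the degree and edge-size bounds from $G$, so $c_H$ is only ever multiplied by indicators for admissible $H$.
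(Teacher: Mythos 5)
Your proof is correct and follows essentially the same route as the paper's: expand the product as a double sum, apply Lemma~\ref{lem:product} termwise to replace $\isubinsect{H_1}{G}\isubinsect{H_2}{G}$ by $\isubinsect{H_1\cup H_2}{G}$ for compatible pairs, and collect terms to read off $c_H$ and the bound $|H_1\cup H_2|\le |H_1|+|H_2|$. Your closing remark about admissibility of $H$ in $\G$ is a harmless extra observation that the paper leaves implicit.
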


\begin{proof}
  For compatible \higraphs{} $H_i = (S_i, E_i)$ we denote by
  $H_1 \cup H_2$ the \higraph{} $(S_1 \cup S_2, E_1 \cup E_2)$. Now,
  for any \higraph{} $G$ we have,
  \begin{align}
    g(G)
    &= \sum_{H_1, H_2} a_{H_1} b_{H_2}
      \cdot\isubinsect{H_1}{G}\cdot\isubinsect{H_2}{G}\nonumber\\
    &= \sum_{H_1, H_2 \text{ compatible}} a_{H_1}b_{H_2}
      \cdot \isubinsect{H_1 \cup H_2}{G}\nonumber\\[5pt]
    &= \;\,\sum_{H} c_H \cdot \isubinsect{H}{G},\nonumber\\
\noalign{\hbox{\rm where in the second line we have used Lemma~\ref{lem:product}, and where}}
    c_H &\defeq %
    \sum_{\substack{H_1, H_2 \text{ compatible}\\H = H_1 \cup H_2}}%
    a_{H_1} b_{H_2}.
    \label{eq:14}
 \end{align}
  Note that the number of non-zero terms in the definition of each
  $c_H$ is finite, and that $|H_1\cup H_2| \le |H_1|+|H_2|$. This completes the proof.
\end{proof}

\subsection{Enumerating connected \sub\higraphs{}}
We observe next that $e_i(G)$, as defined in~eq.\nobreakspace \textup {(\ref {eq:eiGdef})},
can be written as a weighted \sub\higraph{} count.
Accordingly, we generalize eq.\nobreakspace \textup {(\ref {eq:eiGdef})} to arbitrary
insects~$G$ of maximum degree~$\Delta$ and hyperedge size~$k$
as follows:
\begin{align}
  e_i(G) = (-1)^i \sum_{\substack{S \subseteq \underline{V}(G)\\ \abs{S}=i }}\,
  \prod_{e: e \cap S \neq \emptyset} \varphi_e(S) = \sum_{H \in \Gi} \mu_{H} \cdot \isubinsect{H}{G},
	\label{eqn:elem-sym}
\end{align}
where
$\mu_{H} := (-1)^i \prod_{e: e \cap \underline{V}(H) \neq \emptyset}
\varphi_e(\underline{V}(H))$.
Note that this definition coincides with eq.\nobreakspace \textup {(\ref {eq:eiGdef})} when $G$
is a hypergraph, and also extends the definition of the partition
function from hypergraphs to \higraphs{} via the equation
$Z_G(\lambda) = \sum_{i=0}^{|G|}(-1)^ie_i(G)\lambda^i$; when
$G = (S, E)$ this definition is equivalent to that of the partition
function on the hypergraph $(S \cup B(G), E)$, with the vertices in
$B(G)$ set to `$-$'.  This latter observation implies that when the
\higraph{} $G$ is disconnected and $G = G_1 \uplus G_2$, we have
$Z_{G}(\lambda) = Z_{G_1}(\lambda) Z_{G_2}(\lambda)$.

We now consider the computational properties of the above expansion.
Note that each coefficient $\mu_{H}$ is readily computable in time
$\poly{\abs{H}}$; however, as discussed in the introduction, the
number of $H \in \Gi$ such that $\isubinsect{H}{G} \neq 0$ is
$\Omega(n^i)$, so that a naive computation of $e_i(G)$ using
eq.\nobreakspace \textup {(\ref {eqn:elem-sym})} would be inefficient.
To prove Theorem\nobreakspace \ref {thm:elem-sym0}, we consider the set of \emph{connected}
\higraphs{}, denoted by $\Ci$, rather than $\Gi$.  We will show in this
subsection that $\Ci$ can be efficiently enumerated, and then in the
following subsection reduce the above summation over $\Gi$ to
enumerations of $\Ci$.
As in~\cite{patel_deterministic_2016},
we use the following calculation of Borgs et al.~\cite[Lemma
2.1 (c)]{borgs_left_2013}.
\begin{lemma}
	Let $G$ be a multigraph with maximum degree $\Delta$ (counting edge multiplicity) 
	and let $v$ be a vertex of~$G$.
	Then the number of subtrees of $G$ with $t$ vertices containing the vertex $v$ is at most $\frac{(e \Delta)^{t-1}}{2}$.
	\label{lem:ind-subgraph-cnt}
\end{lemma}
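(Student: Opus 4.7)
The plan is to encode each such subtree injectively as a more tractable combinatorial object, and then to evaluate the resulting count via a generating-function argument of Catalan type. I would fix, once and for all, an arbitrary linear order on the neighbors of every vertex of $G$. Given a subtree $T$ of size $t$ containing $v$, view $T$ as rooted at $v$ and order the children of each vertex in $T$ according to this fixed order on its neighbors in $G$. This assigns to $T$ a canonical rooted plane-tree shape $\mathcal{T}$. Because the ordering of neighbors was chosen in advance, $T$ is then recovered from $\mathcal{T}$ by specifying, at each vertex $u$ of out-degree $d$ in $\mathcal{T}$, which $d$-element subset of the at most $\Delta$ neighbors of $u$ in $G$ is used to attach its children. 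Hence the number of subtrees in question is bounded by
\[
  N(t) \;\defeq\; \sum_{\mathcal{T}} \prod_{u \in \mathcal{T}} \binom{\Delta}{d_u(\mathcal{T})},
\]
where the sum runs over rooted plane trees on $t$ vertices and $d_u$ denotes the number of children of $u$.

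Next I would evaluate $N(t)$ by Lagrange inversion. Decomposing a plane tree at its root yields the functional equation $A(x) = x\,(1+A(x))^{\Delta}$ for the ordinary generating function $A(x) = \sum_{t\ge 1} N(t)\, x^{t}$, and Lagrange inversion applied to this relation produces the closed-form Raney/Catalan-type identity $N(t) = \tfrac{1}{t}\binom{\Delta t}{t-1}$. The elementary inequalities $\binom{n}{k}\le n^{k}/k!$ together with $k!\ge (k/e)^{k}$ then convert this to the claimed bound $\tfrac{1}{2}(e\Delta)^{t-1}$, with the very small values of $t$ handled by hand.

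The main technical step is to extract the precise constants, namely the base $e$ and the prefactor $\tfrac{1}{2}$: a naive argument that iterates over traversals of the subtree wastes a factor of order $t!$ and gives only a bound of the form $\Delta^{O(t)}$. The saving comes entirely from the factor $\tfrac{1}{t}$ produced by Lagrange inversion (equivalently, from the cycle lemma), and this is the one place where the counting must be carried out carefully rather than via a union bound. Everything else reduces to routine combinatorial bookkeeping once the encoding in the first paragraph is in place.
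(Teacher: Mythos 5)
Your proposal is correct and takes essentially the same route as the paper: you are injecting $t$-vertex subtrees of $G$ containing $v$ into rooted $t$-vertex subtrees of the infinite $\Delta$-ary tree (your plane-tree-plus-subset-choices encoding is exactly that injection written out) and then invoking the Fuss--Catalan/Raney count $\frac{1}{t}\binom{\Delta t}{t-1}$. The only real difference is that the paper simply cites Stanley for this count while you re-derive it via Lagrange inversion; you are also right that a small-$t$ case check is needed at the end (indeed the bound as literally stated is off at $t=1$, a slip the paper's own proof shares, but harmless since the lemma is only ever applied with $t = \Omega(\log n)$).
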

\begin{proof}
  Consider the infinite rooted $\Delta$-ary tree $T_\Delta$.  The number of subtrees with $t$ vertices starting from the root is $\frac{1}{t}{t \Delta \choose t-1} <\frac{(e \Delta)^{t-1}}{2} $. (See also~\cite[Theorem 5.3.10]{stanley1999enumerative}.)  The proof is completed by observing 
 that the set of $t$-vertex subtrees of~$G$ containing vertex~$v$ can be mapped injectively
 into subtrees of $T_\Delta$ containing the root.
\end{proof}
\begin{corollary}
  Let $G$ be a hypergraph with maximum degree $\Delta$ and maximum
  hyperedge size $k$, and let $v\in V(G)$.  Then the number of
  connected induced \sub\higraphs{} of $G$ of size $t$ whose label set
  contains the vertex $v$ is at most $\frac{(e \Delta k)^{t-1}}{2}$.
	\label{cor:ind-substruct-cnt}
\end{corollary}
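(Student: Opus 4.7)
The plan is to reduce the count of connected induced sub-insects to the multigraph subtree count of \MakeUppercase Lemma\nobreakspace \ref {lem:ind-subgraph-cnt}. Since an induced sub-insect $\ind{G}{S}$ is determined by its label set $S$, and by definition its connectedness is exactly the connectedness of the hypergraph $(S, \{e \cap S : e \in E(G),\ e \cap S \ne \emptyset\})$, it suffices to bound the number of $t$-element subsets $S \subseteq V(G)$ with $v \in S$ such that this ``hypergraph-restriction'' on $S$ is connected.

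First I would pass from the hypergraph $G$ to an auxiliary multigraph $G' = (V(G), E')$ obtained by replacing each hyperedge $e$ of $G$ by a clique on its vertex set; that is, for every hyperedge $e$ and every pair $\{u, w\} \subseteq e$, add one copy of the edge $\{u, w\}$ to $E'$ (retaining multiplicity when several hyperedges contribute the same pair). The max degree of $G'$, counting multiplicity, is at most
\[
  \max_{u \in V(G)} \sum_{e \ni u} (|e| - 1) \;\le\; \Delta (k-1) \;\le\; \Delta k.
\]
Moreover, two vertices of $S$ lie in a common hyperedge (restricted to $S$) if and only if they are adjacent in $G'[S]$, so the hypergraph restriction on $S$ is connected if and only if the induced submultigraph $G'[S]$ is connected.

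Next I would construct an injection from connected $t$-subsets $S \ni v$ of $G$ into subtrees of $G'$ of size $t$ containing $v$. Given such an $S$, the induced submultigraph $G'[S]$ is connected and contains $v$, so it possesses a spanning tree; pick a canonical one, for instance the BFS tree rooted at $v$ under a fixed total ordering of $V(G)$. Call this tree $T(S)$. Since the vertex set of $T(S)$ is exactly $S$, the map $S \mapsto T(S)$ is injective.

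Finally, applying \MakeUppercase Lemma\nobreakspace \ref {lem:ind-subgraph-cnt} to $G'$ with max degree $\Delta k$ bounds the number of subtrees of $G'$ of size $t$ containing $v$ by $\frac{(e \Delta k)^{t-1}}{2}$, giving the desired bound. The only subtle point is verifying that the clique-expansion construction correctly translates hypergraph connectivity on $S$ into multigraph connectivity on $S$, but this follows immediately from the definition of connectedness for insects given in \MakeUppercase Section\nobreakspace \ref {sec:higraphs-hypergraph}; there is no real obstacle beyond being careful about this equivalence and about the degree count (where one slightly loses, using $k-1 \le k$).
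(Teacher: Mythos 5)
Your proof is correct and follows essentially the same approach as the paper's: replace each hyperedge by a clique to form a multigraph of maximum degree at most $\Delta k$, note that connectedness of the induced sub-\higraph{} is equivalent to connectedness of the induced submultigraph, and then invoke Lemma~\ref{lem:ind-subgraph-cnt} via an injection from label sets to spanning trees. The only cosmetic difference is that you fix a canonical BFS tree to define the injection explicitly, whereas the paper observes that the collections of spanning trees for distinct label sets are disjoint and appeals to that disjointness; these are two phrasings of the same counting argument.
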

\begin{proof}
  Consider the multigraph $H$ obtained by replacing every hyperedge of size $r$ in~$G$ by an $r$-clique.
  For any connected induced \sub\higraph{} $A$ of $G$, the label set
  $\underline{V}(A)$ is connected in $H$.
  Now, for any two distinct connected induced \sub\higraphs{} $A$ and
  $B$, let $S_A$ and $S_B$ be the sets of trees in $H$ that span the
  label sets $\underline{V}(A)$ and $\underline{V}(B)$ of $A$ and $B$
  respectively. %
  Since the label sets of $A$ and $B$ are different, we must have
  $S_A \cap S_B = \emptyset$.
  Thus the number of connected subtrees on $t$ vertices in $H$ which
  contain the vertex $v$ is an upper bound on the number of connected
  induced \sub\higraphs{} in $G$ whose label set contains $v$.

  Finally, in the multigraph $H$ the maximum degree is $\Delta k$, so by
  \MakeUppercase Lemma\nobreakspace \ref {lem:ind-subgraph-cnt} the number of such subtrees is at most
  $\frac{(e \Delta k)^{t-1}}{2}$.  
\end{proof}

As a consequence we can efficiently enumerate all connected induced \sub\higraphs{} of logarithmic size in a bounded degree graph.
This follows from a similar reduction to a multigraph, applying~\cite[Lemma 3.4]{patel_deterministic_2016}. However, for the sake of completeness we also include a direct proof.
\begin{lemma}
  For a hypergraph $G$ of maximum degree $\Delta$ and maximum
  hyperedge size $k$, there exists an algorithm that enumerates
  all connected induced \sub\higraphs{} of size at most $t$ in~$G$ and runs in time
  $\Ot(n t^3 (e\Delta k)^{t+1})$. Here $\Ot$ hides factors of the form
  $\polylog{n}, \polylog{\Delta k}$ and $\polylog{t}$.
	\label{lem:enumerator}
\end{lemma}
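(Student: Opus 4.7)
The plan is to reduce enumeration in the hypergraph~$G$ to enumeration in an ordinary bounded-degree multigraph, and then run a standard branching search. Let $H$ be the multigraph on vertex set $V(G)$ obtained by replacing every hyperedge of size~$r$ by an $r$-clique, exactly as in the proof of Corollary~\ref{cor:ind-substruct-cnt}. Then $H$ has maximum degree at most $\Delta k$, and a vertex subset $S \subseteq V(G)$ is the label set of a connected induced \sub\higraph{} of $G$ if and only if $S$ is connected in $H$. Hence it suffices to enumerate the connected vertex subsets of $H$ of size at most $t$ and emit $\ind{G}{S}$ for each.

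To avoid repetitions I would fix an arbitrary total order on $V(G)$ and, for each vertex~$v$, enumerate only those connected subsets $S \ni v$ for which $v$ is the minimum element. This is done by a standard branching / reverse-search recursion: maintain the current connected set $S$ (initially $\set{v}$), a set $D$ of vertices already ``decided excluded,'' and derive the frontier $C = (N_H(S) \setminus S) \setminus D$ restricted to vertices above~$v$ in the order. At each recursive call, emit $\ind{G}{S}$; if $|S| < t$ and $C \ne \emptyset$, pick the smallest $w \in C$ and branch into (i) the ``include'' subproblem with state $(S \cup \set{w}, D)$, and (ii) the ``exclude'' subproblem with state $(S, D \cup \set{w})$. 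A straightforward induction shows that every connected $S \subseteq V(H)$ with $|S| \le t$ and minimum element~$v$ is emitted along exactly one root-to-node path in this search tree, so the enumeration is complete and duplicate-free.

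For the running time, Corollary~\ref{cor:ind-substruct-cnt} bounds the number of connected induced \sub\higraphs{} of size~$s$ whose label set contains any fixed vertex by $(e \Delta k)^{s-1}/2$, so summing over starting vertices and sizes $s \le t$ yields at most $O(n (e\Delta k)^{t-1})$ distinct outputs. The branching search tree exceeds this by at most a $\poly{t \Delta k}$ factor, since between consecutive emissions the recursion performs at most $O(t \Delta k)$ ``exclude'' steps (bounded by the frontier size). Per node, maintaining $C$ and $D$, updating neighborhoods upon an inclusion, and writing out the induced \sub\higraph{} can each be implemented in $\Ot(t^2 \Delta k)$ time with standard hash-based data structures. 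Multiplying these factors gives the claimed $\Ot(n t^3 (e\Delta k)^{t+1})$ bound. The principal obstacle is the duplication-freeness of the enumeration, which is resolved by the minimum-index canonicalization combined with the deterministic ``smallest frontier vertex first'' branching rule.
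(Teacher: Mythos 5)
Your algorithm is genuinely different from the paper's. The paper constructs $\T_t$ bottom-up: starting from singletons, it repeatedly augments each set in $\T_{t-1}$ by one neighboring vertex (forming a multiset $\T^*_t$ that contains duplicates), and then removes duplicates explicitly by sorting canonical string representations. You instead run a reverse-search-style branching recursion that avoids generating duplicates in the first place, using the usual min-index canonicalization. Both reach the same $\Ot(n t^3 (e\Delta k)^{t+1})$ bound; the paper's version is a little more straightforward to state, while yours dispenses with the sort-and-dedup pass at the cost of a more careful correctness argument. Your complexity accounting (chain of at most $O(t\Delta k)$ exclude steps between includes, times the \MakeUppercase Corollary\nobreakspace \ref {cor:ind-substruct-cnt} bound on distinct connected label sets, times per-node work) is sound.

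There is, however, a small but genuine flaw in your emission rule: ``emit $\ind{G}{S}$ at each recursive call'' is \emph{not} duplicate-free. Every exclude step produces a child node with the same label set $S$ (only $D$ changes), so for instance $\set{v}$ is emitted at the root and again at every node reached from the root by a sequence of exclude steps. The fix is standard and preserves your time bound: emit only at the root of each $v$-subtree and at nodes entered via an \emph{include} edge. Then your uniqueness argument goes through: for any connected $S$ with $\min(S)=v$, the canonical root-to-node path is forced at every branching vertex $w$ (include iff $w\in S$), and diverging from it either excludes a vertex of $S$ forever or includes a vertex not in $S$, so $S$ is emitted at exactly one include-child node. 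With this correction, your proof is valid.
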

\begin{proof}
	Let $\T_t$ be the set of $S\subseteq V(G)$ such that $\abs{S}\le t$ and $G^+[S]$ is connected.
	Note that given $S\in\T_t$, $G^+[S]$ will be a \sub\higraph{} of size $t$, and this clearly enumerates all of them.
	Also, by \MakeUppercase Corollary\nobreakspace \ref {cor:ind-substruct-cnt}, $\abs{\T_t} \le O(n (e \Delta k)^{t})$.
	Thus it remains to give an algorithm to construct $\T_t$ in about the same amount of time.

	We construct $\T_t$ inductively. For $t=1$, $\T_1\defeq V(G)$.
	Then given $\T_{t-1}$, define the multiset
	\[
		\T^*_t \defeq \T_{t-1} \cup \set{ S \cup \set{v} : S \in \T_{t-1} \ \textrm{ and } \ v \in N_G(S)\setminus S}.
	\]
	Since $\abs{N_G(S)} < t \Delta k$, we can compute the set $N_G(S) \setminus S$ 
	in time $O(t \Delta k)$, and construct $\T^*_t$ in time $\tilde{O}(\abs{\T_{t-1}} t^2\Delta k)=\tilde{O}(n t^2 (e\Delta k)^{t})$.
	Finally, we remove duplicates in $\T^*_t$ to get $\T_t$ (e.g., by sorting the sets $S \in \T^*_t$, where each is represented as a string of length $\Ot(t)$), in time $\Ot(nt^3(e\Delta k)^{t})$.

	Starting from $\T_1$, inductively we perform $t$ iterations to construct $\T_t$.  Thus the overall running time is $\sum_{i=1}^t \Ot(n i^3 (e\Delta k)^{i}) = \Ot(n t^3 (e\Delta k)^{t+1})$.
\end{proof}

\subsection{Proof of Theorem\nobreakspace \ref {thm:elem-sym0}}
The results in the previous subsection allow us to efficiently
enumerate connected \sub\higraphs{}.  To prove~\MakeUppercase Theorem\nobreakspace \ref {thm:elem-sym0},
it remains to reduce the sum over all (possibly disconnected)~$H$
in~eq.\nobreakspace \textup {(\ref {eqn:elem-sym})} to a sum over {\it connected}~$H$.  We now show
that the method of doing so using Newton's identities and the
multiplicativity of the partition function developed by Patel and
Regts~\cite{patel_deterministic_2016} for graphs extends to the case
of \higraphs{}.
Let $G$ be any \higraph{} of size $n$ and consider the $t$-th power
sum of the inverses of the roots $r_i$, $1 \leq i \leq n$, of
$Z_G(\lambda)$ (extended to insects~$G$ as in the paragraph following eq.\nobreakspace \textup {(\ref {eqn:elem-sym})}):
\[
  p_t(G) = \sum_{i=1}^n \frac{1}{r_i^t}.
\]
Now by Newton's identities (which relate power sums to elementary
symmetric polynomials), we have
\begin{align}
  p_t = \sum_{i=1}^{t-1} (-1)^{i-1} p_{t-i} e_i + (-1)^{t-1} t e_t.
\label{eqn:newton}
\end{align}
Recall from~eq.\nobreakspace \textup {(\ref {eqn:elem-sym})} that $e_i$ is a weighted
\sub\higraph{} count supported on insects of size $\le i$, 
and also from Corollary~\ref{cor:weigth-count}
that the product of two weighted \sub\higraph{} counts supported on insects of
size $\le t_i,t_2$ respectively is a weighted \sub\higraph{} count supported
on insects of size $\le t_1+t_2$. Therefore, by eq.\nobreakspace \textup {(\ref {eqn:newton})} and induction,
each $p_t$ is also a weighted
\sub\higraph{} count supported on insects of size $\le t$.  Thus, for any \higraph{} $G$,
we may write
\begin{align}
  p_t(G) = \sum_{H\in \Gk} a^{(t)}_{H} \cdot \isubinsect{H}{G}
\label{eqn:power-sum-dontuseme}
\end{align}
for some coefficients $a^{(t)}_{H}$ to be determined.  (The
superscript $(t)$ reflects the fact that a given~$H$ will in general
have different coefficients for different~$p_t$.)

Recall now that if $G$ is disconnected with $G = G_1 \uplus G_2$ then
$Z_{G}(\lambda) = Z_{G_1}(\lambda) \cdot Z_{G_2}(\lambda)$.  Thus, the
polynomials $Z_G(\lambda)$ are multiplicative over $G$, and hence sums
of powers of their roots, such as $p_t(G)$ are additive:
$p_t(G_1 \uplus G_2) = p_t(G_1) + p_t(G_2)$.  Hence by
\MakeUppercase Lemma\nobreakspace \ref {lem:additive}, the coefficients of $p_t$ are supported on {\it connected\/}
\higraphs{}, and we may write eq.\nobreakspace \textup {(\ref {eqn:power-sum-dontuseme})} as
\begin{align}
  p_t(G) = \sum_{H\in \Ck} a^{(t)}_{H} \cdot \isubinsect{H}{G}.
\label{eqn:power-sum}
\end{align}
Notice that by~\MakeUppercase Corollary\nobreakspace \ref {cor:ind-substruct-cnt}, there are at most $n (e \Delta k)^t$ non-zero terms
in this sum.

\begin{lemma}
	There is a $\poly{n/\eps}$-time algorithm to compute all the coefficients $a^{(t)}_{H}$ in eq.\nobreakspace \textup {(\ref {eqn:power-sum})}, for $t\le O(\log(n/\eps))$.
\end{lemma}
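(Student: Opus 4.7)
The plan is to compute each coefficient $a^{(t)}_{H}$ by first evaluating $p_t$ directly on the small insect $H$ itself, and then performing a triangular inversion with respect to the partial order induced by $\hookrightarrow$. The starting observation is that substituting $G = H$ into eq.~\eqref{eqn:power-sum}, and noting that $\isubinsect{H'}{H}$ forces $|H'|\le |H|$ with equality only when $H' = H$, yields
\[
  p_t(H) \;=\; a^{(t)}_{H} \;+\; \sum_{\substack{H' \in \Ck,\ \subinsect{H'}{H}\\H'\neq H,\ H'\text{ connected}}} a^{(t)}_{H'}.
\]
This is triangular in the size of $H'$, so once $a^{(t)}_{H'}$ is known for every strictly smaller connected sub-insect $H'$, a single subtraction recovers $a^{(t)}_{H}$.

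The algorithm proceeds in three stages. First, invoke \MakeUppercase Lemma\nobreakspace \ref {lem:enumerator} with parameter $t = O(\log(n/\eps))$ to enumerate the set $\mathcal{S}$ of all connected induced sub-insects of $G$ of size at most $t$. By \MakeUppercase Corollary\nobreakspace \ref {cor:ind-substruct-cnt}, $|\mathcal{S}| \le n (e\Delta k)^t = \poly{n/\eps}$, and the enumeration itself runs in $\poly{n/\eps}$ time. Since the induced sub-insect relation is transitive, every connected sub-insect of any element of $\mathcal{S}$ again lies in $\mathcal{S}$; in particular, every coefficient referenced on the right-hand side above is one that the algorithm will independently compute.

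Second, for each $H\in\mathcal{S}$, compute $p_t(H)$ by brute force. Because $|H|\le t$, the polynomial
\[
  Z_H(\lambda) \;=\; \sum_{S\subseteq \underline{V}(H)} \;\prod_{e:\, e\cap S\neq\emptyset}\varphi_e(S)\,\lambda^{|S|}
\]
can be expanded explicitly by enumerating its $2^{|H|}\le 2^{t} = \poly{n/\eps}$ terms. Reading off the coefficients gives the elementary symmetric polynomials $e_1(H),\dots,e_{|H|}(H)$, and applying Newton's identities (eq.~\eqref{eqn:newton}) then yields $p_t(H)$ in a further $\poly{t}$ time.

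Finally, sort $\mathcal{S}$ by increasing size and process its elements in that order. For each $H$, enumerate its connected induced sub-insects $H'$ (a second application of \MakeUppercase Lemma\nobreakspace \ref {lem:enumerator}, this time to $H$), look up the precomputed values $a^{(t)}_{H'}$, and set
\[
  a^{(t)}_{H} \;=\; p_t(H) \;-\; \sum_{\substack{\subinsect{H'}{H},\ H'\neq H\\H'\text{ connected}}} a^{(t)}_{H'}.
\]
The main point to check carefully is the identification of each sub-insect $H'$ of $H$ with the corresponding entry of $\mathcal{S}$: if $H = \ind{G}{S}$ and $H' = \ind{H}{S'}$ for some $S' \subseteq S$, then $H' = \ind{G}{S'}$ as well, so $H'$ is indexed by the label set $S'\subseteq V(G)$ and can be located by a direct table lookup. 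Summing the per-insect work over $\mathcal{S}$ yields a total running time of $\poly{n/\eps}$, as required.
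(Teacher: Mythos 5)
Your proof is correct, and it takes a genuinely different route from the paper's. The paper computes $a^{(t)}_H$ by dynamic programming over $t$: it applies the Newton recurrence $p_t = \sum_{i=1}^{t-1}(-1)^{i-1}p_{t-i}e_i + (-1)^{t-1}te_t$ and uses the product formula for weighted \sub\higraph{} counts (eq.~\eqref{eq:14} from Corollary~\ref{cor:weigth-count}) to extract the coefficient of $\isubinsect{H}{G}$ in each $p_{t-i}e_i$ as a sum over pairs $(S_1,S_2)$ with $S_1\cup S_2 = \underline{V}(H)$, given the previously computed $a^{(t')}_{H'}$ for $t' < t$. You instead observe that for a small connected \higraph{} $H$ of size $|H|\le t$ one can compute $p_t(H)$ directly by brute-force expansion of $Z_H$ (the $2^{|H|}\le 2^t$ subsets are affordable), and then perform a triangular (M\"obius-type) inversion over the \sub\higraph{} poset: setting $G=H$ in eq.~\eqref{eqn:power-sum} expresses $p_t(H)$ as $a^{(t)}_H$ plus contributions from strictly smaller connected \sub\higraphs{} of $H$, all of which lie in the enumerated set $\mathcal{S}$ by transitivity of $\hookrightarrow$. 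Both routes yield $\poly{n/\eps}$ time (in your case, $|\mathcal{S}|\cdot 2^t\cdot\poly{t} = \poly{n/\eps}$). Your approach is somewhat more direct in that once one knows the representation eq.~\eqref{eqn:power-sum} exists, it bypasses the need to invoke Corollary~\ref{cor:weigth-count} inside the coefficient computation itself; the paper's version is more compositional, building the $a^{(t)}_H$ recursively in $t$ rather than evaluating $p_t$ from scratch on each $H$.
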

\begin{proof}
	By \MakeUppercase Lemma\nobreakspace \ref {lem:enumerator}, we compute $\T_t$, consisting of all $S\subseteq V(G)$
	 such that $\abs{S}\le t$ and $G^+[S]$ is connected.
	As we have removed duplicates, this is exactly $\Ck$.  We then use dynamic programming
	to compute the coefficients~$a^{(t)}_H$.

    By eq.~\eqref{eqn:newton}, for $t=1$ we have $p_1=e_1$, so by
    eq.~\eqref{eqn:elem-sym} we can read off the coefficients
    $a^{(1)}_{H}$ from $e_1(G)$.  Next suppose we have computed
    $a^{(t')}_{H'}$ for $\abs{H'} \le t'< t$, and we want to compute
    $a^{(t)}_{H}$ for some fixed
    connected~$H \in \mathcal{C}_t^{\Delta, k}$ such that
    $\isubinsect{H}{G}$.  Again by eq.\nobreakspace \textup {(\ref {eqn:newton})}, it suffices to
    compute the coefficient corresponding to $H$ in $p_{t-i} e_i$
    for each~$1 \leq i \leq k - 1$ (since the contribution of the last
    term in eq.~\eqref{eqn:newton} is simply $(-1)^{t-1}t\mu_H$ if
    $\abs{H} = t$ and $0$ otherwise). By eqs.~\eqref{eq:14} and
    \eqref{eqn:power-sum}, this coefficient is given by
	\begin{align}
	  \sum_{\substack{H_1\in \G_i^{\Delta, k},\, H_2 \in \mathcal{C}_{(t-i)}^{\Delta, k} \\ H_1\ \mathrm{ compatible\ with }\ H_2\\ H_1 \cup H_2 = H}} a^{(t-i)}_{H_2} \mu^{(i)}_{H_1} =
      \sum_{\substack{(S_1, S_2) \\ S_1 \cup S_2 = \underline{V}(H) \\
      H^+[S_1] \in \G_i^{\Delta, k},\, H^+[S_2] \in \mathcal{C}_{(t-i)}^{\Delta, k}}} a^{(t-i)}_{H^+[S_2]} \,\mu_{H^+[S_1]}.
		\label{eqn:hc-coeff}
	\end{align}
	Since $t\le O( \log(n/\eps))$, the second sum involves at most $4^{t} = \poly{n/\eps}$ terms.
	Moreover, due to Corollary~\ref{cor:ind-substruct-cnt}, there are at most $nt(e \Delta k)^t = \poly{n/\eps}$ previously computed $a^{(t')}_{H'}$, where $H'$ is a connected \sub\higraph{} of $G$ and $\abs{H'}\le t' <t$.
	In order to look up $a^{(t-i)}_{H^+[S]}$, one can do a linear scan, which also takes time $\poly{n/\eps}$ for $t\le O(\log(n/\eps))$.  
	The coefficients $\mu_{H^+[S]}$ can simply be read off 
	from their definition in eq.\nobreakspace \textup {(\ref {eqn:elem-sym})}.

	To conclude, because $t\le O(\log(n/\eps))$,
	eq.~\eqref{eqn:power-sum} only contains $\poly{n/\eps}$ terms.
	And for each term, $a^{(t)}_{H}$ can be computed using the above dynamic programming scheme in $\poly{n/\eps}$ time.
\end{proof}

Finally, now that we can compute $a_{H,t}$ efficiently, by~eq.\nobreakspace \textup {(\ref {eqn:power-sum})} we can compute $p_k$ using the \sub\higraph{} enumerator in~\MakeUppercase Lemma\nobreakspace \ref {lem:enumerator},
and we can then compute $e_k$ using Newton's identities as in~eq.\nobreakspace \textup {(\ref {eqn:newton})}, which
completes the proof of~\MakeUppercase Theorem\nobreakspace \ref {thm:elem-sym0}.

\subsection{Proofs of main theorems}
Our first main result in the introduction, the FPTAS for the Ising model on graphs throughout the
ferromagnetic regime with non-zero field stated in Theorem~\ref{thm:main1}, now follows
by combining~\MakeUppercase Theorem\nobreakspace \ref {thm:elem-sym0} with~\MakeUppercase Lemma\nobreakspace \ref {lem:taylor-approx} and the Lee-Yang
theorem~\cite{leeyan52b} (also stated as Theorem~\ref{thm:leeyang-graph} in the next section).
Recall from the introduction that the Lee-Yang theorem ensures that the partition function
has no zeros inside the unit disk.

Similarly, Theorem~\ref{thm:main3}, the FPTAS for two-spin systems on hypergraphs,
follows by combining~\MakeUppercase Theorem\nobreakspace \ref {thm:elem-sym0} with~\MakeUppercase Lemma\nobreakspace \ref {lem:taylor-approx} and the
Suzuki-Fisher theorem~\cite{suzuki1971zeros} (also stated as Theorem~\ref{thm:suzuki-fisher}
in the next section).  Again, the Suzuki-Fisher theorem ensures that there are no zeros
inside the unit disk, under the condition on the hyperedge activities stated in Theorem~\ref{thm:main3}.

To establish our final main algorithmic result, Theorem~\ref{thm:main2}, we first need to
prove a new Lee-Yang theorem for the hypergraph Ising model as stated in
Theorem~\ref{thm:leeyangintro} in the introduction.
This will be the content of the next and final section of our paper.
Once we have that, Theorem~\ref{thm:main2} follows immediately by the same route
as above.

\section{A Lee-Yang Theorem for Hypergraphs}\label{sec:leeyang}
In this section we prove a tight Lee-Yang theorem for the hypergraph Ising model
(Theorem~\ref{thm:leeyangintro} in the introduction).
We start by extending the definition of the hypergraph
Ising model to the multivariate setting, where each vertex and each hyperedge is allowed
to have a different activity.  As before, we have an underlying
hypergraph $G = (V, E)$ with $|V|=n$ vertices.  Given vertex
activities $\lambda_1, \lambda_2, \dots, \lambda_n$
and hyperedge activities $\vec{\beta}=(\beta_e)$, we define
\begin{displaymath}
	Z_G^{\vec{\beta}}(\lambda_1, \cdots,\lambda_n) =
	\sum_{S\subseteq V} \prod_{e \in E\inp{S, \overline{S}}} \beta_e  \prod_{i \in S} \lambda_i\,,
\end{displaymath}
where for a subset $S \subseteq V$, $E(S,\overline{S})$ is the set of hyperedges with
at least one vertex in each of~$S$ and~$\overline S$.
Note that
\begin{align}
	Z_G^{\vec{\beta}}(\lambda_1, \cdots,\lambda_n) = \prod_{i=1}^n \lambda_i \cdot Z_G^{\vec{\beta}}\inp{\frac{1}{\lambda_1}, \cdots, \frac{1}{\lambda_n}}.
	\label{eqn:multivariate-symmetry}
\end{align}

We use the following definition of the Lee-Yang property. This
definition is based on the results of Asano~\cite{asano_lee-yang_1970}
and Suzuki and Fisher~\cite{suzuki1971zeros}, and somewhat stricter
than the definition used by
Ruelle~\cite{ruelle_characterization_2010}.
\begin{definition}[\textbf{Lee-Yang property}]
  Let $P(z_1, z_2, \dots, z_n)$ be a multilinear polynomial.  $P$ is
  said to have the \emph{Lee-Yang property} (sometimes written as
  ``$P$ is LY'') if for any complex numbers
  $\lambda_1,\cdots, \lambda_n$ such that
  $\abs{\lambda_1}\ge 1, \cdots, \abs{\lambda_n} \ge 1$, and
  $\abs{\lambda_i} >1$ for some $i$, it holds that
  $P(\lambda_1,\cdots,\lambda_n)\neq 0$.
\end{definition}

Then the seminal Lee-Yang theorem~\cite{leeyan52b} can be stated as follows:
\begin{theorem}
	Let $G$ be a connected undirected graph, and suppose $0<\beta<1$.  Then the Ising partition function
	$Z_G^{\beta}(\lambda_1, \cdots, \lambda_n)$ has the Lee-Yang property.
	\label{thm:leeyang-graph}
\end{theorem}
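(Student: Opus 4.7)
The plan is to follow Asano's inductive proof~\cite{asano_lee-yang_1970}, which also serves as the template for the hypergraph extension (Theorem~\ref{thm:leeyangintro}) later in this section. The argument has two ingredients: a base case for a single edge, and an inductive step that identifies copies of vertex variables using Asano's contraction lemma.

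For the base case, the single-edge partition function is
\[
Z_e(\lambda_u,\lambda_v) \;=\; 1 + \beta\lambda_u + \beta\lambda_v + \lambda_u\lambda_v.
\]
Solving $Z_e = 0$ for $\lambda_u$ yields $\lambda_u = -(1 + \beta\lambda_v)/(\beta + \lambda_v)$, and the identity
\[
|\beta + \lambda_v|^2 - |1 + \beta\lambda_v|^2 \;=\; (1 - \beta^2)(|\lambda_v|^2 - 1)
\]
shows $|\lambda_u| < 1$ when $|\lambda_v| > 1$ and $|\lambda_u| = 1$ when $|\lambda_v| = 1$. This is precisely the LY property for $Z_e$, and it is the one step that consumes the strictness $\beta < 1$.

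For the inductive step, I would introduce one independent copy $\lambda_v^{(e)}$ of each vertex activity per incident edge, and form the product $\Pi_G := \prod_{e = \{u,v\} \in E} Z_e(\lambda_u^{(e)}, \lambda_v^{(e)})$, which is LY as a product of LY polynomials in pairwise disjoint sets of variables. Then, for each vertex $v$, I would iteratively identify its copies $\lambda_v^{(e_1)}, \ldots, \lambda_v^{(e_{d(v)})}$ via Asano's contraction, which takes a multi-affine $P = A + Bz + Cw + Dzw$ in two distinguished variables to $P^\#(y) := A + Dy$. A direct expansion---conveniently checkable on the two-edge path on three vertices---shows that iterating Asano contractions through all such vertex-copy groups produces exactly $Z_G^\beta(\lambda_1, \ldots, \lambda_n)$, so once Asano's contraction is shown to preserve LY, the theorem follows.

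The main obstacle is Asano's lemma itself, since LY is delicate at the boundary of the polydisk, and $P$ does not in general factor as $(A + Bz)(C + Dw)$, ruling out a naive product-of-roots shortcut. I would establish it in two stages: first the open-polydisk version (no zeros when all coordinates satisfy $|z_i| > 1$), which reduces to showing $|A| \le |D|$ from the two LY-inherited inequalities $|A + Bz|\le|C + Dz|$ for $|z|\ge 1$ and $|A + Cw|\le|B + Dw|$ for $|w|\ge 1$; squaring moduli and integrating over the unit circle yields the two bounds $|C|^2+|D|^2-|A|^2-|B|^2\ge 2|\bar{C}D-\bar{A}B|$ and $|B|^2+|D|^2-|A|^2-|C|^2\ge 2|\bar{B}D-\bar{A}C|$, whose sum gives $|D|^2\ge|A|^2$. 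The boundary-inclusive LY property is then recovered by a limit argument, perturbing $\beta \mapsto \beta(1-\eta)$ to obtain strict inequalities and using continuity of polynomial roots in the coefficients as $\eta \to 0^+$.
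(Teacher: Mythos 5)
The paper does not prove this theorem directly --- it cites Lee--Yang --- but its proof of the hypergraph generalization (Theorem~\ref{thm:hyper-leeyang}) proceeds exactly by Asano's inductive method and is stated to specialize to the graph case at $k=2$. Your proposal follows the same template: your single-edge base case is correct, the identity $|\beta+\lambda_v|^2 - |1+\beta\lambda_v|^2 = (1-\beta^2)(|\lambda_v|^2-1)$ is right, and your path-of-length-two check of the product-then-contract reassembly of $Z_G^\beta$ is also correct. However, there are genuine gaps where your route departs from the paper's.

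First, the assertion that $\Pi_G$ ``is LY as a product of LY polynomials in pairwise disjoint sets of variables'' is false under the paper's definition of the Lee--Yang property, which demands nonvanishing whenever \emph{all} coordinates have modulus $\ge 1$ and at least one has modulus strictly $>1$. Take two disjoint edges, set both variables of the first edge equal to $-\beta + i\sqrt{1-\beta^2}$ (a zero of $Z_e$ on the unit circle, since $1+2\beta z + z^2 = 0$ there and $|z|=1$), and set the second edge's variables to $2$: then $\Pi_G = 0$ while every coordinate has modulus $\ge 1$ and two are strictly $>1$. What is true is only the open-polydisk version, that $\Pi_G \ne 0$ when \emph{every} coordinate is strictly outside the closed unit disk.

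Second, your derivation of $|A| \le |D|$ uses the inequality $|A+Bz| \le |C+Dz|$ on $|z| \ge 1$, which you extract from the LY-ness of $P$ by solving for $w$. This is only valid when $C+Dz \ne 0$; if $C+Dz = 0$ and $A + Bz \ne 0$, the inequality in fact reverses. The paper's Lemma~\ref{lem:asano} (and Corollary~\ref{cor:asano-iter}) is precisely the device that rules out this degeneracy for a boundary-inclusive LY polynomial, and the paper invokes it at exactly the points where you would otherwise have a hole: to show $A \ne 0$ before dividing in the contraction step. You have no substitute for this in the open-LY setting, and without it your chain of contractions can break at an intermediate stage where a coefficient polynomial vanishes.

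Third, the final limit argument ``perturbing $\beta \mapsto \beta(1-\eta)$'' does not promote the open version to the boundary-inclusive version: the open-LY statement you would prove is the same for every $\beta \in (0,1)$, so the perturbation changes nothing. A Hurwitz-style argument that perturbs the \emph{other} $\lambda_i$'s outward could in principle work, but it again needs the leading coefficient $A_1(\lambda_2,\dots,\lambda_n)$ to be nonvanishing (else the univariate polynomial in $\lambda_1$ could be identically zero and Hurwitz is vacuous), and that nonvanishing is once more exactly the content of Lemma~\ref{lem:asano}. So, while the overall skeleton is sound, you still need an analogue of that lemma to close the argument; the paper closes it by carrying the boundary-inclusive LY property through every step of the induction rather than deferring the boundary to a final limit.
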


The following extension of the Lee-Yang theorem to general symmetric 
two-spin systems on hypergraphs
is due to Suzuki and Fisher~\cite{suzuki1971zeros}.  Again the theorem is stated in the
multivariate setting, where in the two-spin partition function in~eq.\nobreakspace \textup {(\ref {eqn:hyperZ})} each
vertex~$i$ has a distinct activity~$\lambda_i$.
\begin{theorem}
Consider any symmetric hypergraph two-spin system, with a connected hypergraph $G$ and edge 
activities $\set{\varphi_e}$.  Then the partition function
	$Z_G^{\bm \varphi}(\lambda_1, \cdots, \lambda_n)$ has the Lee-Yang property if $\abs{\varphi_e(+,\cdots, +)} \ge \frac{1}{4}\sum_{\bm \sigma\in \set{+,-}^V} \abs{\varphi_e(\bm \sigma)}$ for every hyperedge $e$.
	\label{thm:suzuki-fisher}
\end{theorem}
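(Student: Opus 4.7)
The natural plan is to use Asano's method of contractions, extended to multivariate multilinear polynomials in the manner formalized by Ruelle and employed by Suzuki and Fisher themselves. The reduction is standard: first establish the Lee--Yang property for a single hyperedge, then assemble the full partition function by iterated Asano--Ruelle contractions.

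For each hyperedge $e\in E$ and each $v\in e$ I would introduce a fresh variable $z_{e,v}$ and form the single-edge polynomial
\[
  P_e \;\defeq\; \sum_{S\subseteq e} \varphi_e\bigl(\bm\sigma^S\bigr)\,\prod_{v\in S} z_{e,v}.
\]
If every $P_e$ is LY, then the product $\prod_{e\in E}P_e$ over disjoint sets of variables is LY coordinatewise. I would then apply the Asano--Ruelle contraction lemma iteratively: for each vertex $v\in V$, contract the copies $\{z_{e,v}\colon e\ni v\}$ one pair at a time down to a single variable $\lambda_v$. The lemma says that for a multilinear LY polynomial in variables including $u,w$, discarding the coefficients of $u$ alone and of $w$ alone while relabelling $uw\mapsto y$ yields a polynomial that is LY in $y$ together with the remaining variables. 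After performing all such contractions one variable at a time, the resulting polynomial is precisely $Z_G^{\bm\varphi}(\lambda_1,\dots,\lambda_n)$; the connectedness of $G$ plays no essential role in this step.

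The substance of the proof is therefore the base case: $P_e$ is LY. Writing $A\defeq \varphi_e(+,\dots,+)$, the symmetry $\varphi_e(\bar S)=\overline{\varphi_e(S)}$ yields the self-reciprocal identity
\[
  P_e(\vec z) \;=\; \Bigl(\prod_{v\in e}z_v\Bigr)\,\overline{P_e\bigl(1/\overline{\vec z}\bigr)},
\]
so the leading and constant coefficients have common modulus $\abs{A}$ and the zeros of $P_e$ are invariant under the involution $\vec z\mapsto 1/\overline{\vec z}$. The hypothesis $\abs{A}\ge\tfrac14\sum_{\bm\sigma}\abs{\varphi_e(\bm\sigma)}$ then says that the ``middle'' coefficients, indexed by nonempty proper $S\subsetneq e$, have total modulus at most $2\abs{A}$. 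To rule out zeros in the open polydisk $\{\abs{z_v}>1\}$, I would freeze all but one variable on the unit torus and apply the Schur--Cohn test to the resulting univariate multilinear polynomial in the remaining variable; the self-reciprocity forces that polynomial to be palindromic up to conjugation, so the Schur--Cohn inequality collapses to exactly the numerical inequality supplied by the $\tfrac14$ bound on the middle mass.

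The main obstacle is this base case, and the constant $\tfrac14$ is tight: one can concentrate the entire middle mass on a pair $(S,\bar S)$ with $\abs{S}=k-1$ and tune phases so that the triangle inequality is saturated at a boundary point, which rules out any relaxation of the hypothesis. An alternative, perhaps cleaner, route to the same conclusion is to polarize via Grace--Walsh--Szeg\H{o} and reduce to a one-variable problem, which is the spirit of Asano's original inductive argument. Once the base case is in hand, the contraction assembly described above finishes the proof with no further work.
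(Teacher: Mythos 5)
The paper does not prove Theorem~\ref{thm:suzuki-fisher}; it is cited from Suzuki and Fisher. The natural comparison is with the paper's own proof of Theorem~\ref{thm:hyper-leeyang}, the tight Ising specialisation, which follows exactly the Asano-contraction route you propose. Your self-reciprocal identity $P_e(z_1,\dots,z_k)=\bigl(\prod_v z_v\bigr)\overline{P_e(1/\overline{z_1},\dots,1/\overline{z_k})}$ is correct, as is your reading of the $\tfrac14$ condition as a bound of $2\abs{A}$ on the total ``middle'' mass, and the Asano/Ruelle assembly is the right framework.

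However, the base case as you describe it has a genuine gap. The stated goal of ruling out zeros ``in the open polydisk $\{\abs{z_v}>1\}$'' is \emph{weaker} than the Lee--Yang property defined in the paper, which demands nonvanishing whenever $\abs{z_v}\ge 1$ for all $v$ with $\abs{z_{v_0}}>1$ for at least one $v_0$; variables sitting exactly on the torus are permitted inputs. Freezing all but one variable on the torus and applying Schur--Cohn does not close this: with the frozen variables on the torus the conjugate-symmetry $p_S=\overline{p_{\bar S}}$ forces $\abs{A_j}=\abs{B_j}$ exactly (the last root lands on the torus, not inside it), and the $\tfrac14$ bound fed into the triangle inequality yields only $\abs{A_j}\ge 0$ because the middle mass may saturate $2\abs{A}$. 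What is required is the two-stage argument of the type in Lemma~\ref{lem:asano-leeyang-eqv}: first prove that if all but one variable lies on the torus the last root lies on the torus (that computation adapts from $p_S=p_{\bar S}$ to $p_S=\overline{p_{\bar S}}$), then reduce the general case of several strictly exterior variables down to one by continuity; and separately verify $A_j\neq 0$ on the closed exterior, which is where the $\tfrac14$ hypothesis is actually consumed.

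The larger flaw is in the assembly. Forming $\prod_e P_e$ in disjoint variables and then contracting loses the Lee--Yang property at the very first step: the disjoint product is \emph{not} LY, because one factor can vanish with all of its variables on the torus while a different factor has a strictly exterior variable. Products and iterated Asano contractions preserve only the weaker open-exterior stability in general; to retain the strict LY property one must attach hyperedges one at a time along a connected order, maintaining LY at each step, exactly as in the proof of Theorem~\ref{thm:hyper-leeyang} (operation~1 there is a product followed by a single shared-vertex contraction, proved LY-preserving directly). This is precisely where connectedness enters, and your claim that connectedness plays no essential role is wrong: if $G=G_1\sqcup G_2$ is disconnected then $Z_G=Z_{G_1}Z_{G_2}$, and choosing a torus zero of $Z_{G_2}$ together with $\abs{\lambda_{v_0}}>1$ for some $v_0\in G_1$ gives a zero of $Z_G$ that violates the Lee--Yang property.
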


\MakeUppercase Theorem\nobreakspace \ref {thm:suzuki-fisher} is not tight for the important special case of the Ising model on
hypergraphs.  Our goal in this section is to prove a tight analog of the original Lee-Yang
theorem for this case.
Specifically, we will prove the following:

\begin{theorem}
  Let $G=(V,E)$ be a connected hypergraph, and
  $\vec{\beta} = (\beta_e)_{e \in E}$ be a vector of real valued
  hyperedge activities so that the activity of edge $e \in E$ is
  $\beta_e$.  Then $Z_G^{\vec{\beta}}$ has the Lee-Yang property if
  the following condition holds for every hyperedge $e$, where $k\ge 2$ is
  the size of $e$:
	\begin{itemize}
    \item if $k =2$, then $-1 < \beta_e < 1$;
    \item if $k \geq 3$, then
      $-\frac{1}{2^{k-1} -1} <\beta_e < \frac{1}{2^{k-1}\cos^{k-1}\left(
          \frac{\pi}{k-1} \right) +1}$.
	\end{itemize}
    Further, if the above condition is not satisfied for a
    given real edge activity $\beta$ and integer $k \geq 2$, then
    there exists a $k$-uniform hypergraph $H$ with edge activity~$\beta$ such that $Z_H^\beta$
    does not have the Lee-Yang property.
	\label{thm:hyper-leeyang}
\end{theorem}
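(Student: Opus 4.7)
The plan is to follow Asano's inductive approach, which reduces the multi-edge case to a single-hyperedge base case via a contraction operation that preserves the Lee-Yang property. First I would set up the framework by introducing, for each vertex $v \in V$, independent copies $\lambda_v^{(e)}$ of its activity variable, one for each hyperedge $e \ni v$. The product $\prod_{e \in E} Z_e\bigl(\{\lambda_v^{(e)}\}_{v \in e}\bigr)$ is then a polynomial multilinear in all the copied variables, and successively applying Asano contractions to identify the copies of each vertex recovers exactly $Z_G^{\vec\beta}$. Recall that Asano's contraction sends $P(z, w, \ldots) = Azw + Bz + Cw + D$ to $Q(u, \ldots) = Au + D$, and Asano's classical lemma shows that this operation preserves the Lee-Yang property; moreover, products of LY polynomials in disjoint variables are LY. Thus the forward direction reduces to proving that, for $\beta$ in the claimed range, the single-hyperedge polynomial
\begin{equation*}
Z_e(\lambda_1, \ldots, \lambda_k) = \beta \textstyle\prod_{i=1}^k (1+\lambda_i) + (1-\beta)\bigl(1+\prod_{i=1}^k \lambda_i\bigr)
\end{equation*}
has the Lee-Yang property.

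The base case carries the main technical content. By the reflection symmetry \eqref{eqn:multivariate-symmetry} and a continuity-in-$\beta$ argument (noting that at $\beta = 0$ the polynomial $Z_e = 1 + \prod_i \lambda_i$ is manifestly LY, and that zeros can only enter the forbidden region by first crossing the unit torus), it suffices to show that $Z_e$ has no zeros on the unit torus for $\beta$ strictly inside the claimed interval. Parameterizing $\lambda_j = e^{i\theta_j}$ and using $1 + e^{i\theta} = 2\cos(\theta/2) e^{i\theta/2}$ yields
\begin{equation*}
Z_e(e^{i\theta_1}, \ldots, e^{i\theta_k}) = e^{i\Theta/2} \Bigl[2^k \beta \textstyle\prod_{j=1}^k \cos(\theta_j/2) + 2(1-\beta)\cos(\Theta/2)\Bigr],
\end{equation*}
where $\Theta = \sum_j \theta_j$. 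A torus zero corresponds to a real $\beta$ of the form $\beta = -C/(2^{k-1}\prod_j c_j - C)$ with $c_j = \cos(\theta_j/2)$ and $C = \cos(\Theta/2)$, so the task becomes to determine the extreme values this rational expression takes as $(\theta_1, \ldots, \theta_k)$ ranges over $[0, 2\pi]^k$. By symmetry I would restrict to the diagonal $\theta_j = \phi$, on which the expression simplifies to $\beta(\phi) = -\cos(k\phi)/(2^{k-1}\cos^k\phi - \cos(k\phi))$; a short calculation gives $\beta'(\phi) \propto \cos^{k-1}\phi \cdot \sin((k-1)\phi)$, so the interior critical points are $\phi = n\pi/(k-1)$. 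Evaluating at $\phi = 0$ and $\phi = \pi/(k-1)$ recovers the claimed endpoints $-1/(2^{k-1}-1)$ and $1/(2^{k-1}\cos^{k-1}(\pi/(k-1)) + 1)$.

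The converse (tightness) falls out of the same analysis: at each endpoint $\beta^*$, the polynomial $Z_e$ on a single $k$-hyperedge already has a zero on the unit torus (at $\lambda_j = 1$ for the lower endpoint and $\lambda_j = e^{2\pi i/(k-1)}$ for the upper). A direct computation shows these are degenerate zeros (one verifies that $\nabla Z_e$ vanishes there), so a second-order perturbation argument is needed: analyzing the Hessian of $Z_e$ at the extremal point together with $\partial Z_e/\partial \beta$, one concludes that for $\beta = \beta^* \pm \varepsilon$ the degenerate zero splits, with one branch moving to a point with some $|\lambda_j| > 1$, witnessing the failure of LY on the $k$-uniform hypergraph consisting of a single edge.

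The main obstacle will be the global extremization step. Specifically, one must justify rigorously that the extreme real values of $\beta$ attained on the torus-vanishing set occur on the symmetric diagonal $\theta_1 = \cdots = \theta_k$ rather than at some asymmetric configuration, and that among the diagonal critical points $n\pi/(k-1)$ the relevant endpoints are precisely $n = 0$ and $n = 1$. A Lagrange-multiplier reduction combined with careful treatment of the degenerate configurations where some $\cos(\theta_j/2)$ vanishes (so the denominator of $\beta(\theta)$ changes sign or vanishes) should resolve this, together with the Hessian-sign verification needed in the perturbation argument for tightness.
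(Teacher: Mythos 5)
Your overall architecture (Asano's induction via variable-splitting and contraction, reducing to a single-hyperedge base case, then a symmetric extremization to locate the endpoints) matches the paper's, and your diagonal calculation does recover the correct interval endpoints. However, the heart of the base case — the reduction you make in the second paragraph — has a genuine gap, and in one place is simply false.

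You claim that, by continuity in $\beta$, ``it suffices to show that $Z_e$ has no zeros on the unit torus for $\beta$ strictly inside the claimed interval.'' This cannot be the right reduction: the Lee-Yang property (as defined in the paper) forbids zeros only at points with all $\abs{\lambda_i}\ge 1$ and at least one $\abs{\lambda_i}>1$; the full torus $\{\abs{\lambda_i}=1\ \forall i\}$ is explicitly \emph{excluded} from the forbidden set, and indeed every LY polynomial \emph{does} have torus zeros. Concretely, at $\beta=0$ (which is in the interior of your claimed interval for $k\ge 3$), $Z_e=1+\prod_i\lambda_i$ vanishes on a codimension-one subset of the torus, so the statement you are trying to prove is false. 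The auxiliary claim that ``zeros can only enter the forbidden region by first crossing the unit torus'' is also unjustified: a zero with, say, $\abs{\lambda_1}<1$ and $\abs{\lambda_2}>1$ can cross $\abs{\lambda_1}=1$ and land in the forbidden region without the full tuple ever lying on the torus.

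The paper sidesteps this entirely. Instead of tracking torus zeros, it proves a structural lemma (Lemma~\ref{lem:asano-leeyang-eqv}): for a symmetric multilinear polynomial $P=A_jz_j+B_j$ with positive coefficients, $P$ is LY iff each $A_j$ is nonvanishing on the closed exterior polydisk. (The proof of that lemma does use the torus, but only to show that a putative bad zero with exactly one variable strictly outside the torus cannot exist, and then inducts down on the number of strictly-outside variables.) The base case then becomes the algebraic condition $A_1\neq 0$, i.e.\ $1/\beta\neq 1-\prod_{i\ge 2}(1+1/z_i)$ for all $\abs{z_i}\ge 1$. Since $z\mapsto 1+1/z$ maps the closed exterior disk onto the closed disk $D$ of radius $1$ about $1$, this reduces to characterizing the real values taken by $\prod y_i$ with $y_i\in D$, which is a single constrained maximization of a concave function ($\log\cos$) settled by Jensen's inequality (Lemma~\ref{lem:logcos}). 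That optimization is exactly the ``extremization on the diagonal'' you flag as your main open step, but now it falls out cleanly rather than requiring a Lagrange-multiplier analysis of a ratio over the torus. Finally, for the multivariate tightness statement you do not need any Hessian or degeneracy analysis: the paper simply exhibits, for each $\beta$ outside the claimed range, an explicit point with $\abs{z_i}\ge 1$ at which $A_1$ vanishes, and invokes Lemma~\ref{lem:asano}. I'd recommend replacing your continuity-to-the-torus reduction with the $A_j$-characterization route; the rest of your framework is sound.
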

Note that the case $k=2$ is just the original Lee-Yang theorem (Theorem~\ref{thm:leeyang-graph}).

The following corollary for the univariate polynomial $Z_G^{\vec{\beta}}(\lambda)$
follows immediately via~eq.\nobreakspace \textup {(\ref {eqn:multivariate-symmetry})}
and the fact that, by Hurwitz's theorem, the zeros of $Z_G^{\vec{\beta}}(\lambda)$ are continuous functions of $\vec{\beta}$ and thus remain on the unit circle after taking the limit
in the range of each~$\beta_e$.
\begin{corollary}
  Let $G=(V,E)$ be a connected hypergraph, and $\vec{\beta} = (\beta_e)_{e \in E}$ be the vector of real valued
  hyperedge activities so that the activity of edge $e \in E$ is
  $\beta_e$.  Then, all complex zeros of the univariate partition
  function $Z_G^{\vec{\beta}}(\lambda)$ lie on the unit circle if
  the following condition holds for every hyperedge $e$, where $k\ge 2$ is the size of $e$:
	\begin{itemize}
    \item if $k =2$, then $-1 \leq \beta_e \leq 1$;
    \item if $k \geq 3$, then
      $-\frac{1}{2^{k-1} -1} \le \beta_e \le \frac{1}{2^{k-1}\cos^{k-1}\left(
          \frac{\pi}{k-1} \right) +1}$.
	\end{itemize}
	\label{cor:hyper-leeyang}
\end{corollary}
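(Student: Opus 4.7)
The plan is to follow Asano's inductive strategy: first establish the Lee-Yang property for the single-hyperedge partition function, then propagate it to arbitrary hypergraphs via the Asano--Ruelle contraction lemma, which asserts that if a multi-affine polynomial $P = A + Bz_1 + Cz_2 + Dz_1 z_2$ has the LY property, then so does the contracted polynomial $A + Dz$. Combined with the easy fact that a product of LY polynomials on disjoint variable sets remains LY, this gives a mechanism to assemble a global LY polynomial out of single-hyperedge factors.

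Concretely, for each hyperedge $e$ I would introduce fresh copies $\{z_v^e\}_{v \in e}$ of the vertex variables and write the single-hyperedge partition function as
\begin{equation*}
  P_e \;=\; \beta_e \prod_{v \in e}\bigl(1 + z_v^e\bigr) \;+\; (1 - \beta_e)\Bigl(1 + \prod_{v \in e} z_v^e\Bigr).
\end{equation*}
Taking the product $\prod_e P_e$ over all hyperedges and then performing an Asano contraction on each pair of copies of a given vertex in turn identifies all $d_v$ copies of $v$ with a single variable $\lambda_v$. A direct expansion matches the result with $Z_G^{\vec{\beta}}(\lambda_1,\dots,\lambda_n)$: Asano contraction retains only the monomials in which the merged copies share a state, and iterating this enforces the same state on all copies of $v$, reproducing the cut structure in the definition of $Z_G^{\vec{\beta}}$. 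Since each contraction preserves the LY property, the entire proof reduces to the single-hyperedge base case.

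The main obstacle is pinpointing the range of $\beta$ for which $P_e$ itself has the LY property. Writing $A(\mathbf{z}) = \prod_{v \in e}(1+z_v)$ and $B(\mathbf{z}) = 1 + \prod_{v \in e} z_v$, the equation $P_e(\mathbf{z}) = 0$ becomes (for $\beta \neq 0$) $A(\mathbf{z})/B(\mathbf{z}) = 1 - 1/\beta$, so I need to understand the image of the rational map $\mathbf{z} \mapsto A(\mathbf{z})/B(\mathbf{z})$ over $\{\mathbf{z} : |z_v| \geq 1 \text{ for all } v \in e\}$. A maximum-principle and symmetry argument should reduce the extremal analysis to the diagonal torus $z_v = e^{i\theta}$, on which the ratio simplifies to $f(\theta) := 2^{k-1}\cos^k(\theta/2)/\cos(k\theta/2)$. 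A direct calculation identifies $f(0) = 2^{k-1}$ and $f(2\pi/(k-1)) = -2^{k-1}\cos^{k-1}(\pi/(k-1))$ as the real extrema of $f$, and setting $1 - 1/\beta$ equal to each of these reproduces exactly the theorem's endpoints $-1/(2^{k-1}-1)$ and $1/(2^{k-1}\cos^{k-1}(\pi/(k-1))+1)$. The technical heart is verifying that these one-dimensional extrema really do control the full multivariate image on the closed polydisc; I expect this to require a direct boundary optimization, since the answer depends delicately on $k$ and soft tools such as Grace's theorem seem unlikely to yield these particular $k$-dependent constants.

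For the tightness direction, a single $k$-uniform hyperedge already serves as a witness hypergraph. When $\beta$ leaves the stated range, the extremal analysis exhibits an explicit torus configuration $\mathbf{z}^*$ at which $P_e$ vanishes at the critical value $\beta_c$; by continuity of the zero set in $\beta$, a small perturbation moves a zero of $P_e$ into the region $\{|z_v| \geq 1, |z_{v_0}| > 1 \text{ for some } v_0\}$, so $Z_H^\beta = P_e$ fails the LY property and the stated range is sharp.
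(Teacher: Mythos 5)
The paper's own proof of this corollary is a short deduction from Theorem~\ref{thm:hyper-leeyang}: apply the multivariate Lee--Yang property on the \emph{open} range of $\beta_e$, use the symmetry $Z_G^{\vec\beta}(\lambda)=\lambda^n Z_G^{\vec\beta}(1/\lambda)$ of eq.~\eqref{eqn:multivariate-symmetry} to place the univariate zeros on the unit circle, then pass to the closed endpoints by Hurwitz's theorem. Your proposal re-derives the multivariate theorem from scratch via Asano contraction, which is the paper's route to Theorem~\ref{thm:hyper-leeyang}, so the high-level plan coincides; but the single-hyperedge base case is an acknowledged gap, and the route you sketch for closing it does not work. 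Analysing $P_e(\mathbf z)=0$ directly, as $A/B=1-1/\beta$ with $A=\prod_v(1+z_v)$ and $B=1+\prod_v z_v$, is the wrong framing. Lee--Yang permits zeros on the distinguished torus, so the diagonal torus $\{z_v=e^{\iota\theta}\}$, which lies entirely in the permissive boundary, cannot by itself demarcate the LY range. Worse, the real image of $A/B$ over all of $\{|z_v|\ge 1\}$ is essentially the whole real line: the diagonal torus already produces $(-\infty,\,-2^{k-1}\cos^{k-1}(\pi/(k-1))]\cup[2^{k-1},\infty)$, while sending $z_1\to\infty$ with the remaining $z_i$ on the circle drives $A/B$ to $\prod_{i\ge 2}(1+1/z_i)$, which fills the complementary interval. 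So there is no ``gap'' for $1-1/\beta$ to avoid, and the ``extremal control'' you hope the diagonal provides cannot hold in the form you state. The paper circumvents this via Lemma~\ref{lem:asano} and Lemma~\ref{lem:asano-leeyang-eqv}, which reduce LY of $P_e$ to non-vanishing of the \emph{coefficient} polynomial $A_1$ of $z_1$ on the closed set $\{|z_i|\ge 1: 2\le i\le k\}$; this turns the problem into characterising the real values of the $(k-1)$-fold product $\prod_{i\ge 2}(1+1/z_i)$ of \emph{independent} factors, each ranging over a fixed disk, which Lemma~\ref{lem:logcos} handles by concavity. You land on the correct numerical endpoints only because the critical values of your $f(\theta)$ happen to equal the paper's $(k-1)$-variable extrema; the argument connecting them is missing, and I do not see how to supply it without an analogue of Lemmas~\ref{lem:asano}--\ref{lem:asano-leeyang-eqv}.

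Two secondary gaps. The assertion that ``a product of LY polynomials on disjoint variable sets remains LY'' is false under the paper's definition: a torus zero of one factor, combined with any assignment strictly outside the closed unit disk in a variable of the other factor, is a zero of the product that violates LY. The paper avoids this by always attaching a new hyperedge through a \emph{shared} vertex (operation~1 in the proof of Theorem~\ref{thm:hyper-leeyang}) rather than forming a free product over disjoint variable sets. Finally, the deduction from the multivariate LY property to the univariate unit-circle statement (via eq.~\eqref{eqn:multivariate-symmetry}), together with the Hurwitz passage from the open to the closed range of $\beta_e$, are both needed for the corollary as stated but do not appear in your sketch.
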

The corollary establishes the first part of
Theorem~\ref{thm:leeyangintro} in the introduction, and hence also
Theorem~\ref{thm:main2} as explained at the end of the previous
section.  The second part of Theorem~\ref{thm:leeyangintro}, which
asserts that the range of edge activities under which the theorem
holds is optimal, is proven in Section~\ref{sec:tight-ly}.
(Note that the optimality for the univariate case claimed in
Theorem~\ref{thm:leeyangintro} does not directly follow from the
optimality for the multivariate case guaranteed by
Theorem~\ref{thm:hyper-leeyang} above.)

\begin{remark}
As a comparison, the result of Suzuki and Fisher, 
which we restated in~\MakeUppercase Theorem\nobreakspace \ref {thm:suzuki-fisher}, implies that a sufficient condition
for the Lee-Yang property of $Z_G^{\vec{\beta}}(\lambda)$ is
	\[
		-\frac{1}{2^{k-1} -1} \le \beta_e \le \frac{1}{2^{k-1} -1}.
	\]
	Note that while the lower bound on $\beta_e$ is the same as ours,
    our (tight) upper bound is always better, and significantly so for
    the more interesting case of small $k$.  For example, for $k=3$ our
    result gives the optimal range $-\frac{1}{3} \le \beta_e \le 1$,
    while the Suzuki-Fisher theorem gives
    $-\frac{1}{3} \le \beta_e \le \frac{1}{3}$.  Similarly, for
    $k = 4$ the respective ranges are $[-1/7, 1/2]$ (for ours) and
    $[-1/7, 1/7]$ (for Suzuki-Fisher).
\end{remark}
We turn now to the proof of Theorem~\ref{thm:hyper-leeyang}.  The main
technical step in our proof is to derive conditions under which the
Ising partition function of a hypergraph consisting of a {\it single
  hyperedge\/} has the Lee-Yang property. This ``base case'' turns out
to be more difficult than in the case of the original Lee-Yang theorem
for graphs.  However, as in the graph case, it will turn out that the
base case still determines the range of $\beta$ in which the
theorem can be claimed to be valid; we show this latter claim, which
implies the second part of Theorem~\ref{thm:hyper-leeyang} in
Section~\ref{sec:tight-ly}.

We begin with the following two lemmas which, taken together, give a
partial characterization of the Lee-Yang property.  While similar in
spirit to the results of Ruelle~\cite{ruelle_characterization_2010},
these lemmas do not follow directly from those results since, as noted
above, the version of the Lee-Yang property used here imposes a
stricter condition on the polynomial than does the definition used
in~\cite{ruelle_characterization_2010}. %

\begin{lemma}
  Given a multilinear polynomial $P(z_1, z_2, \dots, z_n)$ with real
  coefficients define, for each $1 \leq j \leq n$, multilinear
  polynomials $A_j$ and $B_j$ in the variables $z_1, \dots, z_{j-1},
  z_{j+1}, \dots, z_n$ such that
  \[
    P = A_j z_j + B_j.
  \]
If $P$ has the Lee-Yang property then, for every $j$ such that the
  variable $z_j$ has positive degree in $P$, it holds that
  $A_j(z_1, \dots, z_{j-1}, z_{j+1}, \dots, z_n) \neq 0$ when
  $\abs{z_i} \ge 1$ for all $i \neq j$.  In particular, $A_j$ itself is
  LY.
  \label{lem:asano}
\end{lemma}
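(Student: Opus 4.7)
The plan is a proof by contradiction combined with a small perturbation argument. Without loss of generality take $j = n$ and suppose, for contradiction, that $A_n(z^*) = 0$ at some $z^* = (z_1^*, \ldots, z_{n-1}^*)$ with $\abs{z_i^*} \ge 1$ for every $i$. The strategy is to locate a zero of $P$ all of whose coordinates have modulus strictly greater than $1$, which contradicts the Lee-Yang hypothesis on $P$.

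First I would dispose of the sub-case $B_n(z^*) = 0$. If both $A_n(z^*) = 0$ and $B_n(z^*) = 0$, then $P(z_1^*, \ldots, z_{n-1}^*, z_n) = A_n(z^*)\,z_n + B_n(z^*) \equiv 0$ as a polynomial in $z_n$. Taking any $z_n$ with $\abs{z_n} > 1$ yields a point in $\C^n$ at which every coordinate has modulus at least $1$ and some coordinate has modulus strictly greater than $1$, yet $P$ vanishes, contradicting LY. So we may assume $B_n(z^*) \neq 0$.

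Next I would perturb. Since $z_n$ has positive degree in $P$, the polynomial $A_n$ is not identically zero in $(z_1, \ldots, z_{n-1})$, so its zero set is a proper algebraic subvariety of $\C^{n-1}$. Let $U \defeq \{(z_1, \ldots, z_{n-1}) : \abs{z_i} > 1 \text{ for all } i\}$. Then $z^* \in \overline{U}$ (outward radial perturbations of $z^*$ land in $U$), and $\{A_n \neq 0\} \cap U$ is open and dense in $U$. Hence we can pick a sequence $z^{*(k)} \in U$ with $A_n(z^{*(k)}) \neq 0$ and $z^{*(k)} \to z^*$. By continuity, $A_n(z^{*(k)}) \to A_n(z^*) = 0$ while $B_n(z^{*(k)}) \to B_n(z^*) \neq 0$, so in particular $B_n(z^{*(k)})$ is bounded away from $0$ for large $k$.

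Finally, set $z_n^{(k)} \defeq -B_n(z^{*(k)}) / A_n(z^{*(k)})$, the unique root of $P(z^{*(k)}, \,\cdot\,)$ in its last argument. As the denominator tends to $0$ while the numerator stays bounded away from $0$, $\abs{z_n^{(k)}} \to \infty$, so $\abs{z_n^{(k)}} > 1$ eventually. At such $k$ every coordinate of $(z^{*(k)}, z_n^{(k)})$ has modulus strictly greater than $1$, while $P$ vanishes there, contradicting LY. This proves $A_j \neq 0$ throughout $\{\abs{z_i} \ge 1 : i \neq j\}$; the ``in particular'' statement that $A_j$ is itself LY is then immediate, since the LY property demands non-vanishing only on the strictly smaller set where some $\abs{z_i} > 1$. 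The main technical point is the perturbation step, which must simultaneously arrange $\abs{z_i^{*(k)}} > 1$ and $A_n(z^{*(k)}) \neq 0$; both conditions reduce to the standard fact that a proper analytic subset of an open set in $\C^{n-1}$ cannot be dense, and so outward perturbations of $z^*$ generically succeed.
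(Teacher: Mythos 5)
Your proof is correct and follows essentially the same approach as the paper's: argue by contradiction, observe that $B_j$ must be nonzero (else any large $z_j$ gives a zero of $P$), and exploit the fact that the nonzero multilinear polynomial $A_j$ cannot vanish identically on any open set, so one can find nearby points with all coordinates strictly outside the unit disk where $A_j$ is small but nonzero, forcing $-B_j/A_j$ to have modulus greater than $1$. The paper phrases the perturbation via $\varepsilon$--$\delta$ neighborhoods and an explicit lower bound $\delta_0$ on $|B_j|$, whereas you take a sequence and invoke density of the non-vanishing set; these are equivalent.
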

\begin{proof}
  Without loss of generality, we assume that $j = 1$.  Note that since
  $z_1$ has positive degree in $P$, $A_1$ is a non-zero polynomial.
  Suppose that, in contradiction to the claim of the lemma, there
  exist complex numbers $\lambda_2, \dots, \lambda_n$ satisfying
  $\abs{\lambda_i} \geq 1$ such that
  $A_1(\lambda_2, \dots, \lambda_n) = 0$.  Since $P$ is LY, it follows
  that $B_1(\lambda_2, \dots, \lambda_n) \neq 0$ (for otherwise, we
  get a contradiction to the Lee-Yang property by choosing $z_1$ to be
  an arbitrary value outside the closed unit disk).

  By continuity, this implies that $\abs{B_1}$ is positive in any
  small enough neighborhood of $(\lambda_2, \dots, \lambda_n)$ in
  $\C^{n-1}$.  In particular, let $S_\varepsilon$ be the open set
  \[
    S_\varepsilon \defeq \inbr{ (y_2, \dots, y_n) \st \abs{y_i - \lambda_i} <
      \varepsilon \text{ and } \abs{y_i} > 1 \text{ for $2 \leq i \leq n$ }}.
  \]
  Then there exist positive $\delta_0$ and $\varepsilon_0$ such that
  $\abs{B_1}$ is at least $\delta_0$ in the open set $S_\varepsilon$ when
  $\varepsilon < \varepsilon_0$.

  Now, since $A_1$ is a non-zero multilinear polynomial, it cannot
  vanish identically on any open set.  In particular, it cannot vanish
  identically in $S_\varepsilon$ for any $\varepsilon > 0$.  On the other
  hand, since $A_1$ vanishes at $(\lambda_2, \dots, \lambda_n)$ it
  follows from continuity that for $\varepsilon < \varepsilon_0$ small
  enough, $\abs{A_1} \leq \delta_0/2$ in $S_\varepsilon$.  Since $A_1$
  does not vanish identically on $S_\varepsilon$, there must exist a
  point $(y_2, \dots, y_n)$ in $S_\varepsilon$ such that
  $0 < \abs{A_1(y_2, \dots, y_n)} < \delta_0/2$.  Since
  $\abs{B_1(y_2, \dots, y_n)} \geq \delta_0$ by the choice of
  $\varepsilon_0$, it follows that if we define
  $y_1 = -B_1(y_2, \dots, y_n)/A_1(y_2, \dots, y_n)$ then
  $2 < \abs{y_1} < \infty$.  However, we then have
  $P(y_1, y_2, \dots, y_n) = 0$ even though $\abs{y_1} > 1$ and
  $\abs{y_i} \geq 1$ for all $i$.  This contradicts the Lee-Yang
  property of $P$.
\end{proof}

By iterating the above lemma, we get the following corollary.
\begin{corollary}
  \label{cor:asano-iter}
  Let $P(z_1, z_2, \dots, z_n)$ be a multilinear polynomial with
  non-zero real coefficients, i.e.,
  \begin{displaymath}
    P(z_1, \dots, z_n) = \sum_{S \subseteq [n]} p_S\prod_{i \in S}z_i,
  \end{displaymath}
  where $p_S \in \R$ are non-zero for all $S \subseteq [n]$, and assume that $P$ is LY.  Then, for every subset
  $S$ of $[n]$, the polynomial $A_S$ defined by the equation
  \begin{displaymath}
    P(z_1, \dots, z_n) = A_S\inp{(z_i)_{i \not\in S}}
    \prod_{i \in S}z_i + \sum_{T:\,S \not \subseteq T} p_T\prod_{i \in T}z_i
  \end{displaymath}
  has the property that $A_S\inp{(z_i)_{i \not\in S}} \neq 0$ when
  $\abs{z_i} \geq 1$ for all $i \not\in S$.  In particular, $A_S$ is LY.
\end{corollary}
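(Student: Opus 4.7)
The plan is to prove the corollary by induction on $|S|$, using Lemma \ref{lem:asano} as the inductive engine. For $|S|=0$, the claim $A_\emptyset = P$ being LY is just the hypothesis. For $|S|=1$, say $S=\{j\}$, the corollary's definition of $A_{\{j\}}$ coincides with the polynomial $A_j$ from Lemma \ref{lem:asano}; since the hypothesis that every coefficient $p_T$ is non-zero implies in particular that $p_{\{j\}} \ne 0$, the variable $z_j$ has positive degree in $P$, and Lemma \ref{lem:asano} applies directly.

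For the inductive step, suppose the corollary is established for all subsets of size $< |S|$ and all multilinear polynomials with non-zero real coefficients that satisfy the LY property. Pick any $j \in S$ and write $P = A_{\{j\}} z_j + B_{\{j\}}$ as in Lemma \ref{lem:asano}. The key observation is that the polynomial $A_{\{j\}}$, viewed as a multilinear polynomial in the $n-1$ variables $(z_i)_{i \ne j}$, has the expansion
\begin{displaymath}
  A_{\{j\}}\bigl((z_i)_{i \ne j}\bigr) = \sum_{T \subseteq [n]\setminus\{j\}} p_{T \cup \{j\}} \prod_{i \in T} z_i,
\end{displaymath}
so all of its coefficients are non-zero (they are a subset of the $p_S$, which are non-zero by assumption), and by Lemma \ref{lem:asano} it is LY. Consequently, $A_{\{j\}}$ satisfies the hypothesis of the corollary on the index set $[n]\setminus\{j\}$. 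Applying the inductive hypothesis to $A_{\{j\}}$ with the subset $S \setminus \{j\}$ (which has size $|S|-1$) yields a polynomial, call it $A'_{S\setminus\{j\}}$, in the variables $(z_i)_{i \notin S}$ that is LY.

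Finally, I would verify that $A'_{S\setminus\{j\}}$ is exactly the polynomial $A_S$ defined in the statement of the corollary. This is a straightforward bookkeeping step: expanding $A_{\{j\}}$ via the decomposition used in the corollary for the subset $S\setminus\{j\}$, the coefficient of $\prod_{i \in S \setminus \{j\}} z_i$ is precisely $\sum_{U \subseteq [n]\setminus S} p_{S \cup U} \prod_{i \in U} z_i$, which matches $A_S$ as defined in the original statement (where one factors out the full product $\prod_{i \in S} z_i$ from $P$). The main (minor) obstacle here is just keeping the indexing of the ``remaining coefficients'' consistent through the iteration; no further analytic input beyond Lemma \ref{lem:asano} is needed. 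Since $A_S$ is LY, its non-vanishing on the closed polydisk $\{|z_i| \ge 1 : i \notin S\}$ follows by definition of the Lee-Yang property together with an approximation argument (replacing any $z_i$ on the unit circle by $(1+\varepsilon)z_i$ and letting $\varepsilon \downarrow 0$, using continuity of $A_S$), completing the proof.
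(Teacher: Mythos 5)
Your approach---inducting on $\abs{S}$ and iterating Lemma~\ref{lem:asano}---is exactly what the paper intends by ``iterating the above lemma,'' and the structure of your induction (peel off one $j\in S$, apply Lemma~\ref{lem:asano} to get that $A_{\{j\}}$ is a multilinear polynomial with non-zero real coefficients that is LY, then apply the inductive hypothesis to $A_{\{j\}}$ and $S\setminus\{j\}$, and finally check the bookkeeping identity $A'_{S\setminus\{j\}}=A_S$) is correct.

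However, your final sentence contains an invalid step. You write that since $A_S$ is LY, non-vanishing on the \emph{closed} region $\{\abs{z_i}\ge 1:i\notin S\}$ (including the torus where all $\abs{z_i}=1$) ``follows by an approximation argument, replacing $z_i$ by $(1+\varepsilon)z_i$ and letting $\varepsilon\downarrow 0$, using continuity.'' This reasoning is wrong: the LY property says nothing at points where all $\abs{z_i}=1$, and a limit of non-vanishing values can be zero. (For a concrete counterexample to the general principle, $z+1$ is LY in one variable but vanishes at $z=-1$.) The good news is you do not need this argument at all: both Lemma~\ref{lem:asano} and the corollary itself assert the \emph{stronger} conclusion that the relevant coefficient polynomial is non-zero whenever $\abs{z_i}\ge 1$, not merely that it is LY. So when you apply the inductive hypothesis to $A_{\{j\}}$ with the subset $S\setminus\{j\}$, you should extract that stronger statement directly---namely that $A'_{S\setminus\{j\}}\neq 0$ whenever $\abs{z_i}\ge 1$ for all $i\notin S$---rather than downgrading it to ``is LY'' and then trying (unsuccessfully) to climb back up. With that one-word fix the induction closes and the proof is complete.
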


We next show that \MakeUppercase Lemma\nobreakspace \ref {lem:asano} has a partial converse for
symmetric multilinear functions.
\begin{lemma}
  Let $P(z_1, z_2, \dots, z_n)$ be a symmetric multilinear polynomial
  with non-zero real coefficients, i.e.,
  \begin{displaymath}
    P(z_1, \dots, z_n) = \sum_{S \subseteq [n]} p_S\prod_{i \in S}z_i,
  \end{displaymath}
  where $p_S \neq 0$ for all $S \subseteq [n]$ and
  $p_S = p_{\overline{S}}$.
  Assume further that the polynomials $A_j$
  as defined in \MakeUppercase Lemma\nobreakspace \ref {lem:asano} all have the property that they are
  non-zero when all their arguments $z_i$ satisfy $\abs{z_i} \ge 1$.
  Then $P$ is LY.
  \label{lem:asano-leeyang-eqv}
\end{lemma}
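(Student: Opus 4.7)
The plan is to exploit the symmetry $p_S = p_{\overline{S}}$ to derive a functional equation for~$P$, and combine it with the non-vanishing hypothesis on all the~$A_j$'s to bound, via a maximum-modulus argument on a polydisc, the unique root of~$P$ in each variable~$z_j$ (with the other $z_i$'s fixed in the outer polydisc $\{|z_i|\ge 1\}$). This will preclude any root of~$P$ from having a coordinate of modulus strictly greater than~$1$.

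First, from $p_S = p_{\overline{S}}$, a direct relabeling of the sum in $\prod_i z_i \cdot P(1/z_1,\dots,1/z_n)$ yields the functional equation $P(z_1,\dots,z_n) = z_1 \cdots z_n \cdot P(1/z_1,\dots,1/z_n)$. Substituting the Asano-style decomposition $P = A_j(z_{-j})\,z_j + B_j(z_{-j})$ on both sides and comparing coefficients of $z_j^0$ and $z_j^1$ produces the central identity
\[
  B_j(z_{-j}) \;=\; \prod_{i\neq j} z_i \cdot A_j(1/z_{-j}).
\]
Since $A_j$ has real coefficients, on the distinguished torus $\{|z_i|=1 : i\neq j\}$ we have $1/z_i = \overline{z_i}$ and $A_j(1/z_{-j}) = \overline{A_j(z_{-j})}$; this, combined with the identity, gives $|B_j(z_{-j})| = |A_j(z_{-j})|$ on the torus. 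Consequently the rational function $g(z_{-j}) \defeq B_j/A_j$ has modulus $1$ there.

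By the hypothesis, $A_j$ does not vanish on the closed outer polydisc $\{|z_i|\ge 1\}$, so $g$ is holomorphic there. Passing to $w = 1/z$ coordinates and applying the identity once more, one computes $\tilde g(w) \defeq g(1/w) = A_j(w)/B_j(w)$, a rational function that is holomorphic on the interior of the closed polydisc $\{|w_i|\le 1\}$ away from the coordinate hyperplanes, and that satisfies $|\tilde g| = 1$ on the distinguished boundary $\{|w_i| = 1\}$. The polydisc maximum-modulus principle then yields $|\tilde g(w)| \le 1$ on $\{|w_i|\le 1\}$, equivalently $|g(z_{-j})| \le 1$ on $\{|z_i| \ge 1\}$. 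To conclude, suppose for contradiction that $P(\lambda) = 0$ with $|\lambda_i| \ge 1$ for all $i$ and $|\lambda_{j^*}| > 1$ for some $j^*$. Applying the bound with $j = j^*$, the hypothesis gives $A_{j^*}(\lambda_{-j^*}) \ne 0$, so $\lambda_{j^*} = -g(\lambda_{-j^*})$, whence $|\lambda_{j^*}| = |g(\lambda_{-j^*})| \le 1$, contradicting $|\lambda_{j^*}| > 1$. Hence no such $\lambda$ exists and $P$ is LY.

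The main obstacle is the careful application of the maximum-modulus principle in the step above: $\tilde g$ is a priori only guaranteed to be holomorphic on $\{|w_i| \le 1,\ w_i \ne 0\}$ and may have singularities on the codimension-one coordinate hyperplanes $\{w_{i_0} = 0\}$ inside the closed polydisc, corresponding to ``points at infinity'' in the original $z$-coordinates. I plan to handle these by invoking the hypothesis on every $A_\ell$ (not just on $A_j$): a genuine pole of $\tilde g$ along $\{w_{i_0} = 0\}$ can be tracked, via the functional equation applied to the Taylor expansions $A_j = \phi + w_{i_0}\psi$ and $B_j = \chi + w_{i_0}\omega$ in the variable $w_{i_0}$, to the simultaneous vanishing of certain sub-polynomials of the $A_\ell$'s at a point with all other coordinates of modulus $\ge 1$, contradicting the hypothesis. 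Once such poles are ruled out, the Riemann removable-singularity theorem in several complex variables extends $\tilde g$ holomorphically across each $\{w_i = 0\}$ to the entire closed polydisc, and the standard polydisc maximum-modulus principle applies directly.
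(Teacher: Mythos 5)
Your proof is correct in its central idea and takes a genuinely different route from the paper's.  The paper proves the lemma by an elementary two-step deformation: first it observes (via the same symmetry computation you use) that a root of $P$ with all but one coordinate on the unit circle forces the last coordinate onto the circle; then, if two or more coordinates lie strictly outside the closed disk, it fixes all but two variables, solves $P=0$ for $z_1$ as a M\"obius function of $z_2$, and lets $z_2\to\infty$ to pull $z_1$ continuously onto the unit circle, reducing the count of ``outside'' coordinates by one.  You instead package the symmetry into the functional identity $B_j(z_{-j})=\prod_{i\neq j}z_i\cdot A_j(1/z_{-j})$, conclude $\abs{B_j/A_j}=1$ on the distinguished torus, and invoke the maximum-modulus principle on the polydisc in the $w=1/z$ chart.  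Both proofs ultimately rest on the same two ingredients (the reflection symmetry $p_S=p_{\overline S}$ and the non-vanishing of the $A$'s away from the unit torus), but yours replaces the explicit deformation with a single max-modulus estimate; this is arguably cleaner once the boundary-at-infinity issue is resolved.

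One point to tighten, and it is exactly the one you flag as ``the main obstacle.''  To rule out poles of $\tilde g=A_j/B_j$ on the coordinate hyperplanes you cannot appeal directly to ``the hypothesis on every $A_\ell$'': the hypothesis controls only the one-variable coefficients $A_\ell$, whereas the restriction $B_j|_{w_T=0}$ (for $T\neq\emptyset$) is controlled by the multi-index coefficient $A_{T\cup\{j\}}$.  Concretely, the same symmetry computation you did for $B_j$ gives
\[
B_j\big|_{w_T=0}(w)\;=\;\prod_{i\notin T\cup\{j\}}w_i\cdot A_{T\cup\{j\}}(1/w),
\]
so that non-vanishing of $B_j$ at a point with $w_T=0$ and $0<\abs{w_i}\le 1$ for $i\notin T\cup\{j\}$ is equivalent to $A_{T\cup\{j\}}\neq 0$ on $\{\abs{z_i}\ge 1\}$.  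That statement is not the hypothesis; it follows from the hypothesis by \emph{iterating} Lemma~\ref{lem:asano} (the hypothesis makes each $A_\ell$ non-vanishing on the closed outer polydisc, hence LY; Lemma~\ref{lem:asano} applied to $A_\ell$ then makes each $A_{\{\ell,i\}}$ non-vanishing there; and so on).  Together with $B_j(0,\dots,0)=p_{\emptyset}\neq 0$, this shows $B_j$ is nowhere zero on the closed polydisc, so $\tilde g$ is already a holomorphic rational function there and you do not even need a removable-singularity theorem.  This is also exactly what the paper does (once, for $\alpha_{12}$) in its deformation step, so the two proofs need the same iterated use of Lemma~\ref{lem:asano}; you simply need to iterate it all the way down.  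With that step spelled out, your argument is complete.
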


\begin{proof}
  We first show that, under our assumptions, if all but one of the
  $z_j$ lie on the unit circle, then $P$ can only vanish if the
  remaining $z_j$ is also on the unit circle.  Without loss of
  generality we set $j=1$, that is, we will show that if
  $\abs{z_i} = 1$ for $i\ge 2$, then any root $z_1=\zeta_1$ of the
  equation $A_1z_1 + B_1 = 0$ satisfies $\abs{\zeta_1} = 1$.  (Here
  $A_1$ and $B_1$ are in the notation of \MakeUppercase Lemma\nobreakspace \ref {lem:asano}.)

  Since by assumption
  $A_1 = \sum_{S \subseteq [2,n]}p_{S \cup \inbr{1}}\prod_{i \in
    S}z_i$ does not vanish with this setting of the $z_i$, we have
  \begin{align}
    \abs{\zeta_1} = \abs{
    \frac{B_1}{A_1}
    }
    &= \abs{
      \frac{
      \sum_{S \subseteq [2,n]}p_{S}\prod_{i \in S}z_i
      }{
      \sum_{S \subseteq [2,n]}p_{S \cup \{1\}}\prod_{i \in S}z_i
      }}
      = \abs{
      \inp{\prod_{i \in [2, n]} z_i}
      \frac{
      \sum_{S \subseteq [2,n]} p_{S}
      \prod_{\substack{i \not\in S\\i \neq 1}}(1/z_i)
    }{
    \sum_{S \subseteq [2,n]}p_{S \cup \{1\}}\prod_{i \in S}z_i
    }}\nonumber \\
    &\overset{(\star)}{=}
      \abs{
      \frac{
      \sum_{S \subseteq [2,n]} p_{S}
      \prod_{\substack{i \not\in S\\i \neq 1}}
    \overline{z_i}
    }{
    \sum_{S \subseteq [2,n]}p_{S \cup \{1\}}\prod_{i \in S}z_i
    }}
    \overset{(\dagger)}{=}
    \abs{
    \frac{
    \sum_{S \subseteq [2,n]} p_{S \cup \{1\}}
    \prod_{i \in S}\overline{z_i}
    }{
    \sum_{S \subseteq [2,n]}p_{S \cup \{1\}}\prod_{i \in S}z_i
    }} = 1.\label{eq:1}
  \end{align}
  Here $(\star)$ uses the fact that $\abs{z_i} = 1$ for $i \geq 2$ and
  $(\dagger)$ uses the symmetry of $P$.
We have thus shown that if $(z_1, z_2, \dots, z_n)$ is a zero of $P$
  such that $\abs{z_i} \geq 1$ for all $i$ then it is impossible for
  only one $z_i$ to lie outside the closed unit disk.

  We now show
  that if there are $k \geq 2$ values of $i$ for which $z_i$ lies
  outside the closed unit disk, then we can find another zero
  $(\zeta_1, \zeta_2, \zeta_3, \dots, \zeta_n)$ of $P$ such that
  $\abs{\zeta_i} \geq 1$ for all $i$, and exactly $k - 1$ of the
  $\zeta_i$ lie outside the closed unit disk.  We can then iterate
  this process to reduce $k$ to $1$, in which case the observation
  in the previous paragraph leads to a contradiction.

  By re-numbering the indices if needed, we can assume that
  $\abs{z_1}, \abs{z_2} > 1$ and $\abs{z_i} \geq 1$ for $i \geq 3$.
  We can then write
  \begin{displaymath}
    P(z_1, \dots, z_n) = \alpha_{12} z_1z_2 +
    \alpha_{1}z_1 + \alpha_{2}z_2 + \alpha_{\emptyset},
  \end{displaymath}
  where $\alpha_{12}, \alpha_{1}, \alpha_{2}$ and $\alpha_{\emptyset}$
  are non-zero polynomials in $z_3, \dots, z_n$.  Further, the
  hypotheses of the lemma imply that $A_1 = \alpha_{12}z_2 + \alpha_1$
  and $A_2 = \alpha_{12}z_1 + \alpha_{2}$ both have the Lee-Yang
  property.  Thus, by \MakeUppercase Lemma\nobreakspace \ref {lem:asano},
  $\alpha_{12}(z_3, \dots, z_n) \neq 0$ when $\abs{z_i} \geq 1$ for
  $i \geq 3$.  Now, again by hypothesis, $A_2 \neq 0$ when $\abs{z_1}$ and
  $\abs{z_3}, \dots, \abs{z_n}$ are at least $1$, while
  $z_1 = -\frac{\alpha_2(z_3, \dots, z_n)}{\alpha_{12}(z_3, \dots,
    z_n)}$ gives $A_2=0$.  Thus, we must have that
  \begin{equation}
    \frac{
      \abs{\alpha_2(z_3, \dots, z_n)}
    }{
      \abs{\alpha_{12}(z_3, \dots, z_n)}
    } < 1
    \text{ when $\abs{z_i} \geq 1$ for $i \geq 3$}.\label{eq:2}
  \end{equation}
  We now set $\zeta_i = z_i$ for $i \geq 3$, and consider $z_1$ as a
  function of $z_2$.  The equality $P(z_1, z_2, \zeta_3, \dots, \zeta_n) = 0$
  is then equivalent to
  \begin{equation}
    \label{eq:3}
    z_1 = -\frac{\alpha_2 z_2 + \alpha_\emptyset}{\alpha_{12}z_2 + \alpha_1},
  \end{equation}
  where the hypotheses of the lemma imply that the denominator
  (which is equal to $A_1(z_2, \zeta_3, \dots, \zeta_n)$) is non-zero when
  $\abs{z_2} \geq 1$.  We thus see that
  \begin{equation}
    \lim_{z_2 \rightarrow \infty}\abs{z_1}
    = \frac{\abs{\alpha_2}}{\abs{\alpha_{12}}}
    < 1. \label{eq:4}
  \end{equation}
  Initially, both $z_1$ and $z_2$ lie outside the closed unit disk.
  Thus, by eq.\nobreakspace \textup {(\ref {eq:4})} and continuity, we can take $z_2$ large enough
  in absolute value such that $z_1$ as defined in eq.\nobreakspace \textup {(\ref {eq:3})} lies on
  the unit circle.  We now choose $\zeta_1$ and $\zeta_2$ to be these
  values of $z_1$ and $z_2$, respectively, so that we have
  $P(\zeta_1, \dots, \zeta_n) = 0$ and the number of the $\zeta_i$
  lying on the unit circle is exactly one less than the number of the
  $z_i$ lying on the unit circle, as required.
\end{proof}

Along with the above general facts about LY polynomials, we also need
the following technical lemma.

\begin{lemma}
  Let $m$ be any integer, and $k$ a positive integer such that
  $2\abs{m} \leq k$.  Consider the maximization problem
\begin{align*}
	\max\; &\prod_{i=1}^{k} \cos \theta_i \\
	\textrm{subject to }\; &-\frac{\pi}{2} \le \theta_i \le \frac{\pi}{2}, \\
	&\sum_{i=1}^{k} \theta_i = m\pi.
\end{align*}
The maximum is $\cos^{k} \left( \frac{m\pi}{k} \right)$, and is
attained when $\theta_i=\frac{m\pi}{k}$ for all~$i$.
	\label{lem:logcos}
\end{lemma}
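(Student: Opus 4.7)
The plan is to apply Jensen's inequality to the strictly concave function $\log\cos$ on the open interval $(-\pi/2,\pi/2)$, with a brief reduction to handle the boundary of the feasible box. First I would observe that on the domain $-\pi/2 \le \theta_i \le \pi/2$ we have $\cos\theta_i \ge 0$, so the objective is nonnegative. Moreover, the hypothesis $2|m|\le k$ guarantees $|m\pi/k|\le \pi/2$, so $\cos^k(m\pi/k) \ge 0$. If any $\theta_i = \pm\pi/2$ at the optimum then $\cos\theta_i = 0$ and the product vanishes, trivially below the target. Hence I may assume without loss of generality that every $\theta_i$ lies in the open interval $(-\pi/2,\pi/2)$, where each $\cos\theta_i$ is strictly positive and one may pass to logarithms.

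Next I would verify strict concavity of $g(\theta) \defeq \log\cos\theta$ on $(-\pi/2,\pi/2)$ by the one-line computation $g''(\theta) = -\sec^2\theta < 0$. Jensen's inequality (or equivalently, a standard Lagrange multiplier argument: stationarity of $\sum_i \log\cos\theta_i - \mu(\sum_i \theta_i - m\pi)$ forces $\tan\theta_i = -\mu$ for all $i$, hence all $\theta_i$ agree) then yields
\[
\frac{1}{k}\sum_{i=1}^{k} \log\cos\theta_i \;\le\; \log\cos\Bigl(\frac{1}{k}\sum_{i=1}^{k}\theta_i\Bigr) \;=\; \log\cos\bigl(m\pi/k\bigr),
\]
where the final equality uses the constraint $\sum_i \theta_i = m\pi$. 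Exponentiating gives $\prod_{i=1}^k \cos\theta_i \le \cos^k(m\pi/k)$, with equality if and only if all $\theta_i$ coincide (by strict concavity of $g$).

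Finally, I would check that the symmetric point $\theta_i = m\pi/k$ is itself feasible: the sum constraint holds tautologically, and the box constraint $|m\pi/k|\le \pi/2$ is exactly the hypothesis $2|m|\le k$. This point achieves the value $\cos^k(m\pi/k)$, so the bound is attained. The only step requiring any thought is the boundary reduction, since Jensen strictly requires interior points where $\log\cos$ is finite; but this is essentially free because the product vanishes on the boundary while the target value is nonnegative, so such configurations cannot beat the symmetric one.
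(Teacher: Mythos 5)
Your proof is correct and follows essentially the same approach as the paper: reduce to the interior of the box by noting boundary configurations give a vanishing product, verify concavity of $\log\cos$ on $(-\pi/2,\pi/2)$, and apply Jensen's inequality together with the feasibility of the symmetric point. The only cosmetic difference is that you check concavity via the second derivative $-\sec^2\theta$ while the paper observes that $(\log\cos)' = -\tan$ is decreasing; these are equivalent.
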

\begin{proof}
  We may assume without loss of generality that
  $\theta_i \in (-\pi/2, \pi/2)$ at any maximum (for otherwise the
  objective value is $0$).  Now, consider the function
  $f(x) = \log \cos x$ defined on the interval $(-\pi/2,
  \pi/2)$. Since $f'(x) = - \tan x$ is a decreasing function, $f(x)$
  is concave for $x \in (-\frac{\pi}{2}, \frac{\pi }{ 2})$.  Thus by
  Jensen's inequality,
	\[
      \log \prod_{i=1}^{k} \cos \theta_i = \sum_{i=1}^{k}
      f(\theta_i) \le k f\left(\frac{\sum_{i=1}^{k}
          \theta_i}{k} \right) \le k \log\cos \left(
        \frac{m\pi}{k} \right),
	\]
	and equality holds when $\theta_i = \frac{m\pi}{k}$ for all~$i$.
    Note that these $\theta_i$ are in $(-\pi/2, \pi/2)$ since
    $2\abs{m} \leq k$.
\end{proof}

We are now ready to tackle the case of a single hyperedge.
\begin{lemma}
  Fix an integer $k\geq 2$ and a hyperedge activity $\beta \in \R$.
  Let
  $G = (V = \inbr{v_1, v_2, \dots, v_k}, E = \{\inbr{v_1, v_2,
      \dots, v_k}\})$ be a hypergraph consisting of a single hyperedge
  of size $k$ and activity $\beta$.  If $k = 2$ and
  $\beta \in (-1, 1)$, or $k \geq 3$ and $\beta$ satisfies
  \[
    -\frac{1}{2^{k-1} -1} <\beta < \frac{1}{2^{k-1}\cos^{k-1}\left( \frac{\pi}{k-1} \right) +1},
  \]
  then the partition function $Z_G^\beta$ has the Lee-Yang property.
  \label{lem:leeyang-base}
\end{lemma}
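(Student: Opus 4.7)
The plan is to apply Lemma~\ref{lem:asano-leeyang-eqv}, reducing the Lee-Yang property of the single-hyperedge partition function to the non-vanishing of the ``Asano coefficient'' $A_k$. First I would expand the partition function explicitly as
\[
  Z_G^\beta(\lambda_1, \ldots, \lambda_k) = (1-\beta)\Bigl(1 + \prod_{i=1}^k \lambda_i\Bigr) + \beta \prod_{i=1}^k (1+\lambda_i),
\]
since the single hyperedge contributes a factor $\beta$ exactly when $\emptyset \subsetneq S \subsetneq V$. The degenerate case $\beta = 0$ reduces to $Z = 1 + \prod_i\lambda_i$, which is LY because $|\lambda_i|\ge 1$ with at least one strict inequality forces $|\prod_i\lambda_i|>1$. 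For $\beta\ne 0$ in the stated range, the multilinear coefficients of $Z_G^\beta$ are all nonzero (equal to $1$ or $\beta$) and the symmetry $p_S=p_{\overline{S}}$ is immediate, so Lemma~\ref{lem:asano-leeyang-eqv} applies. By symmetry under permutations of the vertices, it suffices to verify that
\[
  A_k = (1-\beta)\prod_{i<k}\lambda_i + \beta\prod_{i<k}(1+\lambda_i)
\]
is nonvanishing on the polydisk closure $\{|\lambda_i|\ge 1:i<k\}$.

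Dividing by $\beta\prod_{i<k}\lambda_i$ and setting $w_i:=1/\lambda_i$ with $|w_i|\le 1$, the equation $A_k=0$ becomes $\prod_{i=1}^{k-1}(1+w_i)=(\beta-1)/\beta$. The task reduces to showing that $(\beta-1)/\beta$ does not belong to the image set $\mathcal{R}_{k-1}:=\{\prod_{i=1}^{k-1}(1+w_i):|w_i|\le 1\}$. Since $1+w_i$ lies in the closed disk of radius~$1$ centered at~$1$, a polar parameterization $1+w_i=\rho_i e^{i\alpha_i}$ forces $\alpha_i\in(-\pi/2,\pi/2)$ and $0\le\rho_i\le 2\cos\alpha_i$. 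Consequently any $z\in\mathcal{R}_{k-1}$ has modulus at most $2^{k-1}\prod_i\cos\alpha_i$ and argument $\sum_i\alpha_i$ modulo $2\pi$.

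The key quantitative step is Lemma~\ref{lem:logcos}: among tuples $(\alpha_i)\in[-\pi/2,\pi/2]^{k-1}$ with $\sum\alpha_i=m\pi$ and $2|m|\le k-1$, the maximum of $\prod\cos\alpha_i$ is $\cos^{k-1}(m\pi/(k-1))$, attained at the uniform configuration. I would then split on the sign of $\beta$. For $\beta>0$ (necessarily $\beta<1$ in the stated range), $(\beta-1)/\beta$ is negative real with argument $\pi$; for $k=2$ the attainable sums $\sum\alpha_i$ stay in $(-\pi/2,\pi/2)$ and so cannot reach $\pi\pmod{2\pi}$, leaving only the constraint $|\beta|<1$, while for $k\ge 3$ the smallest feasible target $m=\pm 1$ gives supremum modulus $2^{k-1}\cos^{k-1}(\pi/(k-1))$, and demanding $(1-\beta)/\beta$ to exceed this rearranges to the stated upper bound on $\beta$. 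For $\beta<0$, $(\beta-1)/\beta=1-1/\beta>1$ is positive real (argument $0$), the optimal choice is $m=0$ (all $w_i=1$) giving supremum modulus $2^{k-1}$, and the inequality $1-1/\beta>2^{k-1}$ rearranges to $\beta>-1/(2^{k-1}-1)$.

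The main obstacle I anticipate is verifying that the supremum of $|z|$ over $z\in\mathcal{R}_{k-1}$ with prescribed argument really is attained at the Jensen-type uniform configuration from Lemma~\ref{lem:logcos}. In particular, one must check that other branches $\sum\alpha_i=\Theta+2\pi m'$ with larger $|m'|$ do not yield a larger product --- intuitively, increasing $|m'|$ forces some $\alpha_i$ toward $\pm\pi/2$ so that $\prod\cos\alpha_i\to 0$, but this needs an explicit comparison dominating the competing branches and relies essentially on the range restriction $2|m'|\le k-1$ in Lemma~\ref{lem:logcos}.
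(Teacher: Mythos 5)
Your proposal follows essentially the same route as the paper: reduce via Lemma~\ref{lem:asano-leeyang-eqv} to showing $A_j\neq 0$ for $|\lambda_i|\ge 1$, substitute $w_i=1/\lambda_i$ so that $1+w_i$ ranges over the closed disk of radius $1$ centered at $1$, and then invoke Lemma~\ref{lem:logcos} to bound the image set $\mathcal{R}_{k-1}\cap\R$. The obstacle you anticipate at the end is not in fact a gap: Lemma~\ref{lem:logcos} already gives the branch-by-branch maximum $\cos^{k-1}\bigl(m\pi/(k-1)\bigr)$ for each feasible integer $m$ with $2|m|\le k-1$, and since $|m\pi/(k-1)|\le\pi/2$ this quantity is strictly decreasing in $|m|$, so the dominant branches are $m=\pm 1$ for the negative real target and $m=0$ for the positive one; this is exactly how the paper determines the endpoints $-\tau_0=-2^{k-1}\cos^{k-1}(\pi/(k-1))$ and $\tau_1=2^{k-1}$ of $\mathcal{R}_{k-1}\cap\R$.
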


\begin{remark}
  Note that the condition on $\beta$ imposed above is monotone in $k$:
  i.e., if $\beta$ is such that the partition function of a hyperedge of
  size $k \geq 2$ is LY, then for the same $\beta$ the partition
  function of a hyperedge of size $k' < k$ is also LY.
\end{remark}

\begin{proof}
  For $k = 2$, the lemma is a special case of the Lee-Yang
  theorem~\cite{leeyan52b} (although it also follows by specializing
  the argument below).  We therefore assume $k \geq 3$.

  Since the Ising partition function is symmetric and all terms in the
  polynomial appear with positive coefficients,
  \MakeUppercase Lemma\nobreakspace \ref {lem:asano-leeyang-eqv} applies and it suffices to verify that
  the polynomials $A_j$ do not vanish when $\abs{z_i}\geq 1$ for
  $i \neq j$.  Without loss of generality we fix $j = 1$.  We then
  have
  \[
    A_1 = \beta \prod_{i = 2}^k (1+ z_i) + (1
    - \beta) \prod_{i = 2}^k z_i.
  \]
  Thus $A_1 = 0$ is equivalent to
  \begin{equation}
    \frac{1}{\beta} = 1 - \prod_{i = 2}^k \inp{ 1
      + \frac{1}{z_i}}.\label{eq:5}
    \end{equation}
  To establish the lemma, we therefore only need to show that for the
  claimed values of $\beta$, eq.\nobreakspace \textup {(\ref {eq:5})} has no solutions when
  $\abs{z_i} \geq 1$ for all $i \geq 2$. We now proceed to establish
  this by analyzing the product on the right hand side of eq.\nobreakspace \textup {(\ref {eq:5})}.

  The map $z \mapsto 1 + 1/z$ is a bijection from the complement of
  the open unit disk to the closed disk $D$ of radius $1$ centered at
  $1$.  Any $y \in D$ can be written as $y=r\exp\inp{\iota \theta}$ for
  $\theta \in [-\pi/2, \pi/2]$ and $0 \le r \leq 2\cos \theta$. Consider now
  the set
  \(\R \cap \bigl\{\prod_{i=2}^{k}y_i \st y_i \in D\text{ for $2\le i\le k$}\bigr\}\).
  We show that, for $k\ge 3$, this set is exactly
  the interval $[-\tau_0, \tau_1]$ where
  $\tau_0 = 2^{k-1}\cos^{k-1}(\pi/(k-1))$ and $\tau_1 = 2^{k-1}$.  The
  claim of the lemma then follows since for the given values of
  $\beta$, $1 - 1/\beta$ lies outside $[-\tau_0, \tau_1]$ and hence
  eq.\nobreakspace \textup {(\ref {eq:5})} cannot hold.

  Recalling that each $y \in D$ can be written in the form
  $r \exp(\iota \theta)$ where $\theta \in [-\pi/2, \pi/2]$ and
  $0 \leq r \leq 2 \cos \theta$, we find that the values $\tau_0$ and
  $\tau_1$ are defined by the following optimization problems (both of which are
  feasible since $k \geq 3$):

  \begin{displaymath}
    \begin{aligned}
      \tau_0 = 2^{k-1}\max &\prod_{i=2}^{k} \cos \theta_i \\
      \textrm{subject to }\; &-\frac{\pi}{2} \le \theta_i \le \frac{\pi}{2}, \\
      &\sum_{i=2}^{k} \theta_i = (2n+1)\pi\\
      &\text{for some $n \in \mathbb{Z}$}\\[-4pt]
      &\text{s.t. $\abs{2n + 1} \leq (k-1)/2$}.
    \end{aligned}
    \begin{aligned}
      \tau_1 = 2^{k-1} \max &\prod_{i=2}^{k} \cos \theta_i \\
      \textrm{subject to }\; &-\frac{\pi}{2} \le \theta_i \le \frac{\pi}{2}, \\
      &\sum_{i=2}^{k} \theta_i = 2n\pi\\
      &\text{for some $n \in \mathbb{Z}$}\\[-4pt]
      &\text{s.t. $\abs{n} \leq (k-1)/4$.}\\[12pt]
    \end{aligned}
  \end{displaymath}

  Using \MakeUppercase Lemma\nobreakspace \ref {lem:logcos}, we then see that
  $\tau_0 = 2^{k-1}\cos^{k-1}(\pi/(k-1))$ and $\tau_1 = 2^{k-1}$, as
  required.
\end{proof}

We now proceed to an inductive proof of Theorem~\ref{thm:hyper-leeyang}, using
Lemma~\ref{lem:leeyang-base} as the base case.
\begin{proof}[Proof of \MakeUppercase Theorem\nobreakspace \ref {thm:hyper-leeyang}]
  The case $k = 2$ is a special case of the Lee-Yang
  theorem~\cite{leeyan52b} (though, as with the proof of Lemma~\ref{lem:leeyang-base},
  the argument below can
  again be specialized to directly establish this).  We assume
  therefore that $k \geq 3$.

  The proof uses the inductive method of
  Asano~\cite{asano_lee-yang_1970}.  When the hypergraph consists of a
  single hyperedge of size $k' \leq k$, it follows from
  \MakeUppercase Lemma\nobreakspace \ref {lem:leeyang-base} and the remark immediately after it that the
  partition function is LY for the claimed
  values of the edge activity~$\beta$.
  For the induction, we use the fact that the Lee-Yang property of the
  partition function is preserved under the following two operations:
  \begin{enumerate}
  \item \textbf{Adding a hyperedge}: In this operation, a new
    hyperedge~$e$ of size $k' \leq k$ and activity $\beta_e$ as claimed in
    the statement of the theorem, is added to a connected hypergraph
    in such a way that exactly one of its $k'$ vertices already exists
    in the starting hypergraph, while the other $k' - 1$ vertices are
    new.  Note that this operation keeps the hypergraph connected.  We
    assume that the partition functions of both the original
    hypergraph as well as the newly added edge separately have the
    Lee-Yang property: this follows from the induction hypothesis (for
    the hypergraph) and \MakeUppercase Lemma\nobreakspace \ref {lem:leeyang-base} (for the new
    hyperedge).
  \item \textbf{Asano contraction}: In this operation, two vertices
    $u', u''$ in a connected hypergraph that are not both included in
    any one hyperedge are merged so that the new merged vertex $u$ is
    incident on all the hyperedges incident on $u'$ or $u''$ in
    the original graph.  Note that this operation keeps the hypergraph
    connected and does not change the size of any of the
    hyperedges.
  \end{enumerate}
  Any connected non-empty hypergraph $G$ can be constructed by
  starting with any arbitrary hyperedge present in $G$ and performing
  a finite sequence of the above two operations: to add a new
  hyperedge $e$ with activity $\beta_e$, one first
  uses operation~1 to add a hyperedge which has the same activity
  $\beta_e$ and has new copies of all but one of the incident vertices
  of $e$, and then uses operation~2 to merge these new copies with
  their counterparts, if any, in the previous hypergraph. Note that in
  this process, a hyperedge $e$ can be added only when at least one
  of its vertices is already included in the current hypergraph.
  However, since $G$ is assumed to be connected, its hyperedges can be
  ordered so that all of them are added by the above process.  Thus,
  assuming that the above two operations preserve the Lee-Yang
  property, it follows by induction 
  that the partition functions of all connected hypergraphs of
  hyperedge size at most $k$, and edge activities $\beta_e$ as claimed
  in the theorem, have the Lee-Yang property.

  Given \MakeUppercase Corollary\nobreakspace \ref {cor:asano-iter}, it can be proved, by adapting an
  argument first developed by Asano~\cite{asano_lee-yang_1970}, that
  these two operations preserve the Lee-Yang property.  Asano's method
  has by now become standard (see, e.g., \cite[Propositions 1,
  2]{suzuki1971zeros}), but we include the details here for
  completeness.

  Consider first operation~1.  Let $G$ be the original
  hypergraph and $H$ the new hyperedge (with $k' \leq k$ vertices)
  being added, and assume, by renumbering vertices if required, that
  the single shared vertex is $v_1$ in $G$ and $u_1$ in $H$,
  respectively.  Let
  $P(z_1, z_2, \dots, z_n) = A(z_2, \dots, z_n)z_1 + B(z_2, \dots,
  z_n)$ and
  $Q(y_1, y_2, \dots, y_{k'}) = C(y_2, \dots, y_{k'})y_1 + D(y_2,
  \dots, y_{k'})$ be the Ising partition functions of $G$ and $H$,
  respectively, where $z_1$ and $y_1$ are the variables corresponding
  to $v_1$ and $u_1$, respectively.  Both $P$ and $Q$ are LY by the
  hypothesis of the operation.  The partition function $R$ of the new
  graph can be written as
  \begin{displaymath}
    R(z, z_2, \dots, z_n, y_2, \dots, y_{k'}) = A(z_2, \dots,
    z_n)C(y_2, \dots, y_{k'}) z + B(z_2, \dots, z_n)D(y_2, \dots, y_{k'}),
  \end{displaymath}
  where $z$ is a new variable corresponding to the new vertex created
  by the merger of $u_1$ and $v_1$.  Let
  $\lambda_2, \dots, \lambda_n, \mu_2, \dots, \mu_{k'}$ be complex
  numbers lying outside the open unit disk.  In order to prove that
  $R$ is LY, we need to show that
(i)~$R(z, \lambda_2, \dots, \lambda_n, \mu_2, \dots, \mu_{k'}) = 0$
  implies that $\abs{z} \leq 1$; and (ii)~when at least one of these
  complex numbers lies strictly outside the \emph{closed} unit disk
  then $R(z, \lambda_2, \dots, \lambda_n, \mu_2, \dots, \mu_{k'}) = 0$
  implies that $\abs{z} < 1$.
  Now, since $P$ and $Q$ are assumed to be LY, \MakeUppercase Lemma\nobreakspace \ref {lem:asano}
  implies that $A = A(\lambda_2, \dots, \lambda_n)$ and
  $C = C(\mu_2, \dots, \mu_{k'})$ are both non-zero.  Thus, $R = 0$
  implies that
  \begin{equation}
    \abs{z} = \abs{B/A}\cdot\abs{D/C},\label{eq:7}
  \end{equation}
  where $B = B(\lambda_2, \dots \lambda_n)$ and
  $D = D(\mu_2, \dots, \mu_{k'})$.  Since all the $\lambda_i$ and
  $\mu_i$ lie outside the open unit disk and $P$ and $Q$ are LY,
  $\abs{B/A}, \abs{D/C} \leq 1$, so that from eq.\nobreakspace \textup {(\ref {eq:7})},
  $\abs{z} \leq 1$.  This establishes condition~(i).
  Further, when at least one of the $\lambda_i$
  lies strictly outside the closed unit disk, then again, since $P$ is
  LY, $\abs{B/A} < 1$. Similarly,$\abs{D/C} < 1$ when one of the
  $\mu_i$ lies outside the closed unit disk.  Thus, when at least one
  of the $\lambda_i$ and the $\mu_i$ lies outside the closed unit
  disk, it follows from eq.\nobreakspace \textup {(\ref {eq:7})} that $\abs{z} < 1$, thus establishing
  condition~(ii) and concluding the argument that $R$ is LY.

  We now consider operation~2.  By renumbering vertices if
  necessary, let $v_1$ and $v_2$ be the vertices to be merged.  The
  partition function $P$ of the original graph (where $v_1$ and $v_2$
  are not merged) can be written as
  \begin{displaymath}
    P(z_1, z_2, z_3, \dots, z_n) = A(z_3, \dots, z_n)z_1z_2 +
    B(z_3, \dots, z_n)z_1 + C(z_3, \dots, z_n)z_2 + D,
  \end{displaymath}
  and is LY by the hypothesis of the operation.  The partition
  function $R$ after the merger is then given by
  \begin{displaymath}
    R(z, z_3, \dots, z_n) = A(z_3, \dots, z_n)z + D,
  \end{displaymath}
  where $z$ is a new variable corresponding to the vertex created
  by the merger of $v_1$ and $v_2$. Now, let
  $\lambda_3, \dots, \lambda_n$ be complex numbers lying outside the
  open unit disk.  \MakeUppercase Corollary\nobreakspace \ref {cor:asano-iter} implies that
  $A = A(\lambda_3, \dots, \lambda_n) \neq 0$.  Thus, $R(z, \lambda_3,
  \dots, \lambda_n) = 0$ implies that
  \begin{equation}
    \label{eq:10}
    \abs{z}  = \abs{D(\lambda_3, \dots,
      \lambda_n)/A(\lambda_3, \dots, \lambda_n)}  = \abs{D/A}.
  \end{equation}
  Now, since $P$ is LY, both zeros of the quadratic equation
  $P(x, x, \lambda_3, \dots, \lambda_n) = 0$ satisfy $\abs{x} \leq 1$,
  and indeed, $\abs{x} < 1$ when at least one of the $\lambda_i$ lies
  strictly outside the closed unit disk.  Thus, the product $D/A$ of
  its zeros also satisfies $\abs{D/A} \leq 1$, and further satisfies
  the stronger inequality $\abs{D/A} < 1$ in case at least one of the
  $\lambda_i$ lies strictly outside the closed unit disk.
  \MakeUppercase eq.\nobreakspace \textup {(\ref {eq:10})} then implies that $\abs{z} \leq 1$ in the first case
  and $\abs{z} < 1$ in the second case, which establishes that $R$ is
  LY.

  This concludes the proof of the first part of
  Theorem~\ref{thm:hyper-leeyang}. We now prove the optimality of the
  conditions imposed on the edge parameters. In the case $k = 2$, this
  follows by considering the partition function
  $z_1z_2 + \beta z_1 + \beta z_2 + 1$ of a single edge. When
  $\beta>1$ (respectively, when $\beta<-1$),  $z_1 = z_2 = -\beta - \sqrt{\beta^2 - 1}$ 
  (respectively, $z_1 = z_2 = -\beta + \sqrt{\beta^2 - 1}$) is a zero
    of the partition function satisfying $\abs{z_1}, \abs{z_2} > 1$ and
  hence contradicting the Lee-Yang property.  Similarly
  $z_1 = -1, z_2 = 2$ when $\beta = 1$, and $z_1 = 1, z_2 = 2$ when
  $\beta = -1$, are zeros which contradict the Lee-Yang property.

  We now consider the case $k \geq 3$.  In this case, we take our
  example to be the single hyperedge of size $k$ and consider its
  partition function
  \begin{equation}
    P(z_1, z_2, \dots, z_k) \defeq \beta\prod_{i=1}^k(1+z_i) +
    (1-\beta)\inp{1 + \prod_{i=1}^kz_i}.\label{eq:8}
  \end{equation}
  Our strategy is to show that when
  \begin{equation}
    \beta \not\in \inp{-\frac{1}{2^{-k-1} - 1},
      \frac{1}{2^{k-1}\cos^{k-1}\bigl(\frac{\pi}{k-1}\bigr) + 1}},\label{eq:6}
  \end{equation}
  the polynomial $A_1(z_2, z_2, \dots, z_k)$, which is the coefficient
  of $z_1$ in $P$ as defined in Lemma~\ref{lem:asano}, vanishes at a
  point with $\abs{z_i} \geq 1$ for $i \geq 2$.  It then follows from
  Lemma~\ref{lem:asano} that $P$ cannot have the Lee-Yang property.

  To carry out the strategy, we reuse some of the notation and
  calculations from the proof of Lemma~\ref{lem:leeyang-base}.  Let
  $D$ be the closed disk of radius $1$ centered at $1$, as defined in
  the proof of that lemma.  Eq.~\eqref{eq:5}, taken together with the
  discussion following it, implies that finding a zero of
  $A_1(z_2, \dots, z_k)$ with $\abs{z_i} \geq 1$, $2 \leq i \leq k$,
  is equivalent to finding $y_i \in D$, $y_i \neq 1$ such that
  $1 - \frac{1}{\beta}= \prod_{i=2}^{k}y_i$.  We can in fact choose
  all the $y_i$ to be equal, so that using the same representation of $D$
  as in the proof of Lemma~\ref{lem:leeyang-base}, our task reduces to finding an angle
  $\theta \in [-\pi/2, \pi/2]$ and $0 \leq r \leq 2\cos\theta$ such that
  $y_i = r e^{\iota\theta}$, $2 \leq i \leq k$, and
  \begin{equation}
    \label{eq:9}
    1 - \frac{1}{\beta} =
    \bigl(r e^{\iota\theta}\bigr)^{k-1}.
  \end{equation}
  Let $\gamma \defeq 1 - \frac{1}{\beta}$.  We now partition the
  condition on $\beta$ in \eqref{eq:6} into three different cases.
  Suppose first that $\beta \leq -\frac{1}{2^{k-1} - 1}$.  This is
  equivalent to $1 < \gamma \leq 2^{k-1}$. In this case $\theta = 0$,
  $r = \gamma^{\frac{1}{k-1}} \in (1, 2]$
  gives a desired solution to \eqref{eq:9} (note that we have
  $y_i \in (1, 2]$ in this case).  The same solution for $\theta$ and
  $r$ also works when $\beta > 1$ (in which case $0 < \gamma < 1$ and
  $y_i \in (0, 1)$) .  The remaining case is
  $1 \geq \beta \geq \frac{1}{2^{k-1}\cos^{k-1}\inp{\frac{\pi}{k-1}} +
    1}$, which in turn is equivalent to
  $-2^{k-1}\cos^{k-1}(\frac{\pi}{k-1}) \leq \gamma \leq 0$, and
  $\theta = \frac{\pi}{k-1}$, $r =  \abs{\gamma}^{\frac{\pi}{k-1}} \le 2\cos\theta$
  \ignore{
  \[
    \theta = \frac{\pi}{k-1},%
    \;\; r = \inp{%
      \frac{%
        \abs{\gamma}%
      }{%
        2^{k-1}\cos^{k-1}%
        \inp{%
          \frac{\pi}{k-1}%
        }%
      }%
    }^{\frac{1}{k-1}}%
    \leq 1
  \]
  }
  gives a solution in this case. \qedhere
\end{proof}

\subsection{Optimality of the univariate hypergraph Lee-Yang theorem}
\label{sec:tight-ly}

We now prove the second part of the univariate hypergraph Lee-Yang
theorem, Theorem~\ref{thm:leeyangintro}, i.e., that the range of edge
activities under which the first part of that theorem holds is
optimal.  The tight example for the case $k = 2$ is again a single
edge, and as observed above, the roots of the univariate partition
function of the edge when $\abs{\beta} > 1$ are
$-\beta \pm \sqrt{\beta^2 - 1}$, which do not lie on the unit circle.

We now consider the case $k \geq 3$.  The tight example is again a
hyperedge of size $k' \leq k$.  The partition function $P_{k'}(z)$ of this
graph is
\begin{displaymath}
  P_{k'}(z) \defeq \beta(1+z)^{k'} + (1-\beta)(1+z^{k'}),
\end{displaymath}
and we will show that it has at least one root outside the unit circle
when $\beta \neq 1$ satisfies
\begin{equation}
  \beta \not\in \inb{-\frac{1}{2^{-k-1} - 1},
    \frac{1}{2^{k-1}\cos^{k-1}\bigl(\frac{\pi}{k-1}\bigr) + 1}}.\label{eq:11}
\end{equation}

We consider three exhaustive cases under which \eqref{eq:11} holds.

\begin{description}
\item[Case 1: $\beta > 1$]  In this case our tight example is a
  hyperedge of size $k' = 2 \leq k$, and the result follows from that
  of the case $k = 2$.

\item[Case 2: $\beta < -\frac{1}{2^{k-1} - 1}$]  In this case, our
  example is a hyperedge of size $k$.  We note then that $P_k(0) = 1$
  and $P_k(1) = 2\beta(2^{k-1} - 1) + 2 < 0$.  Thus, there exists a
  $z$ in the interval $(0, 1)$ for which $P_k(z) = 0$, and hence $P_k$ has
  a zero that is not on the unit circle.

\item[Case 3:
  $\frac{1}{2^{k-1}\cos^{k-1}\inp{\frac{\pi}{k-1}} + 1} < \beta < 1$]
  Our tight example is again a hyperedge of size $k$.  We will show
  that the degree $k$ polynomial $P_{k}$ has at most $k-3$ zeros
  (counting with multiplicities) on the unit circle $C$, and hence must
  have at least one zero outside it.

  We first consider the point $z = -1$.  Note that since
  $\beta \neq 1$, $P_k(-1) = 0$ if and only if $k$ is odd, and in this
  case $P_k'(-1) = k(1-\beta) \neq 0$. Therefore, $-1$ is a zero of
  multiplicity $1$ of $P_k$ when $k$ is odd, and is not a zero of
  $P_k$ when $k$ is even.

  We now consider zeros of $P_k$ in $C \setminus \inbr{-1}$.  Let
  $\tau \defeq 2^{k-1}\frac{\beta}{\beta - 1}$ and
  $g(z) \defeq \frac{1 + z^k}{(1 + z)^k}$.  Note that any
  $z \in C \setminus \inbr{-1}$ is a zero of multiplicity~$l$ of
  $P_k$ if and only if it is a zero of the same multiplicity $l$ of
  the meromorphic function $g(z) - \tau/2^{k-1}$.  In particular, at
  such a $z$, the order of the first non-zero derivative of $P_k$ is
  the same as the order of the first non-zero derivative of $g$, and
  this number is the multiplicity of $z$ as a zero of $P$ (or
  equivalently, as a root of $g(z) = \tau/2^{k-1}$).  Note also that
  $g(z)$ maps $C \setminus \inbr{-1}$ into the real line: in fact, for
  $z = e^{2\iota \theta}$, $\theta \in (-\pi/2, \pi/2)$, we have
  \begin{displaymath}
    2^{k-1}g(z) = 2^{k-1}\cdot\frac{1 + \cos 2k\theta + \iota \sin 2k\theta}{(1 + \cos
      2\theta + \iota \sin 2\theta)^k} = \frac{2^k\cos{k\theta}}{(2
      \cos\theta)^k}\cdot\frac{e^{\iota k \theta}}{e^{\iota k \theta}}
    =
    \frac{\cos k\theta}{\cos^{k}\theta} =:
    h(\theta),
  \end{displaymath}
  and further $h'(\theta) = 2^k\iota z g'(z)$, so that
  $h'(\theta) = 0$ if and only if $g'(z) = 0$.  Indeed, by computing
  further derivatives, one sees that the multiplicity of any root of
  $h(\theta) = \tau$ in $(-\pi/2, \pi/2)$ (i.e., the order of the
  first non-zero derivative of $h$ at the root) is the same as the
  multiplicity of the corresponding root $z = e^{2\iota \theta}$ of
  $g(z) = \tau/2^{k-1}$.

  Thus, in order to establish our claim that $P_k(z)$ has at most
  $k - 3$ zeros (counting with multiplicities and also accounting for
  the possible zero at $-1$ considered above) on the unit circle $C$,
  we only need to show that the number of roots of the equation
  $h(\theta) = \tau$ on $(-\pi/2, \pi/2)$ (counted with
  multiplicities) is at most $k-4$.  We now proceed to establish this
  property of $h$. Note that for the range of $\beta$ being
  considered, we have $\tau < -\sec^{k-1}\bigl(\frac{\pi}{k-1}\bigr)$.

  Since $h(\theta) = h(-\theta)$, we consider its behavior only in the
  interval $I = [0, -\pi/2)$.  We have
  $h'(\theta) = -\frac{k\sin (k-1)\theta}{\cos^{k+1}\theta}$, so that
  the zeros of $h'$ in $I$ are given by $\rho_i = i\pi/(k-1)$, where
  $0 \leq i < \floor{k/2}$ is an integer.  Note that all these zeros
  of $h'$ are in fact simple: $h''(\rho_i) \neq 0$.  Thus, any root of
  $h(\theta) = \tau$ is of multiplicity at most $2$.  Now, define
  $\rho_{\floor{k/2}} = \pi/2$, and let $I_i$ be the interval
  $[\rho_i, \rho_{i+1})$ for $0 \leq i \leq \floor{k/2} - 1$.  We note the
  following facts (see Figure~\ref{fig:1} for an example):
  \begin{enumerate}
  \item In the interval $I_i$, $h$ is strictly decreasing when $i$ is
    even and strictly increasing when $i$ is odd.
  \item For $i < \floor{k/2}$,
    $h(\rho_i) = (-1)^i\sec^{k-1}\bigl(\frac{i\pi}{k-1}\bigr)$, so that $h(\rho_i)$ is
    strictly positive when $i$ is even and strictly negative when $i$
    is odd.  Further, $h(\rho_1) = -\sec^{k-1}\bigl(\frac{\pi}{k-1}\bigr) > \tau$.
  \end{enumerate}
  \begin{figure}[h]
    \centering
    \includegraphics[width=0.7\textwidth]{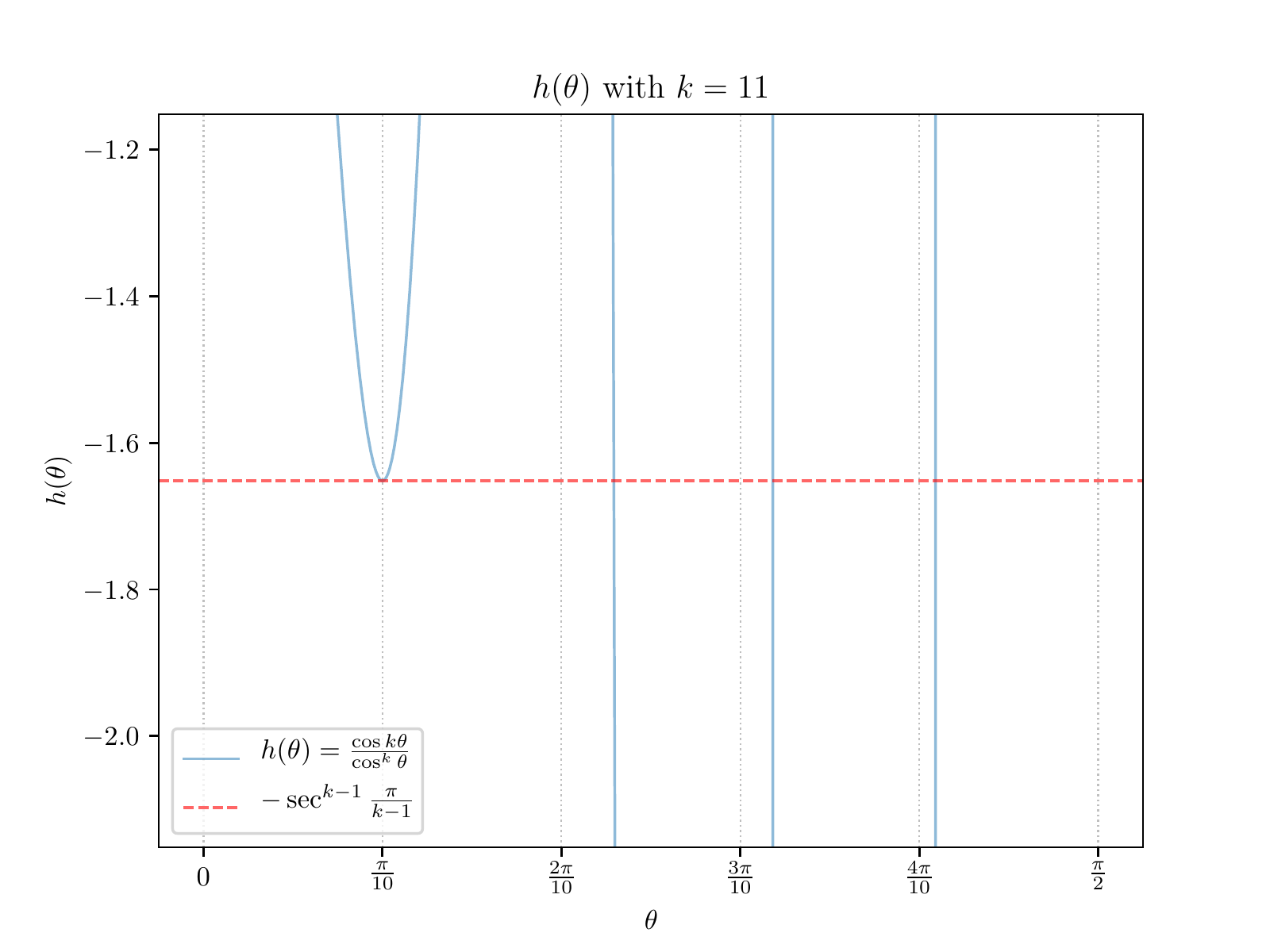}
    \caption{The function $h(\theta)$}
    \label{fig:1}
  \end{figure}
  From these observations we can now deduce that when
  $-\sec^{k-1}\bigl(\frac{\pi}{k-1}\bigr) > \tau$, $h(\theta) = \tau$ has
  \begin{enumerate}
  \item no roots in $I_0$ and $I_1$,
  \item at most two roots in $I_i \cup I_{i+1}$, counting
    multiplicities, when $i$ is a positive even integer strictly less
    than $\floor{k/2} - 1$.  The two roots can arise in only the
    following two ways: there can be one root each, with multiplicity
    1, in each of the two intervals $I_i$ and $I_{i+1}$, or else there
    can be a root of multiplicity $2$ at $\rho_{i+1}$.
  \item at most one additional root in $I_{\floor{k/2} - 1}$, and this
    additional root can arise only when $\floor{k/2} - 1$ is even.
  \end{enumerate}
  Together, the above three items imply that when
  $\tau < -\sec^{k-1}\bigl(\frac{\pi}{k-1}\bigr)$, the number of roots of
  $h(\theta) = \tau$ in $I = [0, -\pi/2)$, counted with their
  multiplicities, is at most $\floor{k/2} - 2$.  Using the symmetry of
  $h$ around $0$ pointed out above, we thus see that the number of
  roots of $h(\theta) -\tau$ in $(-\pi/2, \pi/2)$ is at most $k - 4$,
  so that $P_k$ has at most $k-3$ zeros (accounting for the possible
  simple zero at $-1$ when $k$ is odd) on the unit circle for such
  $\beta$.  This implies that at least one zero of the degree $k$
  polynomial $P_k$ must lie outside the unit circle, as required.
\end{description}
\section*{Acknowledgments}
We thank Alexander Barvinok, Guus Regts and anonymous reviewers for
helpful comments.  JL and AS are supported in part by US NSF grant
CCF-1420934.  PS is supported by a Ramanujan Fellowship of the Indian
Department of Science and Technology. Some of this work was done at
the Simons Institute for the Theory of Computing at UC Berkeley.


\begin{thebibliography}{10}

\bibitem{anari2014kadison}
N.~Anari and S.~O. Gharan.
\newblock The {Kadison}-{Singer} problem for strongly {Rayleigh} measures and
  applications to {Asymmetric TSP}.
\newblock In {\em Proc. 56th IEEE Symp. Found. Comp. Sci. (FOCS)}, 2015.

\bibitem{anari_generalization_2017}
N.~Anari and S.~O. Gharan.
\newblock A generalization of permanent inequalities and applications in
  counting and optimization.
\newblock In {\em Proc. 49th ACM Symp. Theory Comput. (STOC)}, pages 384--396,
  2017.
\newblock \arxiv{1702.02937}.

\bibitem{asano_lee-yang_1970}
T.~Asano.
\newblock Lee-{Yang} theorem and the {Griffiths} inequality for the anisotropic
  {Heisenberg} ferromagnet.
\newblock {\em Phys. Rev. Let.}, 24(25):1409--1411, 1970.

\bibitem{barata_distribution_2001}
J.~C.~A. Barata and P.~S. Goldbaum.
\newblock On the distribution and gap structure of {Lee}–{Yang} zeros for the
  {Ising} model: {Periodic} and aperiodic couplings.
\newblock {\em J. Stat. Phys.}, 103(5-6):857--891, 2001.

\bibitem{barata_griffiths_1997}
J.~C.~A. Barata and D.~H.~U. Marchetti.
\newblock Griffiths' singularities in diluted {Ising} models on the {Cayley}
  tree.
\newblock {\em J. Stat. Phys.}, 88(1-2):231--268, 1997.

\bibitem{Barvinok15}
A.~Barvinok.
\newblock Computing the partition function for cliques in a graph.
\newblock {\em Theory Comput.}, 11(13):339--355, 2015.

\bibitem{barvinok_computing_2015}
A.~Barvinok.
\newblock Computing the permanent of (some) complex matrices.
\newblock {\em Found. Comput. Math.}, 16(2):329--342, 2015.

\bibitem{barvinok16:_combin_compl_partit_funct}
A.~Barvinok.
\newblock {\em Combinatorics and Complexity of Partition Functions}.
\newblock Springer, 2016.

\bibitem{BarvinokSoberon16a}
A.~Barvinok and P.~Sober{\'{o}}n.
\newblock Computing the partition function for graph homomorphisms.
\newblock {\em Combinatorica}, 2016.
\newblock [Online].

\bibitem{BarvinokSoberon16b}
A.~Barvinok and P.~Sober{\'{o}}n.
\newblock Computing the partition function for graph homomorphisms with
  multiplicities.
\newblock {\em J. Combin. Theory, Ser. A}, 137:1--26, 2016.

\bibitem{Berger2005}
N.~Berger, C.~Kenyon, E.~Mossel, and Y.~Peres.
\newblock Glauber dynamics on trees and hyperbolic graphs.
\newblock {\em Probab. Theory Relat. Fields}, 131(3):311--340, 2005.

\bibitem{borcea2009lee1}
J.~Borcea and P.~Br{\"a}nd{\'e}n.
\newblock The {Lee}-{Yang} and {P{\'o}lya}-{Schur} programs. {I}. {Linear}
  operators preserving stability.
\newblock {\em Invent. Math.}, 177(3):541--569, 2009.

\bibitem{borcea2009lee2}
J.~Borcea and P.~Br{\"a}nd{\'e}n.
\newblock The {Lee}-{Yang} and {P{\'o}lya}-{Schur} programs. {II}. {Theory} of
  stable polynomials and applications.
\newblock {\em Comm. Pure Appl. Math.}, 62(12):1595--1631, 2009.

\bibitem{borcea2009negative}
J.~Borcea, P.~Br{\"a}nd{\'e}n, and T.~Liggett.
\newblock Negative dependence and the geometry of polynomials.
\newblock {\em J. AMS}, 22(2):521--567, 2009.

\bibitem{borgs_left_2013}
C.~Borgs, J.~Chayes, J.~Kahn, and L.~Lovász.
\newblock Left and {Right} {Convergence} of {Graphs} with {Bounded} {Degree}.
\newblock {\em Random Struct. Algorithms}, 42(1):1--28, 2013.

\bibitem{cai_graph_2010}
J.-Y. Cai, X.~Chen, and P.~Lu.
\newblock Graph homomorphisms with complex values: A dichotomy theorem.
\newblock In {\em Proc. ICALP}, volume 6198 of {\em Lecture Notes in Computer
  Science}, pages 275--286. Springer, 2010.

\bibitem{csikvari2016benjamini}
P.~Csikv{\'a}ri and P.~E. Frenkel.
\newblock Benjamini--{Schramm} continuity of root moments of graph polynomials.
\newblock {\em Eur. J. Combin.}, 52:302--320, 2016.

\bibitem{efthymiou2016convergence}
C.~Efthymiou, T.~P. Hayes, D.~{\v{S}}tefankovic, E.~Vigoda, and Y.~Yin.
\newblock Convergence of {MCMC} and loopy {BP} in the tree uniqueness region
  for the hard-core model.
\newblock In {\em Proc. 57th IEEE Symp. Found. Comp. Sci. (FOCS)}, pages
  704--713, 2016.

\bibitem{Vigoda-hard-core-11}
A.~Galanis, Q.~Ge, D.~{\v{S}}tefankovi{\v{c}}, E.~Vigoda, and L.~Yang.
\newblock Improved inapproximability results for counting independent sets in
  the hard-core model.
\newblock {\em Random Struct. Algorithms}, 45(1):78--110, 2014.

\bibitem{galanis_complexity_2016}
A.~Galanis and L.~A. Goldberg.
\newblock The complexity of approximately counting in 2-spin systems on
  $k$-uniform bounded-degree hypergraphs.
\newblock {\em Inf. Comput.}, 251:36--66, 2016.

\bibitem{golgrojerthu09}
L.~A. Goldberg, M.~Grohe, M.~Jerrum, and M.~Thurley.
\newblock A complexity dichotomy for partition functions with mixed signs.
\newblock {\em {SIAM} J. Comput.}, 39(7):3336--3402, 2010.

\bibitem{goldberg2003computational}
L.~A. Goldberg, M.~Jerrum, and M.~Paterson.
\newblock The computational complexity of two-state spin systems.
\newblock {\em Random Struct. Algorithms}, 23(2):133--154, 2003.

\bibitem{guo2017random}
H.~Guo and M.~Jerrum.
\newblock Random cluster dynamics for the {Ising} model is rapidly mixing.
\newblock In {\em Proc. 28th ACM-SIAM Symp. Discrete Algorithms (SODA)}, pages
  1818--1827, 2017.

\bibitem{guo2016uniqueness}
H.~Guo and P.~Lu.
\newblock Uniqueness, spatial mixing, and approximation for ferromagnetic
  2-spin systems.
\newblock In {\em Proc. APPROX/RANDOM}, 2016.

\bibitem{isi25}
E.~{Ising}.
\newblock {Beitrag zur Theorie des Ferromagnetismus}.
\newblock {\em Z. Phys.}, 31:253--258, 1925.

\bibitem{jerrum_approximating_1989}
M.~Jerrum and A.~Sinclair.
\newblock Approximating the permanent.
\newblock {\em {SIAM} J. Comput.}, 18(6):1149--1178, 1989.

\bibitem{jersin93}
M.~Jerrum and A.~Sinclair.
\newblock Polynomial-time approximation algorithms for the {Ising} model.
\newblock {\em SIAM J. Comput.}, 22(5):1087--1116, 1993.

\bibitem{jervalvaz86}
M.~Jerrum, L.~G. Valiant, and V.~V. Vazirani.
\newblock Random generation of combinatorial structures from a uniform
  distribution.
\newblock {\em Theor. Comput. Sci.}, 43:169--188, 1986.

\bibitem{leeyan52b}
T.~D. Lee and C.~N. Yang.
\newblock Statistical theory of equations of state and phase transitions. {II}.
  {Lattice} gas and {Ising} model.
\newblock {\em Phys. Rev.}, 87(3):410--419, 1952.

\bibitem{li_correlation_2011}
L.~Li, P.~Lu, and Y.~Yin.
\newblock Correlation decay up to uniqueness in spin systems.
\newblock In {\em Proc. 24th ACM-SIAM Symp. Discrete Algorithms (SODA)}, pages
  67--84, 2013.

\bibitem{long2011power}
Y.~Long, A.~Nachmias, W.~Ning, and Y.~Peres.
\newblock A power law of order 1/4 for critical mean-field {Swendsen}-{Wang}
  dynamics.
\newblock {\em Memo. AMS}, 232(1092), 2014.

\bibitem{lu16:_fptas_hardc}
P.~Lu, K.~Yang, and C.~Zhang.
\newblock {FPTAS} for hardcore and {Ising} models on hypergraphs.
\newblock In {\em Proc. 33rd Symp. Theoret. Aspects Comp. Sci. (STACS)}, pages
  51:1--51:14, 2016.

\bibitem{Lubetzky2012}
E.~Lubetzky and A.~Sly.
\newblock Critical {Ising} on the square lattice mixes in polynomial time.
\newblock {\em Comm. Math. Phys.}, 313(3):815--836, 2012.

\bibitem{luby_approximately_1997}
M.~Luby and E.~Vigoda.
\newblock Approximately counting up to four.
\newblock In {\em Proc. 29th ACM Symp. Theory. Comput.}, pages 682--687, 1997.

\bibitem{marcus_interlacing_2015}
A.~Marcus, D.~Spielman, and N.~Srivastava.
\newblock Interlacing families {I}: {Bipartite} {Ramanujan} graphs of all
  degrees.
\newblock {\em Ann. Math.}, pages 307--325, 2015.

\bibitem{marcus2015interlacing}
A.~W. Marcus, D.~A. Spielman, and N.~Srivastava.
\newblock Interlacing families {II}: {Mixed} characteristic polynomials and the
  {Kadison}-{Singer} problem.
\newblock {\em Ann. Math.}, 182:327--350, 2015.

\bibitem{martinelli1994approach1}
F.~Martinelli and E.~Olivieri.
\newblock Approach to equilibrium of {Glauber} dynamics in the one phase
  region. {I}. {The} attractive case.
\newblock {\em Comm. Math. Phys.}, 161(3):447--486, 1994.

\bibitem{martinelli1994approach2}
F.~Martinelli and E.~Olivieri.
\newblock Approach to equilibrium of {Glauber} dynamics in the one phase
  region: {II}. {The} general case.
\newblock {\em Comm. Math. Phys.}, 161:487--514, 1994.

\bibitem{Martinelli2004}
F.~Martinelli, A.~Sinclair, and D.~Weitz.
\newblock Glauber dynamics on trees: Boundary conditions and mixing time.
\newblock {\em Comm. Math. Phys.}, 250(2):301--334, 2004.

\bibitem{mossel2013exact}
E.~Mossel and A.~Sly.
\newblock Exact thresholds for {Ising}-{Gibbs} samplers on general graphs.
\newblock {\em Ann. Probab.}, 41(1):294--328, 2013.

\bibitem{patel_deterministic_2016}
V.~Patel and G.~Regts.
\newblock Deterministic polynomial-time approximation algorithms for partition
  functions and graph polynomials.
\newblock {\em SIAM J. Comput.}, 46(6):1893--1919, Dec. 2017.
\newblock \arxiv{1607.01167}.

\bibitem{randall_sampling_1999}
D.~Randall and D.~Wilson.
\newblock Sampling spin configurations of an {Ising} system.
\newblock In {\em Proc. 10th ACM-SIAM Symp. Discrete Algorithms (SODA)}, pages
  959--960, 1999.

\bibitem{ruelle_characterization_2010}
D.~Ruelle.
\newblock Characterization of {Lee}-{Yang} polynomials.
\newblock {\em Ann. Math.}, 171(1):589--603, 2010.

\bibitem{ScottSokal}
A.~Scott and A.~Sokal.
\newblock {The Repulsive Lattice Gas, the Independent-Set Polynomial, and the
  Lov\'{a}sz Local Lemma}.
\newblock {\em J. Stat. Phys.}, 118(5-6):1151--1261, 2004.

\bibitem{Shearer}
J.~B. Shearer.
\newblock On a problem of {S}pencer.
\newblock {\em Combinatorica}, 5(3), 1985.

\bibitem{sinclair12:_lee_yang}
A.~Sinclair and P.~Srivastava.
\newblock {Lee-Yang} theorems and the complexity of computing averages.
\newblock {\em Comm. Math. Phys.}, 329(3):827--858, 2014.

\bibitem{sinclair_approximation_2012}
A.~Sinclair, P.~Srivastava, and M.~Thurley.
\newblock Approximation algorithms for two-state anti-ferromagnetic spin
  systems on bounded degree graphs.
\newblock {\em J. Stat. Phys.}, 155(4):666--686, 2014.

\bibitem{SSSY15}
A.~Sinclair, P.~Srivastava, D.~\v{S}tefankovi\v{c}, and Y.~Yin.
\newblock Spatial mixing and the connective constant: Optimal bounds.
\newblock {\em Probab. Theory Relat. Fields}, pages 1--45, 2016.

\bibitem{sly12}
A.~Sly and N.~Sun.
\newblock Counting in two-spin models on $d$-regular graphs.
\newblock {\em Ann. Probab.}, 42(6):2383--2416, 2014.

\bibitem{song16:_count}
R.~Song, Y.~Yin, and J.~Zhao.
\newblock Counting hypergraph matchings up to uniqueness threshold.
\newblock In {\em Proc. APPROX/RANDOM}, pages 46:1--46:29, 2016.

\bibitem{stanley1999enumerative}
R.~Stanley and S.~Fomin.
\newblock {\em Enumerative Combinatorics}.
\newblock Cambridge University Press, 1999.

\bibitem{straszak_real_2016}
D.~Straszak and N.~K. Vishnoi.
\newblock Real stable polynomials and matroids: Optimization and counting.
\newblock In {\em Proc. 49th ACM Symp. Theory Comput. (STOC)}, pages 370--383,
  2017.
\newblock \arxiv{1611.04548}.

\bibitem{suzuki1971zeros}
M.~Suzuki and M.~E. Fisher.
\newblock Zeros of the partition function for the {Heisenberg},
  {Ferroelectric}, and general {Ising} models.
\newblock {\em J. Math. Phys.}, 12(2):235--246, 1971.

\bibitem{Weitz}
D.~Weitz.
\newblock Counting independent sets up to the tree threshold.
\newblock In {\em Proc. 38th ACM Symp. Theory Comput. (STOC)}, pages 140--149,
  2006.

\bibitem{leeyan52}
C.~N. Yang and T.~D. Lee.
\newblock Statistical theory of equations of state and phase transitions. {I.
  Theory} of condensation.
\newblock {\em Phys. Rev.}, 87(3):404--409, 1952.

\bibitem{zhaliabai09}
J.~Zhang, H.~Liang, and F.~Bai.
\newblock Approximating partition functions of the two-state spin system.
\newblock {\em Inf. Process. Lett.}, 111(14):702--710, 2011.

\end{thebibliography}
\end{document}